\numberwithin{equation}{section}
\theoremstyle{plain}
\newtheorem{theorem}{Theorem}[section]
\newtheorem{lemma}[theorem]{Lemma}
\newtheorem{proposition}[theorem]{Proposition}
\newtheorem{corollary}[theorem]{Corollary}
\theoremstyle{definition}
\newtheorem{definition}[theorem]{Definition}
\newtheorem{example}[theorem]{Example}
\newtheorem{remark}[theorem]{Remark}
\newtheorem{assumption}[theorem]{Assumption}
\newtheorem{problem}[theorem]{Problem}
\newcommand\Cb{\mathds{C}}
\newcommand\Eb{\mathds{E}}
\newcommand\Fb{\mathfrak{F}}
\newcommand\Ib{\mathds{1}}
\newcommand\Pb{\mathds{P}}
\newcommand\Rb{\mathds{R}}
\newcommand\Nb{\mathbb{N}}
\newcommand\Ac{\mathcal{A}}
\newcommand\Bc{\mathscr{B}}
\newcommand\Cc{\mathcal{C}}
\newcommand\Dc{\mathscr{D}}
\newcommand\Ec{\mathscr{E}}
\newcommand\Fc{\mathscr{F}}
\newcommand\Gc{\mathscr{G}}
\newcommand\Lc{\mathscr{L}}
\newcommand\Sc{\mathscr{S}}
\newcommand\Jc{\mathcal{J}}
\newcommand\Ic{\mathcal{I}}
\newcommand\Uc{\mathcal{U}}
\newcommand\Xc{\mathcal{X}}
\newcommand\eps{\varepsilon}
\newcommand\om{\omega}
\newcommand\Om{\Omega}
\newcommand\sig{\sigma}
\newcommand\Sig{\Sigma}
\newcommand\gam{\gamma}
\newcommand\del{\delta}
\newcommand\Bv{\textbf{B}}
\newcommand\Gv{\textbf{G}} 
\newcommand\Sv{\textbf{S}}
\newcommand\gv{\mathbf{g}}
\newcommand\Xv{\mathbf{X}}
\newcommand\sv{\mathbf{s}}
\newcommand\thetav{{\boldsymbol\theta}}
\newcommand\Thetav{{\boldsymbol\Theta}}
\newcommand\Lamv{{\boldsymbol\Lambda}}
\newcommand\lamv{{\boldsymbol\lambda}}
\newcommand\fh{\hat{f}}
\newcommand\muh{\hat{\mu}}
\newcommand\Act{\tilde{\Ac}}
\newcommand\ft{\tilde{f}}
\newcommand\gt{\tilde{g}}
\newcommand\It{\tilde{I}}
\newcommand\Jt{\tilde{J}}
\newcommand\mt{\tilde{m}}
\newcommand\Ut{\widetilde{U}}
\newcommand\nut{\widetilde{\nu}}
\newcommand\mut{\tilde{\mu}}
\newcommand{\ceil}[1]{\lceil{#1}\rceil}
\newcommand{\floor}[1]{\lfloor{#1}\rfloor}
\newcommand\ii{\mathtt{i}}
\newcommand\dd{\mathrm{d}}
\newcommand\ee{\mathrm{e}}
\newcommand{\as}{\text{a.s.}}
\newcommand{\CMIM}{{CMIM}}
\newcommand{\1}{{\mathds{1}}}
\newcommand{\diag}{\operatorname{diag}}
\newcommand{\supp}{\operatorname{supp}}
\newcommand{\conv}{\boldsymbol{*}}
\newcommand{\F}{\mathbb{F}}
\newcommand \al {\alpha}
\providecommand{\keywords}[1]{\noindent\textbf{\textit{Keywords: }} #1}
\begin{document}

\title{Predictable Forward Performance Processes\\ in Complete Markets}

\author{
Bahman Angoshtari
\thanks{Department of Mathematics, University of Miami, Coral Gables, FL 33146, USA \textbf{e-mail}: \url{bangoshtari@miami.edu}}
}

\date{This version: \today}

\maketitle

\begin{abstract}
We establish existence of Predictable Forward Performance Processes (PFPPs) in conditionally complete markets, which has been previously shown only in the binomial setting. Our market model can be a discrete-time or a continuous-time model, and the investment horizon can be finite or infinite. We show that the main step in construction of PFPPs is solving a one-period problem involving an integral equation, which is the counterpart of the functional equation found in the binomial case. Although this integral equation has been partially studied in the existing literature, we provide a new solution method using the Fourier transform for tempered distributions. We also provide closed-form solutions for PFPPs with inverse marginal functions that are completely monotonic and establish uniqueness of PFPPs within this class. We apply our results to two special cases. The first one is the binomial market and is included to relate our work to the existing literature. The second example considers a generalized Black-Scholes model which, to the best of our knowledge, is a new result.
\end{abstract}

\keywords{forward performance processes, predictable preference, complete market, integral equation, completely monotonic inverse marginal, deconvolution, Fourier transform.}

%
%

\section{Introduction}\label{sec:intro}

In the classical approach to portfolio choice, one assumes that a market model for the entire investment period is known and that the investor's risk preferences over the investment period are pre-specified exogenously to the market. Despite its mathematical foundations and theoretical appeal, this approach has several shortcomings. Its most unrealistic assumption is, perhaps, its pre-commitment to a market model (including specific values for the model parameters) for the \emph{entire} investment horizon. In reality, portfolio managers believe in models for time periods that are far shorter than their (perceived) investment horizons. This is why models are frequently calibrated during the investment horizon and not just once at the beginning. It is thus more realistic to think of the investment horizon as a sequence of shorter ``\emph{calibration}'' periods. At the beginning of each calibration period, a model is calibrated (using, say, historical data and/or expert opinion) and the portfolio manager has confidence in the calibrated model until the end of the calibration period. 

The objective of our paper is to develop an investment paradigm that can be applied to the following scenario. We assume that model calibration is performed at times $0, 1, 2, \dots$.\footnote{More generally, we could have assumed that model calibration occurs at random times $0=\tau_0<\tau_1<\tau_2<...$ such that $\tau_n$ is known at time $\tau_{n-1}$. To ease the notation, we have taken $\tau_n=n$.} In other words, we assume that the portfolio manager calibrate a market model at time $0$ and commit to it over the time period $[0,1]$. She will re-calibrate the model at time $1$ and commit to the new model over the period $[1,2]$, and so on. Note two things at the outset. Firstly, we have not chosen an investment horizon. The investment horizon can be  deterministic, stochastic, or infinite. Secondly, we have not chosen a trading frequency. Trading can be done in discrete-time and as frequent as model calibration (i.e. at time $0,1,2,\dots$); it can be done in discrete time but more frequently than model calibration (e.g. at times $0,1/N, 2/N,\dots$ for some positive integer $N$); or trading can be done in continuous time (i.e. at any $t\ge0$).

What separates the above setting from the classical approach is that we don't pre-commit to a single market model. In particular, we \emph{do not} model the calibration procedure. For instance, we may assume that, during period $[n-1,n]$, the stock price follows the Black-Scholes model with drift $\mu_n$ and volatility $\sig_n$, in which $\mu_n$ and $\sig_n$ are random variables that are known at time $n-1$. However, we do not model how $\mu_n$ and $\sig_n$ evolve through time. Note also that our setting is not as general as the one in the literature on model ambiguity. In particular we commit to a class of models (say, Black-Scholes or binomial), although we do not pre-commit (i.e. at time $0$) to specific values of the model parameters.

Our proposed investment paradigm is based on the idea of \textit{forward performance measurement}, which was proposed and extended in a series of papers by Musiela and Zariphopoulou, see \cite{MZ09, MZ10, MZ11}. The literature of forward performance measurement has since grown significantly and we refer to \cite{HeStrubZariphopoulou2021} for a recent account of related work. The main idea of the forward approach is that instead of fixing, as in the classical setting, an investment horizon, a market model and a terminal utility, one starts with an initial  performance measurement and updates it \textit{forward in time} as the market and other underlying stochastic factors evolve. The evolution of the forward process is dictated by a forward-in-time version of the dynamic programming principle and, thus, it ensures time-consistency across all different times.

In most forward performance measurement models, the investor's preference is updated continuously in time. We, however, seek an investment paradigm in which the investor preference is updated at discrete times $0,1,2,\dots$ (i.e. when the model is calibrated).

In particular, we will develop an investment framework according to the following forward-in-time iterative procedure. Initially, the portfolio manager's preference toward her initial wealth is (exogenously) given by a utility function $U_0$. At time $0$, a market model is calibrated for the time period $[0,1]$. Let the model be parameterized by some parameters $\thetav_1$ (say, the stock drift and volatility over the period $[0,1]$), so that the outcome of calibration is observing the value of $\thetav_1$. Still at time $0$, a utility function $U_1$ for wealth at time $1$ is chosen that is consistent with the utility function $U_0$. By being consistent, we mean that $U_1$ satisfies
\begin{align}\label{eq:InvMerton_1}
	U_0(x) = \sup_{X_1\in \Ac_1(x)} \Eb_{\thetav_1}[U_1(X_1)];\quad x>0,
\end{align}
in which $\Ac_1(x)$ is the set of all admissible wealth $X_1$ at time 1 starting with initial wealth $x$ at time 0, and $\Eb_{\thetav_1}[\cdot]$ is the expectation operator under the calibrated model. Note that \eqref{eq:InvMerton_1} is the inverse problem of the classical Merton problem, in that the value function $U_0$ is known while the terminal utility function $U_1$ is unknown. Note also that $U_1$ depends on $\thetav_1$ (i.e. the calibrated model) through the expectation operator $\Eb_{\thetav_1}[\cdot]$. In particular, $U_1$ is in the form $U_1(\cdot,\thetav_1)$. Having identified a market model and a terminal utility $U_1$, we may use the classical approach to invest optimally over the time period $[0,1]$.

We repeat this procedure for the second calibration period $[1,2]$. At time $1$, we know the utility $U_1(\cdot,\thetav_1)$, and re-calibrate the model to obtain $\thetav_2$ (say, the stock drift and volatility over the period $[1,2]$). Still at time 1, we choose a utility function $U_2$ for wealth at time 2 that is consistent with $U_1(\cdot,\thetav_1)$. In other words, $U_2$ solves
\begin{align}\label{eq:InvMerton_2}
	U_1(x,\thetav_1) = \sup_{X_2\in \Ac_2(x)} \Eb_{\thetav_2}[U_2(X_2)];\quad x>0,
\end{align}
in which $\Ac_2(x)$ is the set of all admissible wealth $X_2$ at time 2 starting with initial wealth $x$ at time 1, and $\Eb_{\thetav_2}[\cdot]$ is the expectation operator under the re-calibrated model. Now, $U_2$ takes the form $U_2\big(\cdot,(\thetav_1,\thetav_2)\big)$, since it depends on $\thetav_1$ because of $U_1$, and on $\thetav_2$ because of $\Eb_{\thetav_2}[\cdot]$. With a market model and a terminal utility $U_2$ at hand, we may use the classical approach to invest optimally over the time period $[1,2]$. We can continue this procedure indefinitely.

Motivated by the above procedure, \cite{AZZ20} proposed a new forward performance measurement model, called \emph{Predictable Forward Performance Process (henceforth, PFPP)}, in which the investor's preferences are endogenous and predictable with regards to an underlying market information set and, furthermore, are updated at discrete times. Although they provided a rather general definition, their analysis and results only applied to a binomial model in which trading occurs as frequently as performance updates (i.e. the market setting of Example \ref{ex:Binom} below with $N=1$). In this setting, they found that the key step in the construction of PFPPs is to solve a one-period {\it inverse} Merton problem, namely,
\begin{align}\label{eq:InvMerton_intro}
	U_0(x)=\underset{X\in \Ac_n(x)}{\sup}\Eb[U_1(X)],\quad x>0,
\end{align}
in which $\Ac_n(x)$ denotes the set of admissible wealth at $n$ starting with wealth $x$ at $n-1$, $U_0$ is known, and $U_1$ is to be found. They showed that, in their binomial setting, \eqref{eq:InvMerton_intro} reduces to the linear functional equation
\begin{align}\label{eq:FnEQ_AZZ}
	I_{1}(ay)+bI_{1}(y)=(1+b)\,I_{0}(c\,y);\quad y>0,
\end{align}
in which $a,b,c>0$ are known constants (determined by the binomial parameters), $I_0$ is a given inverse marginal function (see \eqref{eq:Ic} below), and $I_1$ is an unknown inverse marginal function to be determined. They established conditions for the existence and uniqueness for the solution of \eqref{eq:FnEQ_AZZ}. \cite{LiangStrubWang2021} extended the model by providing existence of PFPPs in a binomial market in which trading is more frequent than performance evaluation (i.e. the market setting of Example \ref{ex:Binom} below with $N\ge2$). They applied their result to find the optimal policy taken by a robo-advisor. \cite{StrubZhou2021} considered a complete semi-martingale market model, but, focused mainly on the one-period inverse Merton problem \ref{eq:InvMerton_intro}. They showed that the counterpart of \eqref{eq:FnEQ_AZZ} is the following integral equation,
\begin{align}\label{eq:IntEq_intro}
	\int_{\Rb_+} \rho I_1(y\rho)\dd \nu(\rho) = I_0(y);\quad y>0,
\end{align}
in which $\nu$ is a probability measure on $\Rb_+$, $I_0$ is a given inverse marginal function, and $I_1$ is an unknown inverse marginal. They pay special attention to the behavior of  Arrow–Pratt measure of risk-tolerance of the pair $(U_0,U_1)$ in \eqref{eq:InvMerton_intro}, and characterized the
class of solutions with time invariant risk-tolerance. They also provided explicit solution for $\eqref{eq:InvMerton_intro}$ assuming CRRA and SAHARA utility function. To the best of our knowledge, there are no further work on PFPPs.

In this paper, we consider PFPPs in a general complete market. Our contribution to the existing literature is threefold. Firstly, in Theorem \ref{thm:verif}, we provide a set of conditions for existence of PFPPs in complete markets. To the best of our knowledge, there has not been such a result beyond those provided by \cite{AZZ20} and \cite{LiangStrubWang2021} in the binomial model. By existence of PFPPs, we mean conditions for existence of a PFPP in a setting with multiple evaluation periods. Note that \cite{StrubZhou2021} mainly considered the single period inverse Merton problem \eqref{eq:InvMerton_intro} and, as they explicitly mentioned on two occasions, did not provide any multi-period existence condition for PFPPs in their general model.\footnote{In particular, the second paragraph on page 333 of \cite{StrubZhou2021} mentions that ``Showing the existence of a discrete-time predictable forward process in the general setting and constructing such processes by sequentially solving the associated generalised integral equations and showing that their solutions are predictable all remain challenging open problems not addressed in this paper.'' Furthermore, on the last paragraph of page 336 therein, it is mentioned that ``we neither derive results on existence and uniqueness of solutions to (2.2), nor do we provide conditions for the required measurability of the solution in case it exists.''} Our existence conditions reduces construction of PFPPs into an iterative procedure whose main step is a single period problem, namely, Problem \ref{prob:IntEq} below. This one-period problem is the counterpart of the inverse Merton problem \eqref{eq:InvMerton_intro} in our setup.

As our second contribution, in Proposition \ref{prop:ExistenceUniquness}, we provide a general method for solving the integral equation \eqref{eq:IntEq} that appears in Problem \ref{prob:IntEq}. This method first turns the integral equation into a deconvolution problem (namely, \eqref{eq:Deconvolution} below) and then applies Fourier analysis to solve it. To the best of our knowledge, we are the first to use the Fourier transform for constructing PFPPs. Our arguments, however, are in the same spirit as those used by \cite{Kallblad2020}, who applied Weierstrass transform for solving the single-period inverse Merton problem \eqref{eq:InvMerton_intro} in the Black-Scholes market model. The assumptions and arguments used for Proposition \ref{prop:ExistenceUniquness} are rather technical and rely on the theory of Fourier transform for tempered distributions.

Our third contributions is Theorem \ref{thm:CMIM} which provides a closed-form solution for Problem \ref{prob:IntEq} assuming that the initial inverse marginal function is completely monotonic. The theorem also establishes the uniqueness of the solution within the class of \emph{completely monotonic inverse marginal (CMIM)} functions. See \cite{Kallblad2020} and \cite{MostovyiSirbuZariphopoulou2020} for further discussion on CMIM functions. To the best of our knowledge, we are the first to consider CMIMs in constructing PFPPs. We should mention, however, that \cite{AZZ20} provided closed-form solution for CRRA utilities while \cite{StrubZhou2021} provided closed-form solution for SAHARA utility, which are special cases of our result.

Finally, by combining our first and third results, we provide an explicit investment procedure using the framework provided by PFPPs. See Theorem \ref{thm:PFPP_CMIM} and Algorithm \ref{alg:PFPP_CMIM}.

We have included two examples to illustrate our results. The first one considers the binomial market and is included to relate our work to the existing literature. In the second example, we construct PFPPs in a generalized Black-Scholes market which, to the best of our knowledge, is a new result.

The rest of the paper is organized as follows. Subsection \ref{sec:notations} includes frequently used notations. 
In Section \ref{sec:Market}, we set up the market model and state our main standing assumption, namely, Assumption \ref{asum:rho}. In Section \ref{sec:PFPP}, we define PFPPs (see Definition \ref{def:PFPP}) and provide conditions for their existence in Theorem \ref{thm:verif}. We also discuss the iterative construction of PFPPs (see Subsection \ref{sec:PFPP_construction}). 
In Section \ref{sec:ConvEq}, we first show that the main step of the construction procedure is solving a single-period problem, namely, Problem \ref{prob:IntEq}. Then, in Subsection \ref{sec:deconvolution}, we solve the integral equation in Problem \ref{prob:IntEq} by applying the Fourier transform. Subsection \ref{sec:CMIM} considers Problem \ref{prob:IntEq} in its entirety and establish existence and uniqueness of is solution within the class of CMIM functions. We also provide an explicit construction for PFPP with inverse marginals that are completely monotonic, see Theorem \ref{thm:PFPP_CMIM} and Algorithm \ref{alg:PFPP_CMIM}. In Section \ref{sec:Examples}, we apply our results to two special cases, namely, the binomial model and the Black-Scholes model. Longer proofs are included in the appendices as well as an excerpt from the theory of the Fourier analysis for tempered distributions.

%
%

\subsection{Notations}\label{sec:notations}
For ease of reference, this subsection provides our frequently used notations.
$\Nb:=\{1,2,\dots\}$ is the set of natural numbers and $\Nb_0:=\{0,1,\dots\}$ is the set of non-negative integers. $\Rb$ is the set of real numbers, and $\Rb_+:=(0,+\infty)$ is the set of positive real numbers. For $t\in\Rb$, $\floor{t}$ is the largest integer that is not larger than $t$ and $\ceil{t}$ is the smallest integer that is not smaller than $t$. For vectors $\thetav_1,\dots,\thetav_n$, we define $(\thetav_1,\dots,\thetav_{n-1})\oplus\thetav_n := (\thetav_1,\dots,\thetav_n)$.

For $\Xc\subseteq \Rb^n$, $\Bc(\Xc)$ denotes the $\sig$-algebra of all the Borel subset of $\Xc$. The support of an $\Rb^n$-valued random variable $\Xv$ (respectively, a measure $\mu$ on $\Rb^n$) is denoted by $\supp(\Xv)$ (respectively, $\supp(\mu)$).

For an open set $D\subseteq\Rb$, $\Cc^n(D)$ denotes the space of all continuously $n$-times differentiable real-valued functions with domain $D$, while $\Cc^{\infty}(D)$ denotes the set of all complex-valued infinitely-differentiable functions with domain $D$. $L^1_\text{loc}$ denotes the set of real-valued functions on $\Rb$ that are integrable on compact subsets of $\Rb$.


%
%
\section{Market setting}\label{sec:Market}

The market consists of a riskless asset and $K\ge 1$ risky assets. We take the riskless asset as the numeraire and denote the discounted prices of the risky asset by the stochastic process $\big(\Sv_t=(S_{t,1},\dots,S_{t,K})\big)_{t\ge0}$.
We will later (see Assumption \ref{asum:rho} below) assume that there exist random variables $\Thetav_n$, $n\in\Nb$, with $\Xi_n:=\supp(\Thetav_n)\subseteq\Rb^{M_n}$ for some $M_n\ge1$. We think of $\Thetav_n$ as the vector of all model parameters for the time period $[n-1,n]$ which are to be learned at the beginning of the period, i.e. at time $n-1$. 
Here, we have fixed a filtered probability space $\big(\Om, \Fc, \Pb, \Fb=(\Fc_t)_{t\ge0}\big)$, in which $\Fb$ is the minimal filtration satisfying the usual conditions such that, for all $t\ge0$, $(\Sv_s)_{s\in[0,t]}$ and $\{\Thetav_n\}_{n=1}^{\ceil{t}}$ are $\Fc_t$-measurable. Henceforth, unless stated otherwise, all stochastic processes are assumed to be $\Fb$-adapted, a.s. stands for $\Pb$-almost surely, and we refer to $(\Fb,\Pb)$-martingales simply as martingales.

The following assumption holds throughout the paper.


\begin{assumption}\label{asum:SPD}
	There exists a unique positive martingale $Z=\big(Z_t\big)_{t\ge0}$ such that: 1) $\Eb[Z_t]=1$ for all $t\ge0$; and 2) $\big(Z_t S_{t,k}\big)_{t\ge0}$ is a martingale for all $k\in\{1,\dots,K\}$. We refer to $Z$ as the \emph{state price density} process.\qed
\end{assumption}

\begin{remark}
	Note that we have \emph{not} assumed the filtration $\Fb$ to be generated by the price process $(\Sv_t)_{t\ge0}$. Thus, the uniqueness of \emph{state price density} does not imply that our market model is complete. For instance, in Examples \ref{ex:Binom} and \ref{ex:BS} below, one cannot replicate, say, the call option $(S_2-K)_+$ over a the time period $[0,2]$. Such a payoff, however, can be replicated over the time period $[1,2)$. In particular, Assumption \ref{asum:SPD} implies that the market is complete over the calibration period $[n-1,n)$ for each $n\in\Nb$, since $(\Fc_s)_{s\in(n-1,n)}$ is only driven by the asset prices.
	
	In light of the above discussion, it is incorrect to refer to our market model as complete. However, the term complete market has already been used in the existing work \cite{AZZ20,LiangStrubWang2021,StrubZhou2021}), with the understanding that completeness means that the market model is complete assuming full knowledge of the model parameters. For this reason, we decided to also refer to our model as a complete model, and reserve the term incomplete to the case in which the model is incomplete even with full knowledge of the model parameters.   \qed
\end{remark}

Our model includes finite-horizon models, despite it being formulated in an infinite-horizon setting.
For $T>0$, let $\big(\Sv_t=(S_{t,1},\dots,S_{t,K})\big)_{t\in[0,T]}$ be the discounted prices in a filtered probability space $\big(\Om, \Fc, \Pb, \Fb=(\Fc_t)_{t\in[0,T]}\big)$, in which $\Fb$ satisfies the usual condition. Assume further that there exists a unique $\Fc_T$-measurable positive random variable $Z_T$ satisfying 1) $\Eb[Z_T]=1$, and 2) $(Z_t\Sv_t)_{t\in[0,T]}$ is a martingale in which $Z_t:=\Eb[Z_T|\Fc_t]$. This model is a special case of our market setting in which $\Fc_t:=\Fc_T$, $\Sv_t:=\Sv_T$, and $Z_t:=Z_T$ for $t>T$.

Our model also embeds discrete-time models, despite being formulated in a continuous-time  setting.
Let $0=t_0<t_1<\dots$ be a given sequence (of times) and $\big\{\Sv_{t_n}=(S_{t_n,1},\dots,S_{t_n,K})\big\}_{n=0}^{+\infty}$ be the discounted prices in a filtered probability space $\big(\Om, \Fc, \Pb, \Fb=\{\Fc_{t_n}\}_{n=0}^{+\infty}\big)$. Assume further that there exists a unique $\Fb$-adapted positive martingale $\{Z_{t_n}\}_{n=0}^{+\infty}$ satisfying 1) $\Eb[Z_{t_n}]=1$, and 2) $\{Z_{t_n}\Sv_{t_n})_{t\in[0,T]}$ is a martingale. This model is embedded in our market setting by defining $\Fc_t:=\Fc_{t_n}$, $\Sv_t:=\Sv_{t_n}$, and $Z_t:=Z_{t_n}$ for $t_n\le t<t_{n+1}$.

To illustrate our results and applicability of our assumptions, we use two benchmark examples. The first Example is the Binomial model proposed by \cite{AZZ20} (who assumed one trading step in each performance evaluation period) and later generalized by \cite{LiangStrubWang2021} (who assumed multiple trading steps in each performance evaluation period).

\begin{example}[The generalized Binomial model of \cite{AZZ20} and \cite{LiangStrubWang2021}]\label{ex:Binom}
Consider a discrete-time model with $K=1$ and $t_n:= n/N$ for some constant $N\in\Nb$.
Let $\{u_n\}_{n\in\Nb}$, $\{d_n\}_{n\in\Nb}$, $\{p_n\}_{n\in\Nb}$, and $\{B_n\}_{n\in\Nb}$ be sequences of random variables in a probability space $(\Omega,\Fc,\Pb)$ such that, for all $n\in\Nb$, $B_n\in\{0,1\}$ (i.e. it is a Bernoulli random variable), $d_n,p_n\in(0,1)$, and $u_n>1$ a.s..

Define the filtration $\Fb=\{\Fc_{n/N}\}_{n\in\Nb_0}$ such that $\Fc_{n/N}$ is the augmented $\sigma$-field generated by $\{B_i\}_{i=1}^n$ and $\{(u_j,d_j,p_j)\}_{j=1}^{N\ceil{(n+1)/N}}$. That is, $\Fc_0$ is generated by $\{(u_j,d_j,p_j)\}_{j=1}^{N}$; $\Fc_{1/N}$ is generated by $B_1$ and $\{(u_j,d_j,p_j)\}_{j=1}^{N}$; ... ; $\Fc_{N-1/N}$ is generated by $\{B_n\}_{n=1}^{N-1}$ and $\{(u_j,d_j,p_j)\}_{j=1}^{N}$; $\Fc_1$ is generated by $\{B_n\}_{n=1}^N$ and $\{(u_j,d_j,p_j)\}_{j=1}^{2N}$; and so forth. Note that $B_n$ is $\Fc_{n/N}$-measurable, while $(u_j,d_j,p_j)$ are $\Fc_{\floor{j/N}}$-measurable.\footnote{$\floor{t}$ is the floor function, that is, the largest integer that is not larger than $t$.} In other words, $B_n$, $n\in\Nb$, is revealed ``one-at-a-time'' and at time $n/N$, the end of the $n$-th ``trading period'' $[\frac{n-1}{N},\frac{n}{N})$. The binomial parameters $\{(u_j,d_j,p_j)\}_{j=kN+1}^{(k+1)N}$ are revealed ``N-at-a-time'' and at time $k\in\Nb_0$, the start of the $k$-th ``evaluation period'' $[k,k+1)$. 

Note that the parameters of the binomial model $\{(u_j,d_j,p_j)\}_{j\in\Nb}$ are random variables and change through time, and that the filtration $\Fb$ is such that the binomial parameters for the time period $[k,k+1]$ are know at time $k\in\Nb_0$. In other words, we have assumed that the model is calibrated at each time $k\in\Nb_0$ and the estimated parameters $\{(u_j,d_j,p_j)\}_{j=kN+1}^{(k+1)N}$ are believed to be correct during the time period $[k,k+1]$.

Assume that the prices $\{S_{n/N}\}_{n\in\Nb_0}$ are given recursively by $S_{n/N} = S_{(n-1)/N}\big(u_n B_n + d_n (1-B_n)\big)$, with $S_0=1$. 	
Assume further that $\Eb[B_n|\Fc_{n-1}] = p_n$ a.s., which implies that $p_n$ is the conditional probability of an upward jump during the $n$-th trading period since
\begin{align}\label{eq:Binom_p}
\Pb\left(S_{\frac{n}{N}}>S_{\frac{n-1}{N}}\middle|\Fc_{n-1}\right) = \Pb(B_n=1|\Fc_{n-1})=\Eb[B_n|\Fc_{n-1}] = p_n;\quad n\in\Nb.
\end{align}	
For this model, the state-price-density process $\{Z_{n/N}\}_{n\in\Nb_0}$ is given by
\begin{align}
Z_{\frac{n}{N}} = Z_{\frac{n-1}{N}}\left(\frac{q_n}{p_n} B_n + \frac{1-q_n}{1-p_n}(1-B_n)\right);\quad n\in\Nb,
\end{align}
with $Z_0=1$, in which $q_n:=(1-d_n)/(u_n-d_n)$. As we argued before, this discrete-time model is a special case of our model by setting $\Fc_t:=\Fc_{t_n}$, $\Sv_t:=\Sv_{t_n}$, and $Z_t:=Z_{t_n}$ for $t_n\le t<t_{n+1}$. \qed
\end{example}\vspace{1ex}

Our second benchmark example is an extension of the Black-Scholes model in which the parameters (i.e. the drift and diffusion coefficients) are random variables that are learned through time. We assume that the market model is calibrated at discrete times $n\in\Nb_0$ and the (estimated) parameters are believed to be correct for the time period $[n,n+1]$. This model is a continuous-time model and, to the best of our knowledge, existence of PFPPs has not been established in any continuous-time model.\footnote{Here, we are referring to existence of a PFPP in a multi-period setting. For the single-period setting, that is, finite-horizon problem in which the value function is given and the terminal utility function is unknown, there are existing results such as \cite{Kallblad2020}  for the Black-Scholes model and \cite{StrubZhou2021} for a complete semi-martingale setting.} 

\begin{example}[A generalized Black-Scholes market]\label{ex:BS}
Let $\Bv=(\Bv_t)_{t\ge0}$ be a $K$-dimensional standard Brownian motion (as before, $K\in\Nb$ is the number of risky assets), $\{\Lamv_n\}_{n\in\Nb}$ be a sequence of $\Rb^K$-valued random variables, and $\Sig=\{\Sig_n\}_{n\in\Nb}$ be a sequence of $K\times K$ non-singular random matrices in a filtered probability space $\big(\Om, \Fc, \Pb, \Fb=(\Fc_t)_{t\ge0}\big)$. We assume that $(\Bv_t-\Bv_n)_{t\ge n}$ is independent of $\{(\Lamv_m,\Sig_m)\}_{m=1}^{n+1}$ for all $n\in\Nb_0$, and that $\Fc_t$ is the augmented $\sig$-field generated by $(\Bv_s)_{0\le s\le t}$, $\{\Lamv_n\}_{1\le n\le \ceil{t}}$, and $\{\Sig_n\}_{1\le n\le \ceil{t}}$. In particular, $\Bv_t$, $t\ge0$, is $\Fc_t$-measurable while $(\Lamv_n,\Sig_n)$, $n\in\Nb$, is $\Fc_{n-1}$-measurable.
Let $\Sv=\big(\Sv_t=(S_{t,1},\dots,S_{t,K})\big)_{t\ge0}$ be the strong solution of
\begin{align}\label{eq:S_BS}
\dd \Sv_t = \diag(\Sv_t)\Sig_n(\Lamv_n \dd t + \dd \Bv_t);\quad n-1\le t<n, n\in\Nb,
\end{align}
with $\Sv_0=\sv>0$. For this model, the state price density process is given by
\begin{align}
Z_t = Z_{n-1} \exp\left(-\frac{1}{2}\|\Lamv_n\|^2(t-n+1) -\Lamv_n^\top(\Bv_t-\Bv_{n-1})\right);
\quad n-1\le t<n, n\in\Nb,
\end{align}
and with $Z_0=1$.
\qed
\end{example}\vspace{1ex}

Next, we introduce the set of admissible wealth processes. Throughout the paper, we abstract away the investment policy (i.e. the portfolio weights of the risky assets through time), as it is implied by the standard replication argument for complete markets. 
\begin{definition}\label{def:admiss}
A process $X=(X_t)_{t\ge0}$ is a \emph{wealth processes} if $(Z_t X_t)_{t\ge0}$ is a martingale and $X_t\ge0$ a.s. for all $t\ge0$. We denote the set of all admissible wealth process by $\Act$.\qed
\end{definition}

As it will be clear from Definition \ref{def:PFPP} below, discrete forward performance processes rely only on the observed values of wealth processes at discrete times $t\in\Nb_0$, rather than the whole path of wealth processes. This motivates the definition of \emph{discretely-observed} wealth processes.
Let us introduce the set
\begin{align}\label{eq:Act}
\Ac := \Big\{\{X_n\}_{n\in\Nb_0}: X_0\text{ is $\Fc_0$-measurable},~ X_0\ge0~\text{a.s.},~ X_n\in\Ac_n(X_{n-1})\text{ for } n\in\Nb\Big\},
\end{align}
in which we have defined the sets
\begin{align}\label{eq:Ac_n}
\Ac_n(\xi) := \Big\{X: X\text{ is $\Fc_n$-measurable},~ X\ge0~\text{a.s.},~ \Eb\left[X Z_n\middle|\Fc_{n-1}\right]=\xi Z_{n-1}\Big\},
\end{align}
for any $n\in\Nb$ and any $\Fc_{n-1}$-measurable non-negative random variable $\xi$. We interpret $\Ac$ as the set of admissible wealth processes observed at discrete times $n\in\Nb_0$.

As the following lemma shows, there is a one-to-one correspondence between the set of continuously observed admissible wealth processes $\Act$ and the set of discretely observed admissible wealth processes $\Ac$. Indeed, since the market is complete, we can recover $(X_t)_{t\in[n-1,n)}$ from $X_n$ by the relationship $X_t = \Eb[X_n Z_n/Z_t|\Fc_t]$. In light of this fact, we do not distinguish between a wealth process $(X_t)_{t\ge0}$ and its discretely observed counterpart $\{X_n\}_{n\in\Nb_0}$.
\begin{lemma}\label{lem:DisceretWealth}
If $(X_t)_{t\ge0}\in\Act$, then $\{X_n\}_{n\in\Nb_0}\in\Ac$. Conversely, let $\{X_n\}_{n\in\Nb_0}\in\Ac$ and define $X_t:=\Eb\left[X_{\ceil{t}}\frac{Z_{\ceil{t}}}{Z_t}\middle|\Fc_t\right]$.\footnote{$\ceil{t}$ is the ceiling function (i.e. the smallest integer that is not smaller than $t$).} Then, $\left(X_t\right)_{t\ge0}\in\Act$.\qed
\end{lemma}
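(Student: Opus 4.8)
The plan is to verify the two defining properties of the respective admissibility sets in each direction. For the forward direction, suppose $(X_t)_{t\ge0}\in\Act$, so that $(Z_tX_t)_{t\ge0}$ is a martingale and $X_t\ge0$ a.s. for all $t$. Restricting to integer times, $X_0$ is $\Fc_0$-measurable and non-negative, and for each $n\in\Nb$, $X_n$ is $\Fc_n$-measurable and non-negative; the martingale property of $(Z_tX_t)_{t\ge0}$ between times $n-1$ and $n$ gives $\Eb[X_nZ_n\mid\Fc_{n-1}]=X_{n-1}Z_{n-1}$, which is exactly the requirement $X_n\in\Ac_n(X_{n-1})$ since $X_{n-1}$ is $\Fc_{n-1}$-measurable and non-negative. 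Hence $\{X_n\}_{n\in\Nb_0}\in\Ac$.

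For the converse, let $\{X_n\}_{n\in\Nb_0}\in\Ac$ and set $X_t:=\Eb[X_{\ceil t}Z_{\ceil t}/Z_t\mid\Fc_t]$. First I would check non-negativity, which is immediate since $X_{\ceil t}\ge0$, $Z$ is a positive process, and conditional expectation preserves non-negativity; note also that for integer $t=n$ this definition recovers $X_n$ since $\ceil n=n$, so the notation is consistent. The substantive point is to show $(Z_tX_t)_{t\ge0}$ is a martingale. Here $Z_tX_t=\Eb[X_{\ceil t}Z_{\ceil t}\mid\Fc_t]$, so I need to check that $t\mapsto \Eb[X_{\ceil t}Z_{\ceil t}\mid\Fc_t]$ is a martingale. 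The argument splits into two cases according to whether $s$ and $t$ lie in the same integer interval. If $n-1\le s\le t<n$, then $\ceil s=\ceil t=n$ and by the tower property $\Eb[Z_tX_t\mid\Fc_s]=\Eb[X_nZ_n\mid\Fc_s]=Z_sX_s$. If $s<t$ straddle one or more integers, it suffices by iteration to handle the case $n-1\le s<n\le t<n+1$: then $\Eb[Z_tX_t\mid\Fc_s]=\Eb[X_{n+1}Z_{n+1}\mid\Fc_s]=\Eb\big[\Eb[X_{n+1}Z_{n+1}\mid\Fc_n]\mid\Fc_s\big]=\Eb[X_nZ_n\mid\Fc_s]=Z_sX_s$, where the third equality uses $X_{n+1}\in\Ac_{n+1}(X_n)$. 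Chaining these two cases over the integers between $s$ and $t$ gives the martingale property in general, so $(X_t)_{t\ge0}\in\Act$.

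I should also remark that the integrability needed for these conditional expectations to be well-defined as (true) martingales is built into the definitions: membership of $\Act$ requires $(Z_tX_t)_{t\ge0}$ to be a martingale (hence integrable), and membership of $\Ac_n(X_{n-1})$ via \eqref{eq:Ac_n} presupposes the conditional expectation $\Eb[XZ_n\mid\Fc_{n-1}]$ is defined, so $X_nZ_n$ is integrable for each $n$, which is what is used above.

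I do not anticipate a genuine obstacle here; this is a bookkeeping lemma. The only mild subtlety is making the straddling-interval induction precise — i.e. reducing the general two-time martingale identity to the "one integer boundary crossed" case and then iterating — but this is routine once the single-boundary case is established using $X_{n+1}\in\Ac_{n+1}(X_n)$.
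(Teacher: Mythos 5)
Your proposal is correct and follows essentially the same route as the paper's proof: forward direction by restricting the martingale $(Z_tX_t)$ to integer times, converse by verifying non-negativity and establishing the martingale property via the same two cases (both times within one evaluation period, then a single integer boundary crossed using $X_n\in\Ac_n(X_{n-1})$) followed by iteration. Your added remark on integrability is a harmless refinement the paper leaves implicit.
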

\begin{proof}
The first statement directly follows from the fact that, if $X\in\Act$, then $\{X_nZ_n\}_{n=0}^{\infty}$ is a non-negative martingale and, therefore, $X_n\in\Ac_n(X_{n-1})$ for all $n\in\Nb$. To show the converse statement, let $\{X_n\}_{n=0}^{+\infty}\in\Ac$ and define $X=(X_t)_{t\ge0}$ as in the statement of the lemma. For $t\ge0$, we have that $X_t=\Eb\left[X_{\ceil{t}}Z_{\ceil{t}}/Z_t|\Fc_t\right]\ge0$ a.s. since $X_{\ceil{t}}\ge 0$ a.s.  by \eqref{eq:Act}. It only remains to show that $(X_tZ_t)_{t\ge0}$ is a martingale. If $n-1\le s<t\le n$ for some $n\in\Nb$, then $\Eb[X_tZ_t|\Fc_s]=\Eb\big[\Eb[X_nZ_n|\Fc_t]\big|\Fc_s\big]=X_sZ_s$. If $n-1\le s< n < t\le n+1$ for some $n\in\Nb$, then
$\Eb[X_tZ_t|\Fc_s]=\Eb\big[\Eb[X_tZ_t|\Fc_n]\big|\Fc_s\big]=\Eb\big[X_nZ_n\big|\Fc_s\big]=X_sZ_s$. Using induction, it then follows that $\Eb[X_t|\Fc_s]=X_s$ for all $t>s\ge0$. So, $X\in\Act$.
\end{proof}

The following assumption plays a central rule in our definition of PFPPs as well as the arguments and proofs in later sections. In short, it requires the existence of a sequence of random vectors $\{\Thetav_n\}_{n\in\Nb}$ such that the $\Fc_n$-measurable random variable $\rho_n:=Z_n/Z_{n-1}$ (i.e. the so-called \emph{pricing kernel} for time period $[n-1,n]$) is conditionally independent of $\Fc_{n-1}$ given $(\Thetav_1,\dots,\Thetav_n)$. As mentioned at the beginning of this section, we think of $\Thetav_n$ as the vector of all model parameters for the time period $[n-1,n]$. Therefore, it is reasonable to assume that $\Thetav_n$ is $\Fc_{n-1}$-measurable.

\begin{assumption}\label{asum:rho}
Let $(Z_t)_{t\ge0}$ be as in Assumption \ref{asum:SPD} and define $\rho_n :=Z_n/Z_{n-1}$ for $n\in\Nb$. There exist $\Fc_{n-1}$-measurable random variables $\Thetav_n$, $n\in\Nb$, with $\Xi_n:=\supp(\Thetav_n)\subseteq\Rb^{M_n}$ for some $M_n\ge1$, such that
\begin{align}\label{eq:CondInd}
\Pb(\rho_n\in B|\Fc_{n-1}):=\Eb\left[\Ib_{\{\rho_n\in B\}}\middle| \Fc_{n-1}\right]=
\Eb\left[\Ib_{\{\rho_n\in B\}}\middle| \Thetav_1,\dots,\Thetav_n\right],\quad \as, n\in\Nb, B\in\Bc(\Rb_+).
\end{align}
For ease of notation, we define $\Gv_n := (\Thetav_1,\dots,\Thetav_n)$, $n\in\Nb$, and denote $\Gc_n:=\supp(\Gv_n)\subseteq \Xi_1\times\dots\times \Xi_n$. With a slight abuse of notation, we take the convention that $\Gv_1=\Thetav_1$ and $\Gc_1=\Xi_1$. Thus, \eqref{eq:CondInd} becomes $\Pb(\rho_n\in B|\Fc_{n-1})=\Pb(\rho_n\in B|\Gv_n)$, for all $(n,B)\in\Nb\times\Bc(\Rb_+)$. 
\qed
\end{assumption}

As the following remark indicates, Assumption \ref{asum:rho} is satisfied in our two benchmark models, namely, the generalized binomial model of Example \ref{ex:Binom} and the generalized Black-Scholes market of Example \ref{ex:BS}. Note that in both cases, $\Thetav_n$ is the vector of all model parameters for the time period $[n-1,n]$.

\begin{remark}\label{rem:Binom_BS_rho}
In Example \ref{ex:Binom}, we have that
\begin{align}\label{eq:Rho_binom}
\rho_n := \frac{Z_n}{Z_{n-1}}= \prod_{m=1+(n-1)N}^{nN}\left(\frac{q_m}{p_m} B_m + \frac{1-q_m}{1-p_m}(1-B_m)\right);\quad n\in\Nb.
\end{align}
For $n\in\Nb$, let $\Thetav_n=\{(u_m,d_m,p_m)\}_{m=1+(n-1)N}^{nN}$ and note that $\Thetav_n$ is $\Fc_{n-1}$-measurable. Then, \eqref{eq:Binom_p} yields that
$\Pb(B_m=1|\Fc_{n-1})=\Eb[B_m|\Fc_{n-1}] = p_m
= \Eb[p_m|\Thetav_n]=\Eb\big[\Eb[B_m|\Fc_{n-1}]\big|\Thetav_n\big] = \Eb[B_m |\Thetav_n]=\Pb(B_m=1|\Thetav_n)$, 
for $n\in\Nb$ and $m\in\{1+(n-1)N,\dots, nN\}$.
Therefore, Assumption \ref{asum:rho} holds since $\Eb\left[\Ib_{\{\rho_n\le t\}}\middle| \Fc_{n-1}\right]=\Eb\left[\Ib_{\{\rho_n\le t\}}\middle| \Thetav_n\right]$ for all $n\in\Nb$ and $t\in\Rb$. 	

In Example \ref{ex:BS}, $\rho_n :=Z_{n}/Z_{n-1} = \exp\left(-\frac{1}{2}\|\Lamv_n\|^2 - \Lamv_n^\top(\Bv_n-\Bv_{n-1})\right)$. Since $\Lamv_n$ is $\Fc_{n-1}$-measurable and $(\Bv_n-\Bv_{n-1})$ is independent of $\Fc_{n-1}$, we have that $\Eb\left[\Ib_{\{\rho_n\le t\}}\middle| \Fc_{n-1}\right]=
\Eb\left[\Ib_{\{\rho_n\le t\}}\middle| \Lamv_n\right]$ for $n\in\Nb$ and $t\in\Rb$. Thus, Assumption $\ref{asum:rho}$ holds for $\Thetav_n := \Lamv_n$.
\qed
\end{remark}

One of the main contribution of our paper is to highlight the role of Assumption \ref{asum:rho} in establishing existence conditions and providing a construction algorithm for PFPPs. As we will discuss in the next section, a PFPP is a sequence of random utility functions $(x,\om)\mapsto U_n(x,\om)$, $(n,x,\om)\in\Nb_0\times\Rb^+\times\Om$. More specifically, the random function $U_n(\cdot):=U_n(\cdot,\om)$ is measurable with respect to a sub-$\sig$-algebra $\mathcal{G}_{n-1}\subseteq\Fc_{n-1}$ (which is why these preferences are \emph{predictable}). Assumption \ref{asum:rho} allows us to express the subfiltration $\{\mathcal{G}_n\}_{n\in\Nb_0}$ more explicitly than what was used for PFPPs in the existing work \cite{AZZ20} and \cite{StrubZhou2021}. In particular, by adapting Assumption \ref{asum:rho} and taking $\mathcal{G}_{n-1}$ to be the augmented $\sig$-algebra generated by $\Gv_n := (\Thetav_1,\dots,\Thetav_n)$, we are able to define a PFPP as a sequence $\{U_n(\cdot,\Gv_n)\}_{n\in\Rb_0}$ in which $x,\gv\mapsto U_n(x,\gv)$ is a \emph{deterministic} measurable function, see Definition \ref{def:PFPP} below. The advantage of working with measurable functions (instead of random fields) is that it leads to more explicit existence conditions for PFPPs (see Theorem \ref{thm:verif} below) that are reduced to a single period integral equation (see subsection \ref{sec:PFPP_construction}). Such existence conditions and, more importantly, a rigorous argument establishing how they are related to a single period problem, have been missing in the literature beyond the existence result of \cite{AZZ20} for the binomial model.
In a more abstract setup such as \cite{StrubZhou2021}, we speculate that one should also assume a counterpart of Assumption \ref{asum:rho} to obtain an existence results for PFPPs. However, to keep the argument less technical, we have refrain from using a more general setting and will consider such an extension as future work.

%
%
\section{Predictable forward performance processes}\label{sec:PFPP}

In this section, we define PFPPs and provide conditions for their existence, see Theorem \ref{thm:verif}. Based on the existence conditions, we then propose a forward (in time) period-by-period construction of PFPPs in subsection \ref{sec:PFPP_construction}. In each period, the main step of the construction is solving an integral equation (namely, \eqref{eq:IntEq_Stepn} below) which will be analyzed in Section \ref{sec:ConvEq}.

Motivated by \cite{AZZ20}, we define predictable forward performance processes as follows. Recall from Assumption \ref{asum:rho} that $\rho_n:=Z_n/Z_{n-1}$, that $\{\Thetav_n\}_{n\in\Nb}$ is a sequence of $\{\Fc_n\}_{n\in\Nb_0}$-predictable random vectors (i.e. $\Thetav_n$ is $\Fc_{n-1}$-measurable) satisfying \eqref{eq:CondInd}, and that $\Gv_n:=(\Thetav_1,\dots,\Thetav_n)$. Furthermore, let $\Uc$ be the set of classical utility function on $\Rb_+$, namely,
\begin{align}
\Uc := \left\{U\in\Cc^2(\Rb_+): U'>0, U''<0, U'(0+)=+\infty, U'(+\infty)=0\right\},
\end{align}
and let $\oplus$ be the direct sum of vectors such that $(\Thetav_1,\dots,\Thetav_{n-1})\oplus\Thetav_n := (\Thetav_1,\dots,\Thetav_n)$.

\begin{definition}\label{def:PFPP}
Consider the market setting of Section \ref{sec:Market} with $\{(\rho_n,\Thetav_n,\Gv_n)\}_{n\in\Nb}$ as in Assumption \ref{asum:rho} and recall that $\Xi_n:=\supp(\Thetav_n)$ and $\Gc_n:=\supp(\Gv_n)$.
A sequence $\{U_n\}_{n\in\Nb_0}$ of Borel measurable functions $U_0:\Rb_+\to\Rb$ and $U_n:\Rb_+\times \Gc_n \to \Rb$, $n\in\Nb$,
is a \emph{predictable forward performance process (PFPP)} if the following conditions are satisfied:
\begin{enumerate}
\item[$(i)$] $U_0\in\Uc$ and $U_n(\cdot,\gv)\in\Uc$ for all $(n,\gv)\in\Nb\times\Gc_n$.

\item[$(ii)$] $U_{n-1}(x,\gv')\ge \Eb\big[U_n(X,\gv)\big|\Gv_n=\gv\big]$ for all $(n,x,\gv=\gv'\oplus\thetav)\in\Nb\times\Rb_+\times\Gc_n$ (such that $\gv'\in\Gc_{n-1}$ and $\thetav\in\Xi_n$) and for any $X\in\Ac_n(x)$ satisfying $\Eb\big[U_n(X,\gv)\big|\Gv_n=\gv\big]>-\infty$.\footnote{For $n=1$, this condition becomes $U_0(x)\ge\Eb\big[U_1(X,\thetav)\big|\Thetav_1=\thetav\big]$ for all $(x,\thetav)\in\Rb_+\times \Xi_1$ and $X\in\Ac_1(x)$ such that $\Eb\big[U_1(X,\thetav)\big|\Thetav_1=\thetav\big]>-\infty$.}

\item[$(iii)$] There exists $(X^*_t)_{t\ge0}\in\Act$ such that $U_{n-1}(x,\gv') = \Eb\big[U_n(X^*_n,\gv)\big|X^*_{n-1}=x,\Gv_n=\gv\big]$ for all $(n,x,\gv=\gv'\oplus\thetav)\in\Nb\times\Rb_+\times\Gc_n$ (such that $\gv'\in\Gc_{n-1}$ and $\thetav\in\Xi_n$).\footnote{For $n=1$, this condition becomes $U_0(x)=\Eb\big[U_1(X^*_1,\thetav)\big|X^*_0=x,\Thetav_1=\thetav\big]$ for all $(x,\thetav)\in\Rb_+\times\Xi_1$.}
\end{enumerate}
The wealth process $(X^*_t)_{t\ge0}$ in (iii) is called an optimal wealth process for PFPP $\{U_n\}_{n\in\Nb}$.
\qed
\end{definition}\vspace{1ex}

\begin{remark}
The condition $\Eb\big[U_1(X,\thetav)\big|\Thetav_1=\thetav\big]>-\infty$ is included in Definition 3.1.(ii) since strategies for which $\Eb\big[U_n(X,\gv)\big|\Gv_n=\gv\big]=-\infty$ are clearly sub optimal and do not need to be checked.\qed
\end{remark}

One can think of a PFPP as a sequence of utility functions for an agent such that the agent's preference at time $n$ is quantified by $U_n(X_n,\Gv_n)$. Condition $(ii)$ of Definition \ref{def:PFPP} states that, for an arbitrary wealth process $(X_t)_{t\ge0}\in\Act$, the stochastic process $\{U_n(X_n,\Gv_n)\}_{n\in\Nb_0}$ is a super martingale. For an optimal $(X^*_t)_{t\ge0}\in\Act$, Condition $(iii)$ implies that $\{U_n(X^*_n,\Gv_n)\}_{n\in\Nb_0}$ is a martingale. Thus, Properties $(ii)$ and $(iii)$ are Bellman's \emph{dynamic programming principles} and enforce \emph{time-consistency} for PFPPs. See \cite{AZZ20} for a more detailed discussion.

Note, also, that our definition of PFPPs is more restricted than the one in Definition 2.1 of \cite{AZZ20} and \cite{StrubZhou2021}. In those studies, a PFPP is a sequence of \emph{random} function $(x,\om)\to \Ut_n(x,\om)$, $(n,x,\om)\in\Nb_0\times\Rb_+\times\Om$, such that $\Ut_n(x,\om)$ is $\Fc_{n-1}$-measurable. We have defined a PFPP as a sequence of \emph{deterministic} measurable functions $(x,\gv)\mapsto U_n(x,\gv)$, $n\in\Nb_0$. By defining $\Ut(x,\om):=U_n\Big(x,\big(\Thetav_1(\om),\dots,\Thetav_n(\om)\big)\Big)$, one can check that $\{\Ut_n\}_{n\in\Nb_0}$ is a PFPP according to \cite{AZZ20}, but, with a more restrictive measurability condition. In particular, Definition \ref{def:PFPP} implies that $\Ut_n(x,\om)$ is measurable with respect to the (augmented) $\sig$-algebra generated by $(\Thetav_1,\dots,\Thetav_n)$, which is a sub-$\sig$-algebra of $\Fc_{n-1}$ since $\Thetav_n$ is $\Fc_{n-1}$-measurable by Assumption \ref{asum:rho}.
This more restricted definition (along with Assumption \ref{asum:rho}) allows us to: 1) find existence conditions for PFPPs in multi-period settings, and 2) show that our multi-period existence conditions reduce to a single period integral equation. The next two subsections elaborate these two results.

\subsection{Existence of PFPPs}\label{sec:Verify}

Our first goal in this section is to provide a set of conditions for existence of PFPPs. These conditions and their proof  rely on \emph{inverse marginal} and \emph{convex dual} functions of the utility functions $U_n(\cdot,\gv)$, $(n,\gv)\in\Nb_0\times\Gc_n$. The following lemma provide the basic properties of these well-known functions. In its statement,
\begin{align}\label{eq:Ic}
\Ic := \left\{I\in\Cc^1(\Rb_+): I'<0, I(0+)=+\infty, I(+\infty)=0\right\},
\end{align}
denotes the set of inverse marginal functions.

\begin{lemma}\label{lem:ConvexDuality}
A utility function $U(\cdot)\in\Uc$ has a unique inverse marginal function $I\in\Ic$ defined by $U'\big(I(y)\big)=y$, $y>0$, and a unique \emph{convex dual function} $V:\Rb_+\to\Rb$ given by
\begin{align}\label{eq:ConvDual}
V(y) := \sup_{x>0}\left\{U(x) - xy\right\} = U\big(I(y)\big)-yI(y);\quad y>0,
\end{align}
which is in $\Cc^2(\Rb_+)$, strictly decreasing, and strictly convex. Furthermore, we have $V'(y)=-I(y)$ and $V''(y)=-I'(y)=-1/U''\big(I(y)\big)$, $y\ge0$.\qed
\end{lemma}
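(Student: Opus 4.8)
The plan is to establish the correspondence $U\leftrightarrow I\leftrightarrow V$ by direct verification from the definitions, using the fact that $U\in\Uc$ means $U$ is $\Cc^2$, strictly increasing, strictly concave, with $U'(0+)=+\infty$ and $U'(+\infty)=0$. First I would observe that $U'$ is a continuous, strictly decreasing bijection from $\Rb_+$ onto $\Rb_+$ (by strict concavity $U''<0$ it is strictly decreasing, and by the boundary conditions $U'(0+)=+\infty$, $U'(+\infty)=0$ its range is all of $\Rb_+$). Hence $U'$ has a well-defined inverse $I:=(U')^{-1}:\Rb_+\to\Rb_+$, which is exactly the function characterized by $U'(I(y))=y$; this gives existence and uniqueness of $I$. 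The regularity $I\in\Cc^1(\Rb_+)$ follows from the inverse function theorem since $U'\in\Cc^1$ with nonvanishing derivative $U''<0$; differentiating $U'(I(y))=y$ gives $U''(I(y))I'(y)=1$, so $I'(y)=1/U''(I(y))<0$. The boundary behavior $I(0+)=+\infty$ and $I(+\infty)=0$ is just the statement that $(U')^{-1}$ sends $0+\mapsto+\infty$ and $+\infty\mapsto0$, which is the inverse of the boundary conditions on $U'$. Thus $I\in\Ic$.

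Next I would turn to the convex dual $V(y)=\sup_{x>0}\{U(x)-xy\}$. For fixed $y>0$ the map $x\mapsto U(x)-xy$ is strictly concave (as $U''<0$) and $\Cc^1$, so its supremum is attained at the unique critical point where $U'(x)=y$, i.e. at $x=I(y)$; that the supremum is genuinely attained in the interior (not at $0$ or $\infty$) uses $U'(0+)=+\infty>y$ and $U'(+\infty)=0<y$. Substituting gives $V(y)=U(I(y))-yI(y)$, which is the claimed formula. Differentiating this expression (legitimate since $I\in\Cc^1$ and $U\in\Cc^2$) and using $U'(I(y))=y$ yields $V'(y)=U'(I(y))I'(y)-I(y)-yI'(y)=yI'(y)-I(y)-yI'(y)=-I(y)$. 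Differentiating once more gives $V''(y)=-I'(y)=-1/U''(I(y))$, which is positive since $U''<0$; this shows $V\in\Cc^2(\Rb_+)$, and $V'=-I<0$ together with $V''>0$ gives that $V$ is strictly decreasing and strictly convex. Uniqueness of $V$ is immediate since it is defined by an explicit formula.

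This proof is essentially a sequence of routine verifications from elementary convex analysis and calculus — there is no real obstacle, as everything follows from the inverse function theorem and differentiation of the dual representation. The only point requiring a modicum of care is justifying that the supremum defining $V(y)$ is attained in the interior of $\Rb_+$ for every $y>0$, which is where the Inada-type boundary conditions $U'(0+)=+\infty$ and $U'(+\infty)=0$ are used; without them $V(y)$ could fail to be finite or the maximizer could escape to the boundary. I would state this step explicitly and then let the computations proceed as above.
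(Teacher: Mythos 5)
Your proof is correct. The paper does not actually write out an argument for this lemma; it simply refers to Theorem 26.5 of Rockafellar's \emph{Convex Analysis}, which gives the differentiability of the conjugate of a smooth, strictly concave function in general. Your blind proposal supplies exactly the elementary verification the paper is alluding to when it calls the proof ``simple'': the Inada conditions $U'(0+)=+\infty$, $U'(+\infty)=0$ together with $U''<0$ make $U'$ a strictly decreasing $\Cc^1$ bijection of $\Rb_+$ onto itself, the inverse function theorem gives $I\in\Ic$ with $I'(y)=1/U''(I(y))$, the interior first-order condition identifies the maximizer in the definition of $V$ as $x=I(y)$, and differentiating $V(y)=U(I(y))-yI(y)$ yields $V'=-I$ and $V''=-I'>0$. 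You also correctly flag the one step that genuinely uses the Inada conditions, namely that the supremum is attained in the interior and is finite for every $y>0$. The only difference from the paper is presentational: the paper outsources the result to the convex-analysis literature, whereas your self-contained calculus argument avoids the citation at the cost of a page of routine computation; both are fine, and your version has the small advantage of making explicit where each hypothesis in the definition of $\Uc$ is used.
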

\begin{proof}
The proof is simple and can be found in many standard texts on convex analysis. See, for instance, Theorem 26.5 of \cite{Rockafellar1970}.
\end{proof}  

The following theorem is our first main result of the paper. It provides a set of sufficient conditions for a sequence of functions to be a PFPP. To the best of our knowledge, there has not been such a result in the literature beyond the existence result of \cite{AZZ20} and \cite{LiangStrubWang2021} for the binomial model.\footnote{We emphasize again that by an existence result for PFPPs, we mean conditions for existence of a PFPP in a multi-period evaluation setting. In particular, we do not claim that we are the first to provide existence of a solution for the inverse Merton problem, which can be seen as a special case of Theorem \ref{thm:verif} in a finite-horizon model with only one evaluation period. For the inverse Merton problem, there are existing results such as \cite{Kallblad2020}  for the Black-Scholes model and \cite{StrubZhou2021} for a complete semi-martingale setting.} 

\begin{theorem}\label{thm:verif}
Consider the market setting of Section \ref{sec:Market} with Assumptions \ref{asum:SPD} and \ref{asum:rho} holding. Let $U_0\in \Uc$ and $I_0:=U_0^{\prime-1}\in\Ic$. Furthermore, assume that Borel measurable functions $I_n:\Rb_+\times\Gc_n\to \Rb_+$, $n\in\Nb$, satisfy the following conditions for all $(n,y,\gv)\in\Nb\times\Rb_+\times\Gc_n$:

\noindent$(i)$ $I_n(\cdot,\gv)\in\Ic$ and $\Eb\left[I_n(y\rho_n, \gv)\middle|\Gv_n=\gv\right]<+\infty$.

\noindent$(ii)$ $\Eb\left[\rho_n I_n(y\rho_n, \gv)\middle|\Gv_n=\gv\right]=I_{n-1}(y,\gv')$, in which $\gv=\gv'\oplus\thetav$ with $\gv'\in\Gc_{n-1}$ and $\thetav\in\Xi_n$.\footnote{For $n=1$, this condition becomes $\Eb\left[\rho_1 I_1(y\rho_1, \thetav)\middle|\Thetav_1=\thetav\right]=I_0(y)$ for all $(y,\thetav)\in\Rb_+\times\Xi_1$.}\vspace{1ex}

\noindent For $n\in\Nb$, define $U_n:\Rb_+\times\Gc_n\to\Rb$ by
\begin{align}\label{eq:U}
U_n(x,\gv) &:= U_{n-1}\left(I_{n-1}(1,\gv'),\gv'\right) +\Eb\left[\int_{I_n\left(\rho_n, \gv\right)}^x I_n^{-1}(\xi,\gv) \dd\xi\middle|\Gv_n=\gv\right],
\end{align}
for $x\in\Rb_+$ and $\gv=\gv'\oplus\thetav\in\Gc_n$ (such that $\gv'\in\Gc_{n-1}$ and $\thetav\in\Xi_n$).\footnote{For $n=1$, \eqref{eq:U} becomes $U_1(x,\thetav) := U_0\big(I_0(1)\big) +\Eb\left[\int_{I_1\left(\rho_1,\thetav\right)}^x I_1^{-1}(\xi,\thetav) \dd\xi\middle|\Thetav_1=\thetav\right]$, $(x,\thetav)\in \Rb_+\times \Xi_1$.} For an $x_0>0$, let $X^*_0=x_0$ and
\begin{align}\label{eq:XStar}
X^*_n:=I_n\left(\rho_n I_{n-1}^{-1}(X^*_{n-1},\Gv_{n-1}), \Gv_n\right);\quad n\in\Nb.
\end{align}
Then, $\left\{U_n\right\}_{n\in\Nb_0}$ is a PFPP and $\{X^*_n\}_{n\in\Nb_0}$ is a corresponding optimal wealth process.\qed
\end{theorem}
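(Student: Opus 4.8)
The plan is to verify the three conditions of Definition \ref{def:PFPP} for $\{U_n\}_{n\in\Nb_0}$ given by \eqref{eq:U}, with $\{X^*_n\}_{n\in\Nb_0}$ from \eqref{eq:XStar} as the optimal wealth process. The whole argument hinges on the observation that condition $(ii)$ on the inverse marginals $I_n$ dualises into the \emph{same} identity for the convex dual functions $V_n(\cdot,\gv)$ of $U_n(\cdot,\gv)$ (in the sense of Lemma \ref{lem:ConvexDuality}), namely
\[
\Eb\big[V_n(y\rho_n,\gv)\,\big|\,\Gv_n=\gv\big]=V_{n-1}(y,\gv'),\qquad y>0,\ \gv=\gv'\oplus\thetav\in\Gc_n,
\]
and that this ``dual identity'' is precisely the Fenchel--Young input needed for both the supermartingale inequality and the martingale equality. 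Throughout, conditional expectations given $\{\Gv_n=\gv\}$ are read in the usual regular-conditional-probability sense, and the $n=1$ case is the obvious specialisation.

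First I would show that $\{U_n\}$ is well defined with $U_n(\cdot,\gv)\in\Uc$. Splitting $\int_{I_n(\rho_n,\gv)}^{x}=\int_{I_n(\rho_n,\gv)}^{I_n(1,\gv)}+\int_{I_n(1,\gv)}^{x}$ in \eqref{eq:U} isolates a deterministic integral, so $U_n(x,\gv)$ is a $\gv$-dependent constant plus $\int_{I_n(1,\gv)}^{x}I_n^{-1}(\xi,\gv)\,\dd\xi$; differentiating gives $U_n'(x,\gv)=I_n^{-1}(x,\gv)$, whence $I_n(\cdot,\gv)$ is the inverse marginal of $U_n(\cdot,\gv)$ and all four defining properties of $\Uc$ follow from $I_n(\cdot,\gv)\in\Ic$. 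Finiteness of the constant I would reduce, via the substitution $\xi=I_n(\eta,\gv)$ and an integration by parts, to finiteness of $\Eb[\rho_n I_n(\rho_n,\gv)\,|\,\Gv_n=\gv]$ — equal to $I_{n-1}(1,\gv')$ by $(ii)$ — and of $\Eb\big[\int_{\rho_n}^{1}I_n(\eta,\gv)\,\dd\eta\,\big|\,\Gv_n=\gv\big]$, the latter being dominated, by monotonicity of $I_n(\cdot,\gv)$, by $\Eb\big[I_n(\rho_n,\gv)\1_{\{\rho_n<1\}}+(\rho_n-1)^{+}I_n(1,\gv)\,\big|\,\Gv_n=\gv\big]$, which is finite by $(i)$ and by $\Eb[\rho_n\,|\,\Gv_n=\gv]=1$ — the latter because $Z$ is a martingale and Assumption \ref{asum:rho} gives $\Eb[\rho_n\,|\,\Gv_n]=\Eb[\rho_n\,|\,\Fc_{n-1}]=1$. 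This yields Definition \ref{def:PFPP}$(i)$.

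Next I would prove the dual identity. Since $\partial_yV_n(y\rho_n,\gv)=-\rho_nI_n(y\rho_n,\gv)$ and $(ii)$ gives $\Eb[\rho_nI_n(c\rho_n,\gv)\,|\,\Gv_n=\gv]=I_{n-1}(c,\gv')<+\infty$ for every $c>0$, dominated convergence lets me differentiate under the conditional expectation, so $\partial_y\Eb[V_n(y\rho_n,\gv)\,|\,\Gv_n=\gv]=-I_{n-1}(y,\gv')=V_{n-1}'(y,\gv')$; the two sides therefore differ by a $y$-independent constant, which I would pin down by evaluating at $y=1$: computing $U_n(I_n(1,\gv),\gv)$ from \eqref{eq:U} (again via $\xi=I_n(\eta,\gv)$ and integration by parts, and using $(ii)$) and combining $V_n(1,\gv)=U_n(I_n(1,\gv),\gv)-I_n(1,\gv)$, $V_n(\rho_n,\gv)=V_n(1,\gv)+\int_{\rho_n}^{1}I_n(\eta,\gv)\,\dd\eta$, and the analogous relation for $V_{n-1}$ at $y=1$ shows the constant vanishes, so the identity holds for all $y>0$. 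With it in hand, Definition \ref{def:PFPP}$(ii)$ is immediate: for $X\in\Ac_n(x)$ with $\Eb[U_n(X,\gv)\,|\,\Gv_n=\gv]>-\infty$, the Fenchel--Young inequality $U_n(X,\gv)\le V_n(\lambda\rho_n,\gv)+\lambda\rho_nX$ (valid for all $\lambda>0$), taken with $\lambda:=I_{n-1}^{-1}(x,\gv')=U_{n-1}'(x,\gv')$ and followed by $\Eb[\cdot\,|\,\Gv_n=\gv]$, yields $\Eb[U_n(X,\gv)\,|\,\Gv_n=\gv]\le V_{n-1}(\lambda,\gv')+\lambda x=U_{n-1}(x,\gv')$, where I use $\Eb[\rho_nX\,|\,\Gv_n=\gv]=x$ (which follows from $\Eb[\rho_nX\,|\,\Fc_{n-1}]=x$ because $\Gv_n$ is $\Fc_{n-1}$-measurable), the dual identity, and the Fenchel--Young equality $V_{n-1}(\lambda,\gv')+\lambda x=U_{n-1}(x,\gv')$.

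For Definition \ref{def:PFPP}$(iii)$ I would first check $\{X^*_n\}_{n\in\Nb_0}\in\Ac$: $X^*_n\ge0$ is clear, and conditioning \eqref{eq:XStar} on $\Fc_{n-1}$ — noting that $I_{n-1}^{-1}(X^*_{n-1},\Gv_{n-1})$ and $\Gv_n$ are $\Fc_{n-1}$-measurable and, by Assumption \ref{asum:rho}, that $\rho_n$ has the same conditional law given $\Fc_{n-1}$ as given $\Gv_n$ — condition $(ii)$ gives $\Eb[\rho_nX^*_n\,|\,\Fc_{n-1}]=I_{n-1}\big(I_{n-1}^{-1}(X^*_{n-1},\Gv_{n-1}),\Gv_{n-1}\big)=X^*_{n-1}$, i.e. $X^*_n\in\Ac_n(X^*_{n-1})$; Lemma \ref{lem:DisceretWealth} then furnishes $(X^*_t)_{t\ge0}\in\Act$. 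The martingale equality follows the same way: on $\{X^*_{n-1}=x,\Gv_n=\gv\}$ one has $X^*_n=I_n(\lambda\rho_n,\gv)$ with $\lambda=I_{n-1}^{-1}(x,\gv')$, so the Fenchel--Young \emph{equality} gives $U_n(X^*_n,\gv)=V_n(\lambda\rho_n,\gv)+\lambda\rho_nX^*_n$; taking $\Eb[\cdot\,|\,X^*_{n-1}=x,\Gv_n=\gv]=\Eb[\cdot\,|\,\Gv_n=\gv]$ (conditional independence again) and using the dual identity, $(ii)$, and $V_{n-1}(\lambda,\gv')+\lambda x=U_{n-1}(x,\gv')$ gives $\Eb[U_n(X^*_n,\gv)\,|\,X^*_{n-1}=x,\Gv_n=\gv]=U_{n-1}(x,\gv')$, which is $(iii)$ and also shows the bound in $(ii)$ is attained. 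I expect the main obstacle to be the constant-pinning in the dual identity — equivalently, checking that the particular additive normalisation built into \eqref{eq:U} is exactly the one that makes the $y$-independent constant vanish — together with the attendant integrability bookkeeping, in which conditions $(i)$ and $(ii)$ are each used in an essential way; once the dual identity is secured, verifying the dynamic-programming properties is routine Fenchel--Young manipulation.
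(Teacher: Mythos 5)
Your proposal is correct and follows essentially the same route as the paper's proof: you verify the three conditions of Definition \ref{def:PFPP} directly, using $U_n'(\cdot,\gv)=I_n^{-1}(\cdot,\gv)$, Fenchel--Young via the convex dual $V_n$ of Lemma \ref{lem:ConvexDuality} for the supermartingale and martingale properties, Assumption \ref{asum:rho} to pass between conditioning on $\Fc_{n-1}$ and on $\Gv_n$, and Lemma \ref{lem:DisceretWealth} to produce $(X^*_t)_{t\ge0}\in\Act$, exactly as the paper does. The only substantive difference is that you establish the key recursion as the dual identity $\Eb[V_n(y\rho_n,\gv)\,|\,\Gv_n=\gv]=V_{n-1}(y,\gv')$ (derivative matching in $y$ plus pinning the constant at $y=1$), whereas the paper proves the equivalent primal identity \eqref{eq:Uold_Unew} by differentiating in $x$ and pinning at $x=I_{n-1}(1,\gv')$; given condition $(ii)$ these are interchangeable, and your dual formulation arguably makes the differentiation-under-the-conditional-expectation step a bit cleaner, since the dominating function $\rho_n I_n(c\rho_n,\gv)$ is supplied directly by $(ii)$.
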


\begin{proof}
See Appendix \ref{app:verif}.
\end{proof}

Before going further, let us highlight the role of Assumption \ref{asum:rho}. It may seem at first that this assumption only plays a minor role in the proof of Theorem \ref{thm:verif} in that it is only needed to obtain \eqref{eq:AssumpRho_needed}. In fact, one may argue that the proof can be generalized by replacing $U_n(\cdot,\Gv_n)$ with a more general $\Fc_{n-1}$-measurable random field $U_n(\cdot,\om)$ as in Definition 2.1 of \cite{AZZ20} and \cite{StrubZhou2021}. We agree that such a generalization of Theorem \ref{thm:verif} is possible.

The difficulty, however, is in how the resulting existence conditions can be used for constructing PFPPs. In particular, how such more abstract conditions could be rigorously reduced to a single period problem (in our case, the integral equation \eqref{eq:IntEq_Stepn} below). Because of this issue, \cite{AZZ20} only provided existence conditions for PFPPs in the binomial setting. Furthermore, their construction algorithm for PFPPs (see Theorem 7.1 on page 340 of \cite{AZZ20}) only produces PFPPs that are of the form $U_n(x,\Gv_n)$, $n\in\Nb_0$, in which $\Gv_n=(\Thetav_1,\dots,\Thetav_n)$ are as in Remark \ref{rem:Binom_BS_rho} (for the binomial setting of Example \ref{ex:Binom}). In short, although the Definition of PFPPs in \cite{AZZ20} is more general than ours, their concrete results are special case of ours. \cite{StrubZhou2021} faced a similar difficulty and, as they explicitly mention on two occasions, they did not provide any multi-period existence condition for PFPPs.\footnote{See Section \ref{sec:intro} for details.}

Because of Assumption \ref{asum:rho}, we are able to express randomness of PFPPs through the random variables $\Gv_n=(\Thetav_1,\dots,\Thetav_n)$, as we have done in Definition \ref{def:PFPP}. Furthermore, \eqref{eq:AssumpRho_needed} shows that Assumption \ref{asum:rho} is necessary for such a representation.

\subsection{Forward construction of PFPPs}\label{sec:PFPP_construction}
Our second goal in Section \ref{sec:PFPP} is to find an algorithm for constructing a PFPP $U_n(x,\gv)$, $(n,x,\gv)\in\Nb_0\times\Rb_+\times\Gc_n$, using the existence conditions provided by Theorem \ref{thm:verif}. As elaborated in the introduction, we are interested in a forward-in-time construction. That is, we would like to iteratively obtain $U_n(\cdot,\Gv_n)$ assuming that we know $U_{n-1}(\cdot,\Gv_{n-1})$.

Assume that $U_0\in\Uc$ is given a priori (i.e. at time 0) and let $I_0\in\Ic$ be its inverse marginal. Condition $(ii)$ of Theorem \ref{thm:verif} dictates that
\begin{align}\label{eq:IntEq_Step0}
\Eb\left[\rho_1 I_1(y\rho_1, \thetav)\middle|\Thetav_1=\thetav\right]=I_0(y);&\quad (y,\thetav)\in\Rb_+\times\Xi_1.
\end{align}
Here, $I_0$ and the conditional distribution of $\rho_1|_{\Thetav_1=\thetav}$ are known at time $0$, while $I_1$ is unknown. Furthermore, for Condition $(i)$ of Theorem \ref{thm:verif} to be satisfied, we also require that $I_1(\cdot,\thetav)\in\Ic$ and $\Eb\left[I_1(y\rho_1, \thetav)\middle|\Thetav_1=\thetav\right]<+\infty$ for all $y>0$ and $\thetav\in \Xi_1$. Finding such an $I_1$ is formulated as Problem \ref{prob:PFPPConstructionStep} below, which will be the focus of Section \ref{sec:ConvEq}. Once we find $I_1$, we may use \eqref{eq:U} and $\eqref{eq:XStar}$ to define $X^*_1$ and $U_1$ as follows,
\begin{align}\label{eq:XStarU_Step0}
\begin{cases}
X^*_1 :=I_1\left(\rho_1 I_0^{-1}(x_0), \Thetav_1\right)=I_1\left(\rho_1 U_0'(x_0), \Thetav_1\right),\\
U_1(x,\thetav) := U_0\big(I_0(1)\big) +\Eb\left[\int_{I_1\left(\rho_1,\thetav\right)}^x I_1^{-1}(\xi,\thetav) \dd\xi\middle|\Thetav_1=\thetav\right];\quad x>0, \thetav\in\Xi_1,
\end{cases}
\end{align}
in which $x_0>0$ is the initial portfolio value.

Next, consider the second evaluation period $t\in[1,2)$. At $t=1$, we know $I_1(\cdot,\thetav_1)$ and the conditional distribution of $\rho_2|_{\Gv_2=(\thetav_1,\thetav_2)}$. From Conditions $(i)-(ii)$ of Theorem \ref{thm:verif}, we are looking for an $I_2$ such that $I_2\big(\cdot,(\thetav_1,\thetav_2)\big)\in\Ic$, $\Eb\left[I_2(y\rho_2,(\thetav_1,\thetav_2)\big)\middle|\Gv_2=(\thetav_1,\thetav_2)\right]<+\infty$, and
\begin{align}\label{eq:IntEq_Step1}
\Eb\left[\rho_2 I_2\big(y\rho_2, (\thetav_1,\thetav_2)\big)\middle|\Gv_2=(\thetav_1,\thetav_2)\right] = I_1(y,\thetav_1),
\end{align}
for all $y>0$ and $(\thetav_1,\thetav_2)\in\Gc_2$. Finding such an $I_2$ is also formulated as Problem \ref{prob:PFPPConstructionStep} below, which we solve in the next section. Once an appropriate $I_2$ is found, we then obtain $U_2$ and $X^*_2$ by \eqref{eq:U} and \eqref{eq:XStar} respectively, that is,
\begin{align}\label{eq:XStarU_Step1}
\begin{cases}
X^*_2 :=I_2\big(\rho_2 I_1^{-1}(X^*_1,\Thetav_1), (\Thetav_1,\Thetav_2)\big),\\
U_2\big(x,(\thetav_1,\thetav_2)\big) := U_1\big(I_1(1,\thetav_1),\thetav_1\big)\\
{}+\Eb\left[\int_{I_2\big(\rho_2,(\thetav_1,\thetav_2)\big)}^x I_2^{-1}\big(\xi,(\thetav_1,\thetav_2)\big) \dd\xi\middle|\Gv_2=(\thetav_1,\thetav_2)\right];\quad x>0,(\thetav_1,\thetav_2)\in\Gc_2.
\end{cases}
\end{align}
Note that \eqref{eq:IntEq_Step1} and \eqref{eq:XStarU_Step1}
can be solved at time $1$ (specifically, recall that $\Gv_2:=(\Thetav_1,\Thetav_2)$ is $\Fc_1$ measurable by Assumption \ref{asum:rho}).

In general, at time $n-1\in\Nb_0$, we are given $U_{n-1}$, $I_{n-1}$, $\Gv_n:=(\Thetav_1,\dots,\Thetav_n)$ and the conditional distribution of $\rho_n|_{\Gv_n}$. Using the results of the next section, we first find an $I_n$ satisfying Conditions $(i)$ and $(ii)$ of Theorem \ref{thm:verif} by solving the equation
\begin{align}\label{eq:IntEq_Stepn}
\Eb\left[\rho_n I_n\big(y\rho_n, (\thetav_1,\dots,\thetav_n)\big)\middle|\Gv_n=(\thetav_1,\dots,\thetav_n)\right] = I_{n-1}\big(y,(\thetav_1,\dots,\thetav_{n-1})\big),
\end{align}
for all $y>0$ and $(\thetav_1,\dots,\thetav_n)\in\Gc_n$. Then, we obtain $X^*_n$ and $U_n$ as follows
\begin{align}\label{eq:XStarU_Stepn}
\begin{cases}
X^*_n :=I_n\big(\rho_n I_{n-1}^{-1}(X^*_{n-1},\Gv_{n-1}), \Gv_n)\big),\\
U_n\big(x,(\thetav_1,\dots,\thetav_n)\big) := U_{n-1}\Big(I_{n-1}\big(1,(\thetav_1,\dots,\thetav_{n-1})\big),(\thetav_1,\dots,\thetav_{n-1})\Big)\\
{}+\Eb\left[\int_{I_n\big(\rho_n,(\thetav_1,\dots,\thetav_n)\big)}^x I_n^{-1}\big(\xi,(\thetav_1,\dots,\thetav_n)\big) \dd\xi\middle|\Gv_n=(\thetav_1,\dots,\thetav_n)\right];\quad x>0,(\thetav_1,\dots,\thetav_n)\in\Gc_n,
\end{cases}
\end{align}
which also determine the investment policy for the n-th period. By Theorem \ref{thm:verif}, this period-by-period forward iteration is guaranteed to yield a PFPP $\{U_n\}_{n\in\Nb_0}$ and an optimal wealth process $\{X^*_n\}_{n\in\Nb_0}$ with initial wealth $x_0>0$.

%
%
\begin{algorithm}[tb]
\caption{Investment policy according to a PFPP}\label{alg:PFPP}
\begin{algorithmic}
\Statex
\Require{initial wealth $x_0$ and initial inverse marginal $I_0=U_0'$}
\State $\gv\gets[~]$, $X^*_0\gets x_0$
\For{$n=0,1,\dots$}
\State \textbf{Step 1:} Observe $\Thetav_{n+1}$. Set $\gv\gets\gv\oplus\Thetav_{n+1}$ and $\nu\gets$ the distribution of $\rho_{n+1}|_{\Gv_{n+1}=\gv}$.\vspace{1em}
\State \textbf{Step 2:} Find $I_{n+1}\in \Ic$ satisfying $\int_{\Rb_+}I_{n+1}(\rho y)\dd \nu(\rho)<+\infty$ and\\
\hspace{5.2em} $\int_{\Rb_+}\rho I_{n+1}(\rho y)\dd\nu(\rho) = I_n(y)$ for all $y>0$.
This is Problem \ref{prob:PFPPConstructionStep}.\vspace{1em}
\State \textbf{Step 3:} Starting with wealth $X^*_n$, invest over time period $[n,n+1]$
to replicate\\
\hspace{5.2em}  the payoff $X^*_{n+1}:= I_{n+1}\big(\rho_{n+1}I_n^{-1}(X^*_n)\big)$ at $n+1$. This is possible since\\
\hspace{5.2em} the market is complete and $\Eb[Z_{n+1}X^*_{n+1}|\Fc_n]=Z_nX^*_n$.
\EndFor
\end{algorithmic}
\end{algorithm}

Algorithm \ref{alg:PFPP} provides a general procedure for implementing an \emph{investment policy} according to the framework provided by PFPPs. The algorithm is a forward-in-time iteration. It takes the initial wealth $x_0>0$ and the initial inverse marginal $I_0\in\Ic$ as its initial inputs. For each evaluation period $[n,n+1]$, $n\in\Nb_0$, it then performs three tasks sequentially. Firstly, at time $n$, it observes the value of $\Thetav_{n+1}$ (which are assumed to be $\Fc_n$-measurable). Although this step is the most important step, we do not explore it in details. The complexity of this step depend on the type of the market model, and it falls into the broader topic of parameter estimation and machine learning. In general, this step involves calibrating the model (e.g. estimating drift and volatility in the Black-Scholes model) and/or consulting with market experts. Secondly, still at time $n$ (and \emph{after} observing $\Thetav_{n+1}$), the algorithm solves an integral equation to obtain $I_{n+1}$, which is essentially \eqref{eq:IntEq_Stepn} in integral form and for the observed value of $\Gv_{n+1}=(\Thetav_1,\dots,\Thetav_{n+1})$. We spend the rest of the paper solving this integral equation. The third step is a replication problem. Specifically, the $I_{n+1}$ found in the second step determines the optimal wealth $X^*_{n+1}$ which, by  \eqref{eq:IntEq_Stepn}, satisfies $\Eb[Z_{n+1}X^*_{n+1}|\Fc_n]=Z_nX^*_n$. Since we have assumed that the market is arbitrage-free and complete, there is a unique strategy over time period $[n,n+1]$ that, starting from $X^*_n$ at time $n$, replicates $X^*_{n+1}$ at time $n+1$. The specifics of this step depends on the market model. We don't go into the details since it is a well-studied subject in mathematical finance.

Note that we have not included calculations for the PFPP $\{U_n\}_{n\in\Nb_0}$ in Algorithm \ref{alg:PFPP}, as only the inverse marginals $I_n$, $n\in\Nb_0$, are needed for calculating the optimal wealth process (and, thus, obtaining the optimal investment positions). Furthermore, in Step 2 of Algorithm \ref{alg:PFPP}, we only need to solve \eqref{eq:IntEq_Stepn} for \emph{one realization} of the random variable $\Gv_n$, that is for $\Gv_n=\gv$ with $\gv$ obtained in Step 1 of the algorithm. In other words, Algorithm \ref{alg:PFPP} will create only one path of the optimal wealth process. Theorem \ref{thm:verif} guarantees that the wealth trajectories generated by Algorithm \ref{alg:PFPP} correspond to an optimal wealth process of a PFPP with initial utility function $U_0$.

In the next section, we show how to solve \eqref{eq:IntEq_Stepn} for an $I_n$ satisfying Condition $(i)$ of Theorem \ref{thm:verif}. Specifically, we will analyze the following problem which, as we just discussed, is the only remaining step for constructing PFPPs.
\begin{problem}\label{prob:PFPPConstructionStep}
Consider the market setting of Section \ref{sec:Market} with Assumptions \ref{asum:SPD} and \ref{asum:rho} holding. Given an $I_{n-1}:\Rb_+\times\Gc_{n-1}\to \Rb_+$ and the distribution of $\rho_n|_{\Gv_n=\gv}$ for all $\gv\in\Gc_n$, find an $I_n:\Rb_+\times\Gc_n\to \Rb_+$ satisfying Conditions $(i)$ and $(ii)$ of Theorem \ref{thm:verif}.\qed
\end{problem}

%
%

\section{The integral equation}\label{sec:ConvEq}

In this section, we first transform Problem \ref{prob:PFPPConstructionStep} into an integral equation, namely, \eqref{eq:IntEq} below. In Subsection \ref{sec:deconvolution}, we then provide a general approach for solving the integral equation by turning it into a convolution equation and then applying the Fourier transform. Finally, in Subsection \ref{sec:CMIM}, we provide existence and uniqueness of the solution to Problem \ref{prob:PFPPConstructionStep} within a special class of Completely Monotonic Inverse Marginal (CMIM) functions. Our discussion culminates in Theorem \ref{thm:PFPP_CMIM} and Algorithm \ref{alg:PFPP_CMIM} which provide an explicit forward construction for PFPP with inverse marginals that are completely monotonic. 

To ease the notations throughout this section, we ignore notational dependence on $n$ and $(\thetav_1,\dots,\thetav_{n-1})$ which appear on both sides of \eqref{eq:IntEq_Stepn}. For instance, instead of $I_{n-1}\big(y,(\thetav_1,\dots,$ $\thetav_{n-1})\big)$, we use $I_0(y)$. Similarly, we replace  $I_n\big(y,(\thetav_1,\dots,\thetav_{n-1},\thetav)\big)$ with $I_1(y,\thetav)$. We introduce the family of probability measures
\begin{align}\label{eq:nu}
\nu_\thetav\big(B\big):=\Eb\left[\Ib_{\{\rho_n\in B\}} \middle|\Gv_n=(\thetav_1,\dots,\thetav_{n-1},\thetav)\right];
\quad B\in\Bc(\Rb), \thetav\in\Xi,
\end{align}
in which
\begin{align}\label{eq:Xi}
\Xi:=\{\thetav\in\Xi_n:(\thetav_1,\dots,\thetav_{n-1},\thetav) \in\Gc_n\}\subseteq\Rb^{M_n},
\end{align}
and 
$\Bc(\Rb)$ denotes the $\sig$-algebra of the Borel subsets of $\Rb$. Note that $\supp(\nu_\thetav)\subseteq\Rb_+$, since $\rho_n>0$ \as~by Assumption \ref{asum:SPD}.
Note also that
\begin{align}\label{eq:ProbMeasure}
\nu_\thetav(\Rb_+) = 1 = \int_{\Rb_+} \rho \dd \nu_\thetav(\rho);
\quad \thetav\in\Xi.
\end{align}
The first equality holds since $\nu_\thetav$ is a probability measure. The second equality holds since $\Eb\left[\rho_n \middle|\Gv_n\right]=\Eb\left[\rho_n \middle|\Fc_n\right]=1$ by Assumptions \ref{asum:SPD} and \ref{asum:rho}.

Using the above notations, Problem \ref{prob:PFPPConstructionStep} is written in the following simplified form.

\begin{problem}\label{prob:IntEq}
Let $I_0\in\Ic$ be an inverse marginal, $\Xi\subseteq\Rb^M$ be a Borel set for some $M\in\Nb$, and $\{\nu_\thetav\}_{\thetav\in\Xi}$ be a family of measures on $\Rb_+$ satisfying \eqref{eq:ProbMeasure}.
Find a function $I_1:\Rb_+\times\Xi\to\Rb_+$ satisfying
\begin{align}\label{eq:IntEq}
\int_{\Rb_+} \rho I_1(y\rho, \thetav)\dd \nu_\thetav(\rho) = I_0(y);\quad y>0,\thetav\in\Xi,
\end{align}
such that $\int_{\Rb_+} I_1(y\rho, \thetav)\dd \nu_\thetav(\rho)<\infty$ and $I_1(\cdot,\thetav)\in\Ic$ for all $y>0$ and $\thetav\in\Xi$.\qed
\end{problem}\vspace{1ex}

%
%

\subsection{The deconvolution Problem}\label{sec:deconvolution}
We start our analysis of Problem \ref{prob:IntEq} by solving the integral equation \eqref{eq:IntEq}, in which $\nu_\thetav$ and $I_0$ are known and $I_1$ is unknown. By setting $y=\ee^s$, $\rho=\ee^{-t}$, $J_0(t):= I_0(\ee^t)$, and $J_1(t,\thetav) := I_1(\ee^t, \thetav)$, we transform \eqref{eq:IntEq} into
\begin{align}\label{eq:Deconvolution}
\int_\Rb J_1(s-t,\thetav)\dd\nut_\thetav(t) = J_0(s),\quad s\in\Rb, \thetav\in\Xi,
\end{align}
in which $\nut_\thetav$ is the probability measure given by
\begin{align}\label{eq:nut}
\nut_\thetav(B):=\int_{\ee^{-B}} \rho \dd \nu_\thetav(\rho);\quad B\in\Bc(\Rb),\thetav\in\Xi.
\end{align}
Note that $\nut_\thetav(\Rb)=1$ because of \eqref{eq:ProbMeasure}.
The left side of \eqref{eq:Deconvolution} is the convolution $J_1(\cdot,\thetav)\conv\nut_\thetav$. Thus, we obtain the following \emph{deconvolution problem}
\begin{align}\label{eq:ConvEq}
J_1(\cdot,\thetav)\conv \nut_\thetav = J_0;\quad \thetav\in \Xi,
\end{align}
in which $J_0$ and $\nut_\thetav$ are known and $J_1$ is unknown.

Deconvolution problems are, in general, difficult to solve. Their solution may not exists or may not be unique. The general approach for solving \eqref{eq:ConvEq} is to exploit the \emph{convolution theorem} which, loosely speaking, states that for ``sufficiently regular'' functions $f$ and $g$, one has $\F[f\conv g] = \F[f]\F[g]$, in which $\F$ is the Fourier transform $\F[f](s)= \int_\mathbb{R} e^{-\ii st} f(t)\dd t$, $s\in\Rb$. To formally solve \eqref{eq:ConvEq}, we take the Fourier transform of both sides and then apply the convolution theorem to obtain
\begin{align}
&\F[J_0]=\F\big[J_1(\cdot,\thetav)\conv\nut_\thetav\big]=\F\big[J_1(\cdot,\thetav)\big] \F[\nut_\thetav]\\
\label{eq:FormalSol}
&\quad\Longrightarrow\quad
J_1(\cdot,\thetav) = \F^{-1}\left[\frac{\F[J_0]}{\F[\nut_\thetav]}\right]=J_0\conv\F^{-1}\left[\frac{1}{\F[\nut_\thetav]}\right],
\end{align}
for $\thetav\in\Xi$, in which $\F^{-1}[g](t)=(2\pi)^{-1}\int_\Rb \ee^{\ii st}g(s)\dd s$ is the inverse Fourier transform. Since we have assumed $J_1(t,\thetav) = I_1(\ee^t, \thetav)$, we obtain that $I_1(y,\thetav)=J_1(\log y, \thetav)$, $(y,\thetav)\in\Rb_+\times\Xi$, satisfies \eqref{eq:IntEq}. With $I_1$ at hand, we can then check if the remaining requirements in Problem \ref{prob:IntEq} are satisfied. If so, we have found a solution. 

The heuristic argument represented by \eqref{eq:FormalSol} is flawed however. Firstly, it assumes that $J_0(t)=I_0(\ee^t)$, $t\in\Rb$, has a Fourier transform. This assumption fails even for the simple case of power utility $U(x)=\frac{x^{1-\gam}-1}{1-\gam}$, $x,\gam>0$. For this case, $I_0(y)=U'^{(-1)}(y)=y^{-1/\gam}$, $y>0$, and the improper integral $\int_\Rb \ee^{-\ii st}J_0(t)\dd t= \int_\Rb \ee^{-\frac{1}{\gam}t-\ii st}\dd t$ is divergent.
Secondly, the convolution theorem and the convolution operator on the left side of \eqref{eq:ConvEq} require that either $J_1(\cdot,\thetav)$ or $\nut_\thetav$ has a compact support, which is not true in general. In fact, $J_1(t,\thetav)= I_1(\ee^{t},\thetav)$, $t\in\Rb$, cannot have compact support because \eqref{eq:Ic} requires that $I_1(y,\thetav)>0$ for $y>0$. Thus, one could only assume that $\nut_\thetav$ has compact support. While such an assumption holds for some scenarios (say, the binomial market, see Subsection \ref{sec:binom}), it fails in other cases where $\supp(\rho_n|_{\Gv_n})$ is not compact. For instance, in the Black-Scholes model, $\rho_n|_{\Gv_n}$ has a log-normal distribution and, thus, $\supp(\nu_\thetav)=\Rb_+$ and is not compact (see Subsection \ref{sec:BS}).

Our next result, namely, Proposition \ref{prop:ExistenceUniquness} below, establishes the existence and uniqueness of the solution to the deconvolution problem \eqref{eq:Deconvolution} under additional regularity conditions on $\nu_\thetav$, $J_0$, and $J_1$. These conditions and the proof of Proposition \ref{prop:ExistenceUniquness} rely on certain facts from the theory of distributions (also known as generalized functions) and the Fourier transform for tempered distributions. For the sake of completeness and ease of reference, a brief review has been included in Appendix \ref{sec:FourierAnalysis}. Further details can be found in most texts on the Fourier analysis, for instance, \cite{Hormander1990}.

The following assumption is our main regularity assumption on the measures $\nu_\thetav$, $\thetav\in\Xi$. In its statement, $\Sc'$ is the space of tempered distributions (see Definition \ref{def:TempDist}), $\F$ denotes the Fourier transform on $\Sc'$ (see Definition \ref{def:Fourier_tempered}), and $\Cc^{\infty}$ denote the set of all complex-valued infinitely-differentiable functions with domain $\Rb$.

\begin{assumption}\label{assum:gam1gam2}
There exist constants $0<\gam_1\le\gam_2$ such that, for $\thetav\in\Xi$ and $k\in\{1,2\}$, the $\sig$-finite Borel measure $\mu_{\thetav,k}$ given by
\begin{align}\label{eq:mu_thetav}
\mu_{\thetav,k}(B) := \int_{\ee^{-B}} \rho^{1-\frac{1}{\gam_k}} \dd \nu_\thetav(\rho)=\int_B \ee^{\frac{t}{\gam_k}}\dd\nut_\thetav(t);\quad B\in\Bc(\Rb),
\end{align}
satisfy $\mu_{\thetav,k}\in\Sc'$ and $\F[\mu_{\thetav,k}]\in\Cc^\infty$. Here, $\nut_\thetav$ is given by \eqref{eq:nut}.\qed
\end{assumption}

Our main regularity conditions on $J_0(\cdot)$ and the solution $J_1(\cdot,\thetav)$, $\thetav\in\Xi$, is that they belong to the following set,
\begin{align}\label{eq:Jgam1gam2}
\Jc(\gam_1,\gam_2):=
\left\{ J\in L^1_\text{loc}:
t\mapsto\left(\ee^{\frac{1}{\gam_1}t}\1_{\{t<0\}}
+\ee^{\frac{1}{\gam_2}t}\1_{\{t\ge0\}}\right)|J(t)|\in\Sc'
\right\},
\end{align}
with $0<\gam_1\le\gam_2$ as in Assumption \ref{assum:gam1gam2}. Here, $L^1_\text{loc}$ denotes the set of real-valued functions that are integrable on compact subsets of $\Rb$.

The next result, which is our second main result, provides conditions for existence and uniqueness of a solution $J_1$ to the deconvolution problem \eqref{eq:Deconvolution} satisfying $J_1(\cdot,\thetav)\in\Jc(\gam_1,\gam_2)$, for all $\thetav\in\Xi$.

\begin{proposition}\label{prop:ExistenceUniquness}
Assume that $\{\nu_\thetav\}_{\thetav\in\Xi}$ satisfy Assumption \ref{assum:gam1gam2} for some constants $0<\gam_1\le\gam_2$ and let the Borel measures $\mu_{\thetav,k}$, $\thetav\in\Xi$, $k\in\{1,2\}$, be as in \eqref{eq:mu_thetav}. 
Let $J_0\in\Jc(\gam_1,\gam_2)$ (with $\Jc(\gam_1,\gam_2)$ as in \eqref{eq:Jgam1gam2}) and, for $t\in\Rb$, define $J_{0,1}(t):=J_0(t)\ee^{\frac{1}{\gam_1}t}\1_{\{t<0\}}$ and $J_{0,2}(t):=J_0(t)\ee^{\frac{1}{\gam_2}t}\1_{\{t\ge0\}}$.
Assume further that
the following conditions hold for all $\thetav\in\Xi$ and $k\in\{1,2\}$:\vspace{1ex}\\
\noindent$(i)$ $\F[J_{0,k}]/\F[\mu_{\thetav,k}]\in\Sc'$,\\
\noindent$(ii)$ $J_{1,k}(\cdot,\thetav):=\F^{-1}\big[\F[J_{0,k}]/\F[\mu_{\thetav,k}]\big]\in L^1_\text{loc}(\Rb)$, and\\
\noindent$(iii)$ $\int_{\Rb} |J_{1,k}(s-t,\thetav)|\dd\mu_{\thetav,k}(t)<+\infty$ for all $s\in\Rb$.\vspace{1ex}\\
Define $J_1(t,\thetav):=\ee^{-\frac{1}{\gam_1}t}J_{1,1}(t,\thetav) + \ee^{-\frac{1}{\gam_2}t}J_{1,2}(t,\thetav)$ for $(t,\thetav)\in\Rb\times\Xi$. Then, $J_1(\cdot,\thetav)\in\Jc(\gam_1,\gam_2)$, $\thetav\in\Xi$, and  $J_1$ is a solution of the deconvolution problem \eqref{eq:Deconvolution} with $\nut_\thetav$ as in \eqref{eq:nut}. Furthermore, for any $\thetav\in\Xi$, if $\F[\mu_{\thetav,k}](\xi)\ne0$ for all $(k,\xi)\in\{1,2\}\times\Rb$, $\Jt\in\Jc(\gam_1,\gam_2)$, and $\int_\Rb\ee^{\frac{s-t}{\gam_k}}\Jt(s-t)\dd\mu_{\thetav,k}(t) = J_{0,k}(s)$, $(k,s)\in\{1,2\}\times\Rb$, then $\Jt=J_1(\cdot,\thetav)$ almost everywhere on $\Rb$.\qed
\end{proposition}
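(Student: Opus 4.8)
The plan is to reduce the original integral equation \eqref{eq:Deconvolution} to two ordinary convolution equations on the real line by splitting $J_0$ and $J_1$ according to the weight appearing in the definition \eqref{eq:Jgam1gam2} of $\Jc(\gam_1,\gam_2)$, and then to solve each of these by the Fourier transform for tempered distributions. First I would observe that if $J_1(t,\thetav) = \ee^{-t/\gam_1}J_{1,1}(t,\thetav) + \ee^{-t/\gam_2}J_{1,2}(t,\thetav)$, then substituting into the convolution integral $\int_\Rb J_1(s-t,\thetav)\,\dd\nut_\thetav(t)$ and absorbing the exponential factors into the measure yields, after the change of variables implicit in \eqref{eq:mu_thetav},
\begin{align}
\int_\Rb J_1(s-t,\thetav)\,\dd\nut_\thetav(t) = \ee^{-\frac{s}{\gam_1}}\int_\Rb \ee^{\frac{s-t}{\gam_1}}J_{1,1}(s-t,\thetav)\,\dd\mu_{\thetav,1}(t) + \ee^{-\frac{s}{\gam_2}}\int_\Rb \ee^{\frac{s-t}{\gam_2}}J_{1,2}(s-t,\thetav)\,\dd\mu_{\thetav,2}(t).
\end{align}
Comparing with $J_0(s) = \ee^{-s/\gam_1}J_{0,1}(s) + \ee^{-s/\gam_2}J_{0,2}(s)$, I would argue that it suffices to arrange, for each $k\in\{1,2\}$, the identity $\int_\Rb \ee^{(s-t)/\gam_k}J_{1,k}(s-t,\thetav)\,\dd\mu_{\thetav,k}(t) = J_{0,k}(s)$; this is exactly a convolution equation $(\ee^{\cdot/\gam_k}J_{1,k}(\cdot,\thetav))\conv \mu_{\thetav,k} = J_{0,k}$. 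Here I would need to be a little careful to justify that matching the two pieces separately is legitimate — since $J_{0,1}$ is supported on $(-\infty,0)$ and $J_{0,2}$ on $[0,\infty)$, but the convolved quantities need not respect this decomposition, so I would instead simply verify that the proposed $J_1$ makes the full equation \eqref{eq:Deconvolution} hold by adding the two identities, which is cleaner.

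Next I would establish the two identities using the Fourier machinery recalled in Appendix \ref{sec:FourierAnalysis}. By Assumption \ref{assum:gam1gam2}, $\mu_{\thetav,k}\in\Sc'$ with $\F[\mu_{\thetav,k}]\in\Cc^\infty$; since $J_0\in\Jc(\gam_1,\gam_2)$, the functions $J_{0,k}$ satisfy $J_{0,k}\in\Sc'$, so $\F[J_{0,k}]$ makes sense. Condition $(i)$ guarantees $\F[J_{0,k}]/\F[\mu_{\thetav,k}]\in\Sc'$, condition $(ii)$ guarantees that its inverse Fourier transform $J_{1,k}(\cdot,\thetav)$ is a genuine locally integrable function, and condition $(iii)$ guarantees that the convolution $\ee^{\cdot/\gam_k}J_{1,k}(\cdot,\thetav)\conv\mu_{\thetav,k}$ is a well-defined (absolutely convergent) integral for every $s$. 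The convolution theorem for tempered distributions then gives $\F[(\ee^{\cdot/\gam_k}J_{1,k})\conv\mu_{\thetav,k}] = \F[\ee^{\cdot/\gam_k}J_{1,k}]\cdot\F[\mu_{\thetav,k}]$; but $\F[\ee^{\cdot/\gam_k}J_{1,k}(\cdot,\thetav)]$ is a shift of $\F[J_{1,k}(\cdot,\thetav)] = \F[J_{0,k}]/\F[\mu_{\thetav,k}]$, and absorbing that shift into the argument of $\F[\mu_{\thetav,k}]$ — here I must match exactly which complex shift of the Fourier variable the factor $\ee^{\cdot/\gam_k}$ produces and confirm it cancels against the corresponding shift hidden inside $\mu_{\thetav,k}$ versus $\nut_\thetav$ — the product telescopes to $\F[J_{0,k}]$, whence $(\ee^{\cdot/\gam_k}J_{1,k})\conv\mu_{\thetav,k} = J_{0,k}$ as tempered distributions; since both sides are $L^1_{\mathrm{loc}}$ functions by $(ii)$–$(iii)$, equality holds a.e., and by continuity of the convolution integral, everywhere. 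Then I would check $J_1(\cdot,\thetav)\in\Jc(\gam_1,\gam_2)$ directly: the weight $\ee^{t/\gam_1}\1_{t<0} + \ee^{t/\gam_2}\1_{t\ge0}$ times $|J_1(t,\thetav)|$ is dominated, up to constants, by $|J_{1,1}(t,\thetav)| + |J_{1,2}(t,\thetav)|$ on each half-line (using $\gam_1\le\gam_2$), and each $J_{1,k}(\cdot,\thetav)$ lies in $\Sc'$ by $(i)$ via the inverse transform, so the weighted function is in $\Sc'$.

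For uniqueness, I would suppose $\Jt\in\Jc(\gam_1,\gam_2)$ satisfies $\int_\Rb \ee^{(s-t)/\gam_k}\Jt(s-t)\,\dd\mu_{\thetav,k}(t) = J_{0,k}(s)$ for $k\in\{1,2\}$ and all $s$. Writing $\wt J_k := \ee^{\cdot/\gam_k}\Jt$, this says $\wt J_k\conv\mu_{\thetav,k} = J_{0,k}$; since $\Jt\in\Jc(\gam_1,\gam_2)$, $\wt J_k\in\Sc'$, so taking Fourier transforms gives $\F[\wt J_k]\cdot\F[\mu_{\thetav,k}] = \F[J_{0,k}]$ in $\Sc'$. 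Because $\F[\mu_{\thetav,k}]\in\Cc^\infty$ is nowhere zero on $\Rb$ by hypothesis, I would invoke the division lemma for tempered distributions (division by a smooth nonvanishing function with at most polynomial-growth reciprocal — a point I would need to confirm holds here, possibly by noting $\F[\mu_{\thetav,k}]$ is bounded below on compacta and controlling its decay, or else by restricting to the regime where this is automatic) to conclude $\F[\wt J_k] = \F[J_{0,k}]/\F[\mu_{\thetav,k}] = \F[J_{1,k}(\cdot,\thetav)]$, hence $\wt J_k = \ee^{\cdot/\gam_k}J_{1,k}(\cdot,\thetav)$ a.e., hence $\Jt = J_{1,k}(\cdot,\thetav)\ee^{-\cdot/\gam_k}$ a.e. for each $k$; combining the $k=1$ and $k=2$ representations on $(-\infty,0)$ and $[0,\infty)$ respectively recovers $\Jt = J_1(\cdot,\thetav)$ a.e. The main obstacle I anticipate is the bookkeeping of the complex exponential shifts in the Fourier variable induced by the multiplicative factors $\ee^{t/\gam_k}$ — i.e. making rigorous that $\F[\ee^{\cdot/\gam_k}f]$ is the analytic continuation / translate of $\F[f]$ and that this translate interacts correctly with $\F[\mu_{\thetav,k}]$ versus $\F[\nut_\thetav]$ — together with pinning down the exact hypotheses under which the tempered-distribution division lemma applies to $\F[\mu_{\thetav,k}]$; everything else is a routine assembly of the stated conditions $(i)$–$(iii)$.
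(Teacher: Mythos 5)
Your overall strategy is the paper's own (split $J_0$ and the candidate $J_1$ into the two exponentially weighted pieces, reduce to one convolution equation against $\mu_{\thetav,k}$ for each $k$, and solve by the Fourier transform on $\Sc'$), but two steps fail as written. First, the central substitution is mis-computed. With $J_1(u,\thetav)=\ee^{-u/\gam_1}J_{1,1}(u,\thetav)+\ee^{-u/\gam_2}J_{1,2}(u,\thetav)$ and $\dd\mu_{\thetav,k}(t)=\ee^{t/\gam_k}\dd\nut_\thetav(t)$, the factor $\ee^{-(s-t)/\gam_k}$ coming from $J_1$ combines with $\ee^{-t/\gam_k}$ from the change of measure to give exactly $\ee^{-s/\gam_k}$, so that
\begin{align}
\int_\Rb J_1(s-t,\thetav)\,\dd\nut_\thetav(t)
=\ee^{-\frac{s}{\gam_1}}\int_\Rb J_{1,1}(s-t,\thetav)\,\dd\mu_{\thetav,1}(t)
+\ee^{-\frac{s}{\gam_2}}\int_\Rb J_{1,2}(s-t,\thetav)\,\dd\mu_{\thetav,2}(t),
\end{align}
with no exponential left inside the integrals. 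Your displayed identity keeps an extra $\ee^{(s-t)/\gam_k}$ (you effectively substituted $J_{1,1}+J_{1,2}$ without the prefactors), and as a result the per-$k$ identity you set out to prove, $(\ee^{\cdot/\gam_k}J_{1,k})\conv\mu_{\thetav,k}=J_{0,k}$, is the wrong one for the $J_{1,k}$ defined in the statement and has no reason to hold; the identity actually needed is the plain $J_{1,k}\conv\mu_{\thetav,k}=J_{0,k}$, which is exactly what the definition $J_{1,k}=\F^{-1}\big[\F[J_{0,k}]/\F[\mu_{\thetav,k}]\big]$ delivers. The ``complex shift of the Fourier variable'' that you single out as the main obstacle, together with the claimed cancellation against a shift hidden in $\mu_{\thetav,k}$ versus $\nut_\thetav$, is a phantom created by this slip: in the correct setup there is no shift to track, and the telescoping you invoke is not valid. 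Second, you apply the convolution theorem directly to a convolution with $\mu_{\thetav,k}$, but Lemma \ref{lem:Conv} requires one factor to lie in $\Ec'$, and $\mu_{\thetav,k}$ need not have compact support (in the Black--Scholes case $\supp(\mu_{\thetav,k})=\Rb$). Conditions $(ii)$--$(iii)$ give absolute convergence of the convolution integral, but not the identity $\F[f\conv\mu_{\thetav,k}]=\F[f]\,\F[\mu_{\thetav,k}]$; the paper supplies this via a localization argument (truncate $\mu_{\thetav,k}$ to $[-a,a]$, apply the convolution theorem to the truncation, let $a\to\infty$ using $(iii)$), and some such argument is indispensable in your proof as well.

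On uniqueness your route is essentially the paper's (Fourier both sides and use that $\F[\mu_{\thetav,k}]$ is smooth and nowhere zero); note that no quantitative division lemma is needed, since $T\,\F[\mu_{\thetav,k}]=0$ with $\F[\mu_{\thetav,k}]\in\Cc^{\infty}$ nonvanishing forces $T=0$ by testing against $\varphi/\F[\mu_{\thetav,k}]$, which is how the paper argues with $T=\F[h_k]$ for a difference $h_k$. Two points in your version still need repair: the claim that $\ee^{\cdot/\gam_k}\Jt\in\Sc'$ follows from $\Jt\in\Jc(\gam_1,\gam_2)$ is not justified (for $k=1$ the weight in \eqref{eq:Jgam1gam2} is only $\ee^{t/\gam_2}$ on $[0,\infty)$, which is weaker than $\ee^{t/\gam_1}$ when $\gam_1<\gam_2$), and the final step does not close: from $\ee^{t/\gam_k}\Jt=J_{1,k}(\cdot,\thetav)$ a.e.\ for each $k$ one cannot recover $\Jt=J_1(\cdot,\thetav)$ by ``combining the two representations on the two half-lines,'' because $J_1$ carries both exponential pieces on all of $\Rb$ and $J_{1,1},J_{1,2}$ are not supported on half-lines; an argument working directly with the differences (as in the appendix) or a corrected combination is required there.
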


\begin{proof}
See Appendix \ref{app:ExistenceUniquness}.
\end{proof}

\begin{remark}
To motivate introducing $J_{0,k}, k\in\{1,2\}$ in Proposition \ref{prop:ExistenceUniquness}, consider the case that the initial inverse marginal is a (convex) combination of two CRRA inverse marginals, that is
\begin{align}
I_0(y) = \al y^{-\frac{1}{\gam_1}} + (1-\al) y^{-\frac{1}{\gam_2}}, \quad y>0,
\end{align}
in which $0\le \al\le 1$ and $0<\gam_1<\gam_2$ are constants. The integral equation \eqref{eq:IntEq} becomes
\begin{align}
\int_{\Rb_+} \rho I_1(y\rho, \thetav)\dd \nu_\thetav(\rho) = \al y^{-\frac{1}{\gam_1}} + (1-\al) y^{-\frac{1}{\gam_2}};\quad y>0,\thetav\in\Xi.
\end{align}
To solve this equation, we can exploit the fact that the integral equation is linear and try the ansatz $I_1 = \al I_{1,1} + (1-\al)I_{1,2}$. Indeed going down this path would lead to the approach used for the completely monotonic case in Section \ref{sec:CMIM}. Here, we will not pursue this argument because our purpose is to justify the approach taken in Proposition \ref{prop:ExistenceUniquness}, which is for the more general case in which the initial data $I_0$ is not necessarily completely monotonic.

Let us define $y=\ee^s$, $\rho=\ee^{-t}$, and $J_1(t,\thetav) := I_1(\ee^t, \thetav)$ to obtain
\begin{align}
(J_1(\cdot,\thetav)\conv\nut_\thetav)(s) = \int_\Rb J_1(s-t,\thetav)\dd \nut_\thetav(t) = \al \ee^{-\frac{1}{\gam_1}s} + (1-\al) \ee^{-\frac{1}{\gam_2}s}=:J_0(s);\quad s\in\Rb,\thetav\in\Xi,
\end{align}
with $\nut_\thetav$ as in \eqref{eq:nut}. To use the convolution theorem, we would like to apply the Fourier transform to $J_0$. $\F[J_0]$ is not defined since $J_0(t)$ behaves like $\ee^{-t/\gam_1}$ as $t\to -\infty$ and, thus, is not a tempered distribution. The general approach to fix this is to multiply $J_0$ by the exponential function $\ee^{t/\gam_1}$. Doing so, however, will spoil the asymptotic behavior on the other end (i.e. as $t\to+\infty$), since $\ee^{t/\gam_1}J_0(t)$ behave as $\ee^{\left(\frac{1}{\gam_1}-\frac{1}{\gam_2}\right)t}$ for $t\to+\infty$. To circumvent this, we can multiply by the function $\ee^{t/\gam_1}\Ib_{\{t<0\}}$. Doing so has one disadvantage, $\ee^{t/\gam_1}\Ib_{\{t<0\}}J_0(t)$ is zero for $t>0$. To preserve the function on the interval $(0,+\infty)$, we can instead multiply by $\ee^{t/\gam_1}\Ib_{\{t<0\}}+\ee^{t/\gam_2}\Ib_{\{t\ge 0\}}$. Note that $J_0(t)\left[\ee^{t/\gam_1}\Ib_{\{t<0\}}+\ee^{t/\gam_2}\Ib_{\{t\ge 0\}}\right]$ is now a bounded function and thus a tempered distribution. Following this approach yields the argument in the proof of Proposition \ref{prop:ExistenceUniquness} in Appendix \ref{app:ExistenceUniquness}.\qed
\end{remark}

We end this section by an example in which the solution of the deconvolution problem \eqref{eq:Deconvolution} is not unique.

\begin{example}\label{ex:AZZ}
The deconvolution problem \eqref{eq:Deconvolution} may have non-unique solutions. For instance, let $\nut_\thetav=\beta\del_{-\al}+(1-\beta)\del_0$ for constants $\al>0$ and $\beta\in(0,1)$. Then, \eqref{eq:Deconvolution} becomes the functional equation
\begin{align}\label{eq:Deconv_AZZ}
\beta J_1(s+\al)+(1-\beta) J_1(s)=J_0(s),\quad s\in\Rb.
\end{align}
Assume that $J_1$ is a solution of this equation, and define
\begin{align}\label{eq:Deconv_AZZ_sol}
\Jt(t):=J_1(t)+\left(\frac{1-\beta}{\beta}\right)^{t/\al}\psi\left(\frac{\pi t}{\al}\right),\quad t\in\Rb,
\end{align}
in which $\psi$ is an anti-periodic function satisfying $\psi(t+\pi)=-\psi(t)$, $t\in\Rb$. For instance, we may choose $\psi=M\sin(t)$ for a constant $M\ne0$.
For $s\in\Rb$, we have that
\begin{align}
&\beta\Jt(s+\al)+(1-\beta)\Jt(s)\\
&= \beta J_1(s+\al)+(1-\beta) J_1(s) + \beta\left(\frac{1-\beta}{\beta}\right)^{1+\frac{s}{\al}}\sin\left(\frac{\pi s}{\al} +\pi\right) + \beta\left(\frac{1-\beta}{\beta}\right)^{1+\frac{s}{\al}}\sin\left(\frac{\pi s}{\al}\right)\\
&= J_0(s).
\end{align}
Thus, the solution of \eqref{eq:Deconv_AZZ} is not unique.

Let us confirm that the uniqueness assertion in Proposition (i.e. its last statement) is consistent with this example. Assume that $0<\gam_1\le\gam_2$ are such that $\F[\mu_{\thetav,k}](\xi)\ne0$ for all $(k,\xi\in\{1,2\}\times\Rb$, in which $\mu_{\thetav,k}$ are given by \eqref{eq:mu_thetav}, namely,
\begin{align}
\mu_{\thetav,k} = \beta\ee^{-\frac{\al}{\gam_k}}\del_{-\al}+(1-\beta)\del_0.
\end{align}
Since $0<\beta<1$ and
\begin{align}\label{eq:Fmu_Azz}
\F[\mu_{\thetav,k}](\xi) = \F[\beta\ee^{-\frac{\al}{\gam_k}}\del_{-\al}+(1-\beta)\del_0] = \beta\ee^{-\frac{\al}{\gam_k} + \ii\al\xi} + 1-\beta, \quad \xi\in\Rb,
\end{align}
it follows that $\F[\mu_{\thetav,k}](\xi)\ne0$ for all $(k,\xi)\in\{1,2\}\times\Rb$ if and only if $\frac{1}{\gam_k}+\frac{1}{\al}\log\left(\frac{1-\beta}{\beta}\right)\ne 0$, $k\in\{1,2\}$. For such values of $\gam_1$ and $\gam_2$, we have that $\Jt\notin\Jc(\gam_1,\gam_2)$. Indeed, \eqref{eq:Deconv_AZZ_sol} yields that
\begin{align}
\ee^{\frac{t}{\gam_k}}\Jt(t) 
= \ee^{\frac{t}{\gam_k}}J_1(t) + \ee^{t\left[\frac{1}{\gam_k}+\frac{1}{\al}\log\left(\frac{1-\beta}{\beta}\right)\right]}
\psi\left(\frac{\pi t}{\al}\right),\quad t\in\Rb,k\in\{1,2\}.
\end{align}
Therefore, $t\mapsto\left(\ee^{\frac{1}{\gam_1}t}\1_{\{t<0\}}
+\ee^{\frac{1}{\gam_2}t}\1_{\{t\ge0\}}\right)|\Jt(t)|$ cannot be a tempered distribution since it has exponential growth as either $t\to+\infty$ or $t\to-\infty$ depending on the sign of $\frac{1}{\gam_k}+\frac{1}{\al}\log\left(\frac{1-\beta}{\beta}\right)$. In short, as long as we require $\frac{1}{\gam_k}+\frac{1}{\al}\log\left(\frac{1-\beta}{\beta}\right)\ne 0$, $k\in\{1,2\}$, then the solution of \eqref{eq:Deconv_AZZ} is unique in the set $J(\gam_1,\gam_2)$, as stated by Proposition \ref{prop:ExistenceUniquness}.

The only case that the non-unique solutions $\Jt$ given by \eqref{eq:Deconv_AZZ_sol} belong to the set $J(\gam_1,\gam_2)$ is when, for at least one $k'\in\{1,2\}$, we have that $\frac{1}{\gam_{k'}}+\frac{1}{\al}\log\left(\frac{1-\beta}{\beta}\right)= 0$. In this case, \eqref{eq:Fmu_Azz} yields that $F[\mu_{\thetav,k'}](\xi)=(1-\beta)(1+\ee^{\ii\al\xi})$, $\xi\in\Rb$. In particular, $F[\mu_{\thetav,k'}](\pm2m\pi/\al)=0$, $m\in\Nb$. Thus, at least one of the assumptions of Proposition \ref{prop:ExistenceUniquness} is not satisfied and, as expected, the proposition does not apply.\qed
\end{example}

%
%
\subsection{Completely monotonic inverse marginals}\label{sec:CMIM}

In the previous section, we focused on the integral equation \eqref{eq:IntEq} and derived rather technical existence and uniqueness conditions for its solution. In this section, we consider the more general Problem \ref{prob:IntEq}. However, we restrict our attention to  solutions of this problem within a special subclass of inverse marginal functions, namely, completely monotonic inverse marginal (CMIM) functions. Doing so enables us to provide more explicit solutions that are easier to interpret.

We start by defining CMIM functions. See \cite{Kallblad2020} and \cite{MostovyiSirbuZariphopoulou2020}, among others, for a more detailed discussion on CMIM functions and historical insights.

\begin{definition}\label{def:CMIM}
For a finite Borel measure $m$ with support in $\Rb_+$, a function $I:\Rb_+\to\Rb_+$ is a \emph{completely monotonic inverse marginal (CMIM)} function with risk-aversion measure $m$ if
\begin{align}\label{eq:CMIM}
I(y) = \int_{\Rb_+} y^{-\frac{1}{\gam}}\dd m(\gam);\quad y>0,
\end{align} 
in which it is assumed that the right side is absolutely integrable for all $y>0$. For any constants $0<\gam_1\le\gam_2$, we denote by $\CMIM(\gam_1,\gam_2)$ the set of all CMIM functions with a risk-aversion measure $m$ that has compact support in $(\gam_1,\gam_2)$, i.e. $\supp(m)\subset(\gam_1,\gam_2)$.\qed
\end{definition}

\begin{remark}
Note that our definition of CMIM functions is more restricted than the one in the literature (e.g. Definition 3.8 of \cite{Kallblad2020} and Definition 4.1 of \cite{MostovyiSirbuZariphopoulou2020}). In particular, we assume that the measure $m$  has compact support. This assumption is adapted to simplify the proof of the results that follow. It can be relaxed but at the expense of strengthening the assumptions on the measure $\nu_\thetav$. For instance, in the case where $\nu_\thetav$ is the log-normal density, \cite{Kallblad2020} solves the integral equation for general CMIM. Note also that any CMIM utility function that behave like a power utility function for very small and very large values of wealth are included in our definition of CMIM (the power utilities on the two end can be different).
\end{remark}

The following lemma provides a basic property of CMIM functions, namely, that every $\CMIM$ function is an inverse marginal function.
\begin{lemma}\label{lem:CMIM}
$\CMIM(\gam_1,\gam_2)\subset\Ic$ for all $0<\gam_1\le\gam_2$.\qed
\end{lemma}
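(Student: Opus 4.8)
The plan is to verify directly that a function $I\in\CMIM(\gam_1,\gam_2)$ satisfies the three defining properties of the set $\Ic$ in \eqref{eq:Ic}: that $I\in\Cc^1(\Rb_+)$ with $I'<0$, that $I(0+)=+\infty$, and that $I(+\infty)=0$. Write $I(y)=\int_{\Rb_+}y^{-1/\gam}\dd m(\gam)$, where $m$ is a finite Borel measure with $\supp(m)\subset(\gam_1,\gam_2)$. The key observation is that the integrand $g(y,\gam):=y^{-1/\gam}$ is, for $\gam$ ranging over the \emph{compact} set $\supp(m)$, uniformly well-behaved in $y$ on compact subsets of $\Rb_+$; this compactness (built into our definition of $\CMIM$) is exactly what makes the differentiation-under-the-integral and limit-interchange arguments routine.

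First I would establish smoothness and the sign of the derivative. Fix a compact interval $[a,b]\subset\Rb_+$. For $y\in[a,b]$ and $\gam\in[\gam_1,\gam_2]$, both $y^{-1/\gam}$ and its $y$-derivative $\partial_y(y^{-1/\gam})=-\tfrac{1}{\gam}y^{-1/\gam-1}$ are bounded by a constant depending only on $a,b,\gam_1,\gam_2$, hence $m$-integrable since $m$ is finite. By the standard theorem on differentiating under the integral sign, $I$ is $\Cc^1$ on $(a,b)$ — and since $[a,b]$ was arbitrary, $I\in\Cc^1(\Rb_+)$ — with
\begin{align}\label{eq:CMIM_deriv}
I'(y) = -\int_{\Rb_+}\frac{1}{\gam}\,y^{-\frac1\gam-1}\dd m(\gam) < 0,\quad y>0,
\end{align}
the strict inequality holding because the integrand is strictly positive and $m$ is not the zero measure (if $m\equiv0$ then $I\equiv0$, which is not a valid CMIM since the integral in \eqref{eq:CMIM} would then fail to define a function with the required positivity; alternatively one simply assumes $m$ nontrivial, as is implicit in $I:\Rb_+\to\Rb_+$).

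Next I would handle the two boundary limits. For $I(0+)=+\infty$: as $y\downarrow0$, for every $\gam\in\supp(m)\subset(\gam_1,\gam_2)$ we have $y^{-1/\gam}\to+\infty$ monotonically, so by the monotone convergence theorem $I(y)=\int y^{-1/\gam}\dd m(\gam)\to+\infty$ (here we use $m\big((\gam_1,\gam_2)\big)>0$). For $I(+\infty)=0$: as $y\to+\infty$, $y^{-1/\gam}\downarrow0$ for each $\gam$, and $y^{-1/\gam}\le 1$ for $y\ge1$, so dominated convergence (dominating function the constant $1$, which is $m$-integrable) gives $I(y)\to0$. This shows $I\in\Ic$. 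The inclusion being strict is immediate, e.g. $I(y)=y^{-1/\gam_0}$ for $\gam_0\notin(\gam_1,\gam_2)$ lies in $\Ic$ but not in $\CMIM(\gam_1,\gam_2)$; one can also note that any $\Ic$ function that is not completely monotone gives strictness. I do not anticipate a genuine obstacle here — the only subtlety is being careful that all the domination bounds are uniform over the compact $y$-intervals and over $\supp(m)$, which is precisely why the compact-support restriction in Definition \ref{def:CMIM} is convenient.
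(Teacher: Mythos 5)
Your proposal is correct and follows essentially the same route as the paper: verify the three defining properties of $\Ic$ directly, using the compact support of $m$ (equivalently the uniform power bounds $y^{-1/\gam_1}$, $y^{-1/\gam_2}$) to justify differentiation under the integral and the sign of $I'$. The only cosmetic difference is that you obtain the limits $I(0+)=+\infty$ and $I(+\infty)=0$ via monotone/dominated convergence, whereas the paper reads them off from the explicit sandwich bounds $y^{-1/\gam_2}\le I(y)/m\big((\gam_1,\gam_2)\big)\le y^{-1/\gam_1}$ for $0<y\le1$ and the reverse for $y\ge1$; both arguments are valid.
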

\begin{proof}
Let $I(y)=\int_{\Rb_+} y^{-\frac{1}{\gam}}\dd m(\gam)$, $y>0$, in which $m$ is a finite Borel measure with $\supp(m)\subseteq(\gam_1,\gam_2)$. For $\gam\in(\gam_1,\gam_2)$, we have that $y^{-1/\gam_1}\le y^{-1/\gam}\le y^{-1/\gam_2}$ for $y\ge1$ and $y^{-1/\gam_2}\le y^{-1/\gam}\le y^{-1/\gam_1}$ for $0<y\le1$. Therefore, the dominated convergence theorem yields that $I\in\Cc^1(\Rb_+)$ and that $I'(y)=-\frac{1}{\gam}\int_{\Rb_+} y^{-\frac{1+\gam}{\gam}}\dd m(\gam)<0$ for $y>0$. Furthermore, 
\begin{align}\label{eq:CMIM_bound1}
y^{-\frac{1}{\gam_1}}\le \frac{1}{m\big((\gam_1,\gam_2)\big)} I(y) \le y^{-\frac{1}{\gam_2}},\quad y\ge1,
\end{align}
and
\begin{align}\label{eq:CMIM_bound2}
y^{-\frac{1}{\gam_2}}\le \frac{1}{m\big((\gam_1,\gam_2)\big)} I(y) \le y^{-\frac{1}{\gam_1}},\quad 0<y\le1.
\end{align}
From \eqref{eq:Ic}, it then follows that $I\in\Ic$.
\end{proof}

Next, we state the third main result of our paper. It shows that, under a mild integrability condition on measure $\nu_\thetav$ (namely, \eqref{eq:CMIM_nu} below), if $I_0$ is a CMIM function, then there is a unique solution $I_1$ of Problem \eqref{prob:IntEq} such that $I_1(\cdot,\thetav)$, $\thetav\in\Xi$, is a CMIM function. Furthermore, $I_1$ is explicitly given by \eqref{eq:IntEq_CMIM_sol}.

\begin{theorem}\label{thm:CMIM}
In Problem \ref{prob:IntEq}, assume that there exist constants $0<\gam_1\le\gam_2$ such that
\begin{align}\label{eq:CMIM_nu}
\int_{\Rb_+} \left(\rho^{-\frac{1}{\gam_1}}+\rho^{1-\frac{1}{\gam_1}}+\rho^{1-\frac{1}{\gam_2}}\right)\dd\nu_\thetav(\rho)<+\infty,
\quad \thetav\in\Xi.
\end{align}
Assume further that $I_0\in\CMIM(\gam_1,\gam_2)$ and, in particular, that 
$I_0(y)=\int_{\gam_1}^{\gam_2} y^{-1/\gam}\dd m_0(\gam)$, $y>0$, for a finite Borel measure $m_0$ such that $\supp(m_0)\subset(\gam_1,\gam_2)$. Then,
\begin{align}\label{eq:IntEq_CMIM_sol}
I_1(y,\thetav) := \int_{\gam_1}^{\gam_2} y^{-\frac{1}{\gam}}\left(
\int_{\Rb_+} \rho^{1-\frac{1}{\gam}}\dd\nu_\thetav(\rho)
\right)^{-1}\dd m_0(\gam);\quad (y,\thetav)\in\Rb_+\times\Xi,
\end{align}
is the unique solution of Problem \ref{prob:IntEq} satisfying $I_1(\cdot,\thetav)\in\CMIM(\gam_1,\gam_2)$, $\thetav\in\Xi$.\qed
\end{theorem}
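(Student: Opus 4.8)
The plan is to verify directly that the candidate $I_1$ defined by \eqref{eq:IntEq_CMIM_sol} solves Problem \ref{prob:IntEq} and then to prove uniqueness within $\CMIM(\gam_1,\gam_2)$. First I would check that $I_1(\cdot,\thetav)$ is well-defined and lies in $\CMIM(\gam_1,\gam_2)$: the integrability condition \eqref{eq:CMIM_nu} guarantees that $\rho\mapsto\rho^{1-1/\gam}$ is $\nu_\thetav$-integrable for every $\gam\in[\gam_1,\gam_2]$ (by interpolating between the exponents $1-1/\gam_1$ and $1-1/\gam_2$), so the inner integral $c_\thetav(\gam):=\int_{\Rb_+}\rho^{1-1/\gam}\dd\nu_\thetav(\rho)$ is a finite, strictly positive, continuous function of $\gam$ on $[\gam_1,\gam_2]$, bounded away from $0$ and $\infty$. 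Hence $\dd m_1(\gam):=c_\thetav(\gam)^{-1}\dd m_0(\gam)$ is a finite Borel measure with $\supp(m_1)=\supp(m_0)\subset(\gam_1,\gam_2)$, so by definition $I_1(\cdot,\thetav)\in\CMIM(\gam_1,\gam_2)$, and by Lemma \ref{lem:CMIM} it belongs to $\Ic$.

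Next I would verify the integral equation \eqref{eq:IntEq}. Substituting \eqref{eq:IntEq_CMIM_sol} and using Tonelli's theorem (all integrands are nonnegative) to exchange the $\nu_\thetav$-integral with the $m_0$-integral gives
\begin{align}
\int_{\Rb_+}\rho\,I_1(y\rho,\thetav)\dd\nu_\thetav(\rho)
&=\int_{\gam_1}^{\gam_2} y^{-\frac1\gam}\,c_\thetav(\gam)^{-1}\left(\int_{\Rb_+}\rho\cdot\rho^{-\frac1\gam}\dd\nu_\thetav(\rho)\right)\dd m_0(\gam)\\
&=\int_{\gam_1}^{\gam_2} y^{-\frac1\gam}\dd m_0(\gam)=I_0(y),
\end{align}
since the parenthesized integral is exactly $c_\thetav(\gam)$. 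The finiteness requirement $\int_{\Rb_+}I_1(y\rho,\thetav)\dd\nu_\thetav(\rho)<\infty$ follows the same way: the exchange produces $\int_{\gam_1}^{\gam_2} y^{-1/\gam}c_\thetav(\gam)^{-1}\big(\int_{\Rb_+}\rho^{-1/\gam}\dd\nu_\thetav(\rho)\big)\dd m_0(\gam)$, and the inner integral is finite because $\rho^{-1/\gam}\le\rho^{-1/\gam_1}+\rho^{-1/\gam_2}\le \rho^{-1/\gam_1}+1$ combined with $\int\rho^{-1/\gam_1}\dd\nu_\thetav<\infty$ from \eqref{eq:CMIM_nu} (and $\nu_\thetav$ being a probability measure); uniform boundedness of $c_\thetav(\gam)^{-1}$ then gives a finite bound.

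For uniqueness, suppose $\tilde I_1(\cdot,\thetav)\in\CMIM(\gam_1,\gam_2)$ also solves \eqref{eq:IntEq}, with risk-aversion measure $\tilde m$, $\supp(\tilde m)\subset(\gam_1,\gam_2)$. Plugging its representation into \eqref{eq:IntEq} and exchanging integrals as above yields $\int_{\gam_1}^{\gam_2}y^{-1/\gam}c_\thetav(\gam)\dd\tilde m(\gam)=I_0(y)=\int_{\gam_1}^{\gam_2}y^{-1/\gam}\dd m_0(\gam)$ for all $y>0$. Writing $y=\ee^{-s}$ and $\gam^{-1}=u$, both sides become two-sided Laplace transforms of the pushforward measures of $c_\thetav\,\dd\tilde m$ and $\dd m_0$ under $\gam\mapsto1/\gam$, which are compactly supported in $(1/\gam_2,1/\gam_1)$; equality of these Laplace transforms on an interval of $s$ forces equality of the measures by injectivity of the Laplace transform on finite measures with common compact support. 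Hence $c_\thetav(\gam)\dd\tilde m(\gam)=\dd m_0(\gam)$, i.e. $\dd\tilde m=c_\thetav^{-1}\dd m_0=\dd m_1$, so $\tilde I_1=I_1$. The main obstacle is the bookkeeping for the uniqueness step — carefully justifying the Fubini/Tonelli exchange for the unknown solution $\tilde I_1$ (whose a priori integrability must be extracted from the hypotheses of Problem \ref{prob:IntEq}) and invoking injectivity of the relevant transform in the right function class; the existence half is a routine substitution once the integrability bookkeeping from \eqref{eq:CMIM_nu} is in place.
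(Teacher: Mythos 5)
Your proposal is correct and follows essentially the same route as the paper's proof: existence by substituting \eqref{eq:IntEq_CMIM_sol} into \eqref{eq:IntEq} with a Tonelli exchange (the factor $c_\thetav(\gam)=\int_{\Rb_+}\rho^{1-1/\gam}\dd\nu_\thetav(\rho)$ being bounded away from $0$ and $\infty$ via \eqref{eq:CMIM_nu}), and uniqueness by transferring the unknown solution's risk-aversion measure to a measure with density $c_\thetav$ and invoking injectivity of the Laplace--Stieltjes transform, exactly as in the paper. The only blemish is the intermediate inequality $\rho^{-1/\gam_1}+\rho^{-1/\gam_2}\le\rho^{-1/\gam_1}+1$, which fails for $\rho<1$; the intended conclusion $\rho^{-1/\gam}\le\rho^{-1/\gam_1}+1$ is nevertheless true by the case split $\rho<1$, $\rho\ge1$, so the argument stands.
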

\begin{proof}
See Appendix \ref{sec:CMIM_proof}.
\end{proof}

We end this section by providing existence and uniqueness conditions for PFPPs whose inverse marginals are CMIM functions. The result follows directly from combining Theorem \ref{thm:verif} and Theorem \ref{thm:CMIM}, therefore, we omit its proof.

\begin{theorem}\label{thm:PFPP_CMIM}
Consider the market setting of Section \ref{sec:Market} with Assumptions \ref{asum:SPD} and \ref{asum:rho} holding. Assume that there exists constants $0<\gam_1\le\gam_2$ such that
\begin{align}\label{eq:CMIM_PFPP_condition}
\Eb\left[\rho_n^{-\frac{1}{\gam_1}}+\rho_n^{1-\frac{1}{\gam_1}}+\rho_n^{1-\frac{1}{\gam_2}}\middle|\Gv_n=\gv\right]<+\infty;
\quad n\in\Nb,\gv\in\Gc_n,
\end{align}
and let $I_0(y):=\int_{\gam_1}^{\gam_2} y^{-1/\gam}\dd m_0(\gam)$, $y>0$, for a finite Borel measure $m_0$ such that $\supp(m_0)\subset(\gam_1,\gam_2)$. Define the finite Borel measures $m_{n,\gv}$, $n\in\Nb$, $\gv\in\Gc_n$, by the iteration
\begin{align}\label{eq:m_n_CMIM}
m_{n,\gv}(B) = \int_B \left(\Eb\left[\rho_n^{1-\frac{1}{\gam}}\middle|\Gv_n=\gv\right]\right)^{-1} \dd m_{n-1,\gv'}(\gam);\quad
B\in\Bc(\Rb), n\in\Nb, \gv\in\Gc_n,
\end{align}
in which $\gv=\gv'\oplus\thetav\in\Gc_n$ such that $\gv'\in\Gc_{n-1}$ and $\thetav\in\Xi_n$,\footnote{For $n=1$, \eqref{eq:m_n_CMIM} becomes $m_{1,\thetav}(B) = \int_B \left(\Eb\left[\rho_1^{1-\frac{1}{\gam}}\middle|\Thetav_1=\thetav\right]\right)^{-1} \dd m_0(\gam)$, $B\in\Bc(\Rb)$ and $\thetav\in\Xi_1$.}
and let
\begin{align}\label{eq:In_CMIM}
I_n(y,\gv) := \int_{\gam_1}^{\gam_2} y^{-\frac{1}{\gam}}\dd m_{n,\gv}(\gam);\quad n\in\Nb, \gv\in\Gc_n.
\end{align}
Then, the unique PFPP $\left\{U_n\right\}_{n\in\Nb_0}$ satisfying $I_0:=U_0^{\prime-1}$ and $U_n^{\prime-1}(\cdot,\gv)\in\CMIM(\gam_1,\gam_2)$, $n\in\Nb, \gv\in\Gc_n$, is given by
\begin{align}\label{eq:U_CMIM}
U_n(x,\gv) &:= U_{n-1}\left(I_{n-1}(1,\gv'),\gv'\right) +\Eb\left[\int_{I_n\left(\rho_n, \gv\right)}^x I_n^{-1}(\xi,\gv) \dd\xi\middle|\Gv_n=\gv\right];\quad
x\in\Rb_+, n\in\Nb, \gv\in\Gc_n,
\end{align}
in which $\gv=\gv'\oplus\thetav\in\Gc_n$ (such that $\gv'\in\Gc_{n-1}$ and $\thetav\in\Xi_n$).\footnote{For $n=1$, \eqref{eq:U_CMIM} becomes $U_1(x,\thetav) := U_0\big(I_0(1)\big) +\Eb\left[\int_{I_1\left(\rho_1,\thetav\right)}^x I_1^{-1}(\xi,\thetav) \dd\xi\middle|\Thetav_1=\thetav\right]$, $(x,\thetav)\in \Rb_+\times \Xi_1$.}
A corresponding corresponding optimal wealth process starting with initial wealth $x_0>0$ is given by
\begin{align}\label{eq:XStar_CMIM}
X^*_n:=I_n\left(\rho_n I_{n-1}^{-1}(X^*_{n-1},\Gv_{n-1}), \Gv_n\right);\quad n\in\Nb,
\end{align}
with $X^*_0=x_0$.\qed
\end{theorem}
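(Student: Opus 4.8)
The plan is to verify that the explicitly constructed objects in the statement satisfy the hypotheses of Theorem \ref{thm:verif}, from which the conclusion follows immediately. The only substantive work is checking that the sequence $\{I_n\}_{n\in\Nb_0}$ defined by \eqref{eq:m_n_CMIM}--\eqref{eq:In_CMIM} meets Conditions $(i)$ and $(ii)$ of Theorem \ref{thm:verif}, plus establishing the uniqueness clause. First I would argue by induction on $n$ that each $m_{n,\gv}$ is a well-defined finite Borel measure with $\supp(m_{n,\gv})\subset(\gam_1,\gam_2)$: this needs that the denominator $\Eb[\rho_n^{1-1/\gam}\mid\Gv_n=\gv]$ in \eqref{eq:m_n_CMIM} is finite and bounded away from zero uniformly in $\gam\in(\gam_1,\gam_2)$. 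Finiteness follows from \eqref{eq:CMIM_PFPP_condition} together with the interpolation bound $\rho^{1-1/\gam}\le \rho^{1-1/\gam_1}+\rho^{1-1/\gam_2}$ used in Lemma \ref{lem:CMIM}; positivity of the denominator follows since $\rho_n>0$ a.s. and Jensen/the second identity in \eqref{eq:ProbMeasure} give a lower bound. Hence $m_{n,\gv}\big((\gam_1,\gam_2)\big)<\infty$ and $I_n(\cdot,\gv)\in\CMIM(\gam_1,\gam_2)\subset\Ic$ by Lemma \ref{lem:CMIM}, which is the first half of Condition $(i)$.

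Next I would verify Condition $(ii)$, i.e. $\Eb[\rho_n I_n(y\rho_n,\gv)\mid\Gv_n=\gv]=I_{n-1}(y,\gv')$. Substituting \eqref{eq:In_CMIM}, applying Tonelli to interchange the expectation with the integral over $\gam$ (justified by nonnegativity), and using $\Eb[\rho_n(y\rho_n)^{-1/\gam}\mid\Gv_n=\gv]=y^{-1/\gam}\Eb[\rho_n^{1-1/\gam}\mid\Gv_n=\gv]$, the factor $\Eb[\rho_n^{1-1/\gam}\mid\Gv_n=\gv]$ cancels exactly the denominator in \eqref{eq:m_n_CMIM}, leaving $\int_{\gam_1}^{\gam_2}y^{-1/\gam}\dd m_{n-1,\gv'}(\gam)=I_{n-1}(y,\gv')$. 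The finiteness part of Condition $(i)$, namely $\Eb[I_n(y\rho_n,\gv)\mid\Gv_n=\gv]<\infty$, is handled the same way: it equals $\int_{\gam_1}^{\gam_2}y^{-1/\gam}\Eb[\rho_n^{-1/\gam}\mid\Gv_n=\gv](\Eb[\rho_n^{1-1/\gam}\mid\Gv_n=\gv])^{-1}\dd m_0(\gam)$ (after iterating the $m_{n,\gv}$ recursion, or more directly bounding $m_{n,\gv}$ by a constant times $m_0$), which is finite by \eqref{eq:CMIM_PFPP_condition} since $\rho_n^{-1/\gam}\le\rho_n^{-1/\gam_1}+1$ for $\gam\in(\gam_1,\gam_2)$, $\rho_n>0$. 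With Conditions $(i)$--$(ii)$ confirmed, Theorem \ref{thm:verif} delivers that $\{U_n\}_{n\in\Nb_0}$ from \eqref{eq:U_CMIM} is a PFPP with optimal wealth process \eqref{eq:XStar_CMIM}.

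Finally, for uniqueness within the CMIM class, I would proceed by induction on the evaluation period. Suppose $\{\Ut_n\}_{n\in\Nb_0}$ is another PFPP with $\Ut_0'^{-1}=I_0$ and $\Ut_n'^{-1}(\cdot,\gv)\in\CMIM(\gam_1,\gam_2)$ for all $n,\gv$. Writing $\It_n:=\Ut_n'^{-1}$, Condition $(ii)$ of Theorem \ref{thm:verif} (which any PFPP's inverse marginals must satisfy — this is essentially the content of the dynamic programming identity in Definition \ref{def:PFPP}, or can be extracted from the construction) forces $\Eb[\rho_n\It_n(y\rho_n,\gv)\mid\Gv_n=\gv]=\It_{n-1}(y,\gv')$ with $\It_{n-1}=I_{n-1}$ by the inductive hypothesis. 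Since $\It_n(\cdot,\gv)\in\CMIM(\gam_1,\gam_2)$, Theorem \ref{thm:CMIM} applies with initial data $I_{n-1}$ and asserts that the solution in the CMIM class is unique and equals \eqref{eq:IntEq_CMIM_sol}, which is exactly $I_n(\cdot,\gv)$ as given by \eqref{eq:m_n_CMIM}--\eqref{eq:In_CMIM}. Hence $\It_n=I_n$ for all $n$, and then $\Ut_n=U_n$ follows from the normalization in \eqref{eq:U_CMIM} (the additive constant is pinned down recursively). The main obstacle, and the point requiring the most care, is the uniqueness argument: one must be sure that \emph{every} PFPP — not just those arising from the construction in Theorem \ref{thm:verif} — has inverse marginals satisfying the integral equation \eqref{eq:IntEq} exactly (not merely the supermartingale inequality), so that Theorem \ref{thm:CMIM}'s uniqueness can be invoked at each step; this requires unpacking Definition \ref{def:PFPP}$(iii)$ together with the first-order conditions of the one-period Merton problem. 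Since the theorem statement says the proof is omitted as it "follows directly" from Theorems \ref{thm:verif} and \ref{thm:CMIM}, I would keep this verification brief, citing those two results for the two halves of the claim.
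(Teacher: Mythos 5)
Your proposal is correct and follows exactly the route the paper intends: the paper omits the proof, stating that it follows directly from combining Theorem \ref{thm:verif} (verification of Conditions $(i)$--$(ii)$, which your Tonelli computation and the bound $\rho^{1-1/\gam}\le\rho^{1-1/\gam_1}+\rho^{1-1/\gam_2}$ reproduce, essentially re-deriving parts of the proof of Theorem \ref{thm:CMIM}) with the uniqueness assertion of Theorem \ref{thm:CMIM} applied period by period. Your closing caveat — that full uniqueness among \emph{all} PFPPs with CMIM inverse marginals requires knowing that any such PFPP's inverse marginals satisfy the integral equation \eqref{eq:IntEq}, not just the construction of Theorem \ref{thm:verif} — is a fair observation, but the paper does not address it either, so your proposal is no less complete than the paper's own (omitted) argument.
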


Algorithm \ref{alg:PFPP_CMIM} provides an investment policy within the framework of a PFPP whose inverse marginals are CMIM functions. The algorithm has the same general structure as Algorithm \ref{alg:PFPP} and the discussion at the end of Section \ref{sec:PFPP_construction} still applies. Since the inverse marginals are assumed to be completely monotonic, we can exploit Theorem \ref{thm:CMIM} to make Step 2 more explicit than its counterpart in Algorithm \ref{alg:PFPP}.  

%
%
\begin{algorithm}[t]
\caption{Investment policy according to a PFPP with CMIM functions}\label{alg:PFPP_CMIM}
\begin{algorithmic}
\Statex
\Require{$0<\gam_1\le\gam_2$ satisfying \eqref{eq:CMIM_PFPP_condition}. Initial wealth $x_0>0$.}
\Require{Initial risk-aversion measure $m_0$ satisfying $\supp(m_0)\subset(\gam_1,\gam_2)$.}
\State $\gv\gets[~]$, $X^*_0\gets x_0$, $I_0(y)\gets\int_{\gam_1}^{\gam_2} y^{-1/\gam}\dd m_0(\gam)$.
\For{$n=0,1,\dots$}
\State \textbf{Step 1:} Observe $\Thetav_{n+1}$. Set $\gv\gets\gv\oplus\Thetav_{n+1}$ and $\nu\gets$ the distribution of $\rho_{n+1}|_{\Gv_{n+1}=\gv}$.\vspace{1em}
\State \textbf{Step 2:} $I_{n+1}(y)\gets\int_{\gam_1}^{\gam_2} y^{-1/\gam}\dd m_{n+1}(\gam)$ in which $m_{n+1}$ is a measure equivalent to $m_n$
\\
\hspace{5.2em} with the Radon–Nikodym derivative $\frac{\dd m_{n+1}}{\dd m_n}(\gam)=\left(
\int_{\Rb_+} \rho^{1-\frac{1}{\gam}}\dd\nu(\rho)\right)^{-1}$ for $\gam\in(\gam_1,\gam_2)$.\vspace{1em}
\State \textbf{Step 3:} Starting with wealth $X^*_n$, invest over time period $[n,n+1]$
to replicate\\
\hspace{5.2em}  the payoff $X^*_{n+1}:= I_{n+1}\big(\rho_{n+1}I_n^{-1}(X^*_n)\big)$ at $n+1$. This is possible since\\
\hspace{5.2em} the market is complete and $\Eb[Z_{n+1}X^*_{n+1}|\Fc_n]=Z_nX^*_n$.
\EndFor
\end{algorithmic}
\end{algorithm}

%
%

\section{Examples}\label{sec:Examples}
In our last section, we apply the results of Sections \ref{sec:PFPP} and \ref{sec:ConvEq} in two concrete examples. The first one considers the binomial market of Example \ref{ex:Binom}. Existence and properties of PFPPs in the binomial market have been extensively studied in \cite{AZZ20}, \cite{StrubZhou2021}, and \cite{LiangStrubWang2021}, and we have included this example for comparison with our more general results. In the second example, we construct PFPPs in the generalized Black-Scholes market of Example \ref{ex:BS} which, to the best of our knowledge, is a new result. 

%
%
\subsection{PFPPs in a generalized binomial model}\label{sec:binom}

We start by adapting the general notations used in Sections \ref{sec:PFPP} and \ref{sec:ConvEq} to the binomial market setting of Example \ref{ex:Binom} and formulate Problem \ref{prob:IntEq}. We then focus on the solution of the integral equation \eqref{eq:IntEq} by applying Proposition \ref{prop:ExistenceUniquness}. Finally, by applying Theorems \ref{thm:CMIM} and \ref{thm:PFPP_CMIM} to the binomial market, we provide a construction procedure for PFPPs whose inverse marginal functions are completely monotonic.

As discussed in Remark \ref{rem:Binom_BS_rho}, for this model $\Thetav_n=\{(u_m,d_m,p_m)\}_{m=1+(n-1)N}^{nN}$ and $\Gv_n:=(\Thetav_1,\dots,\Thetav_n)=\{(u_m,d_m,p_m)\}_{m=1}^{nN}$ for $n\in\Nb$. The assumptions on $(u_n,d_n,p_n)$ in Example \ref{ex:Binom} yield that
\begin{align}\label{eq:Xi_n_binom}
\Xi_n:=\supp(\Thetav_n)=\left\{\{(u_m,d_m,p_m)\}_{m=1+(n-1)N}^{nN}:u_m>1,d_m,p_m\in(0,1)\right\}
\intertext{and}
\label{eq:Gc_n_binom}
\Gc_n:=\supp(\Gv_n) = \left\{\{(u_m,d_m,p_m)\}_{m=1}^{nN}:u_m>1,d_m,p_m\in(0,1)\right\},
\end{align}
in which we have abused the notation by using $(u_m,d_m,p_m)$ for the values taken by the random variables $(u_m,d_m,p_m)$. 
Defined the index set
\begin{align}\label{eq:An}
A_n:=\{m\in\Nb: 1+(n-1)N\le m\le nN\};\quad n\in\Nb,
\end{align}
and recall that $q_m:=(1-d_m)/(u_m-d_m)$, $m\in\Nb$, is the conditional risk-neutral probability of upward jump in period $\left[\frac{m-1}{N}, \frac{m}{N}\right]$. Let\footnote{Here, we have abused the notation since we have defined earlier $\rho_n:=Z_n/Z_{n-1}$.}
\begin{align}
\label{eq:rho_n_binom}
\rho_n(\thetav,S) := \prod_{m\in S} \frac{q_m}{p_m} \prod_{m'\in A_n\backslash S} \frac{1-q_{m'}}{1-p_{m'}},
\end{align}
for $n\in\Nb$, $\thetav=\{(u_m,d_m,p_m)\}_{m\in A_n}\in\Xi_n$, and $S\subseteq A_n$.
In light of \eqref{eq:Rho_binom}, $\rho_n(\thetav,S)$ is the value of $\rho_n:=Z_n/Z_{n-1}$ assuming that $\thetav=\{(u_m,d_m,p_m)\}_{m\in A_n}$ are the market parameters for time period $[n-1,n]$ (i.e. $\Thetav_n=\thetav$), that for all $m\in S$ the price has jumped up in the period $\left[\frac{m-1}{N}, \frac{m}{N}\right]$ (i.e. $B_m=1$), and that for all $m'\in A_n\backslash S$ the prices has jumped down in period $\left[\frac{m'-1}{N}, \frac{m'}{N}\right]$ (i.e. $B_{m'}=0$).
Finally, for $n\in\Nb$, $\thetav=\{(u_m,d_m,p_m)\}_{m\in A_n}\in\Xi_n$, and $S\subseteq A_n$, we define
\begin{align}\label{eq:pi_binom}
\pi_n(\thetav,S)&:=\Pb\Big(\rho_n=\rho_n(\thetav, S)\Big|\Thetav_n=\thetav\Big)
=\prod_{m\in S} p_m \prod_{m'\in A_n\backslash S}(1-p_{m'}),
\end{align}
in which we have used \eqref{eq:Binom_p} for the last step.

Let us first consider Problem \ref{prob:IntEq}. By ignoring notational dependence on $n$ and $(\thetav_1,\dots,\thetav_{n-1})$,  let $\Xi := \Xi_n$, $A:=A_n$, $\rho(\cdot,\cdot)=\rho_n(\cdot,\cdot)$, and $\pi(\cdot,\cdot)=\pi_n(\cdot,\cdot)$. From \eqref{eq:Rho_binom}, it follows $\rho_n|_{\Gv_n}=\rho_n|_{\Thetav_n}$. From \eqref{eq:nu},  the family of probability measures $\{\nu_\thetav\}_{\thetav\in\Xi}$ have the following representation
\begin{align}\label{eq:nu_binom}
\nu_\thetav(B) = \sum_{S\subseteq A} \pi(\thetav,S) \del_{\rho(\thetav,S)}(B);\quad (\thetav,B)\in\Xi_n\times\Bc(\Rb),
\end{align}
in which $\del_a$ is the Dirac measure concentrated at $a\in\Rb$. In particular, $\nu_\thetav$  has finite support (specifically, the number of elements of $\supp(\nu_\thetav)$ is at most $2^N$ which is the number of subsets of $A$).

With the above notations, we find the following more explicit form of Problem \eqref{prob:IntEq} in the binomial setting.
\begin{problem}\label{prob:IntEq_binom}
Given an $I_0\in\Ic$, find an $I_1:\Rb_+\times\Xi\to\Rb_+$ such that $I_1(\cdot,\thetav)\in\Ic$ and
\begin{align}\label{eq:IntEq_binom}
\sum_{S\subseteq A} \pi(\thetav,S)\rho(\thetav,S) I_1\big(y\rho(\thetav,S), \thetav\big) = I_0(y),
\end{align}
for all $y>0$ and $\thetav\in\Xi$.\qed
\end{problem}

Note that the third requirement of Problem \ref{prob:IntEq} (i.e. $\int_{\Rb_+} I_1(y\rho, \thetav)\nu_\thetav(\dd \rho)<\infty$) is automatically satisfied since 
$\int_{\Rb_+} I_1(y\rho, \thetav)\nu_\thetav(\dd \rho)=
\sum_{S\subseteq A_n} p(\thetav,S)I_1\big(y\rho(\thetav,S), \thetav\big)$ and the summation on the right side has finite number of terms.

\begin{remark}
By setting $N=1$, \eqref{eq:IntEq_binom} becomes the functional equation in \cite{AZZ20} (that is, equation (5.4) on page 335 therein). More generally, \eqref{eq:IntEq_binom} is equivalent to the functional equation in \cite{LiangStrubWang2021} (see, equation (8) on page 18 therein).\qed
\end{remark}

Next, we show that Assumption \ref{assum:gam1gam2} is true in the binomial setting. For arbitrary constants $0<\gam_1\le\gam_2$ and for $\thetav\in\Xi$, define the measure $\mu_{\thetav,k}(B) := \int_{\ee^{-B}} \rho^{1-\frac{1}{\gam_k}} \dd \nu_\thetav(\rho)$, $B\in\Bc(\Rb)$, $k\in\{1,2\}$. From \eqref{eq:nu_binom}, we obtain that
\begin{align}\label{eq:mu_binom}
\mu_{\thetav,k}  
=\sum_{S\subseteq A} 
\big(\rho(\thetav,S)\big)^{1-\frac{1}{\gam_k}}
\pi(\thetav,S) 
\del_{-\log\rho(\thetav,S)};\quad k\in\{1,2\}.
\end{align}
Since there are only finite number of terms in the sum on the right side, we have $\mu_{\thetav,k}\in\Ec'\subset\Sc'$ which, in turn, yields that $\F[\mu_{\thetav,k}]\in\Cc^{\infty}$ by Lemma \ref{lem:FourierLaplace}. Indeed, we can explicitly calculate
\begin{align}\label{eq:Fmu_binom}
\F[\mu_{\thetav,k}](\xi)
=\sum_{S\subseteq A} 
\big(\rho(\thetav,S)\big)^{1-\frac{1}{\gam_k}}
\pi(\thetav,S)
\ee^{\ii\log\big(\rho(\thetav,S)\big)\xi};\quad \xi\in\Rb.
\end{align}
We have shown that Assumption \ref{assum:gam1gam2} holds in the binomial market for any choice of $0<\gam_1\le\gam_2$.

Next, we analyze the functional equation \eqref{eq:IntEq_binom}. By setting $y=\ee^s$, $J_0(t):= I_0(\ee^t)$, and $J_1(t,\thetav) := I_1(\ee^t, \thetav)$, we transform \eqref{eq:IntEq_binom} into the following equation,
\begin{align}\label{eq:Deconvolution_binom}
\sum_{S\subseteq A} \pi(\thetav,S)\rho(\thetav,S) J_1\big(s-\log\big(\rho(\thetav,S)\big), \thetav\big) = J_0(s),\quad s\in\Rb, \thetav\in\Xi,
\end{align}
which is the convolution equation \eqref{eq:Deconvolution} in the binomial setting. Proposition \ref{prop:ExistenceUniquness} then yields the following result regarding the solution of \eqref{eq:Deconvolution_binom}.

\begin{corollary}
Assume that there exist constants $0<\gam_1\le\gam_2$ such that
\begin{align}\label{eq:Fmu_binom_condition}
\muh_{\thetav,k}(\xi):=\sum_{S\subseteq A} 
\big(\rho(\thetav,S)\big)^{1-\frac{1}{\gam_k}}
\pi(\thetav,S)
\ee^{\ii\log\big(\rho(\thetav,S)\big)\xi}\ne 0,\quad \xi\in\Rb,k\in\{1,2\},
\end{align}
and that $J_0\in\Jc(\gam_1,\gam_2)$ (with $\Jc(\gam_1,\gam_2)$ as in \eqref{eq:Jgam1gam2}). Assume further that
\begin{align}\label{eq:J1k_binom}
J_{1,k}(\cdot,\thetav):=
\F^{-1}\left[\frac{\F[J_{0,k}]}{\muh_{\thetav,k}}\right]\in L^1_\text{loc}(\Rb);
\quad \thetav\in\Xi, k\in\{1,2\},
\end{align}
in which $J_{0,1}(t):=J_0(t)\ee^{\frac{1}{\gam_1}t}\1_{\{t<0\}}$ and $J_{0,2}(t):=J_0(t)\ee^{\frac{1}{\gam_2}t}\1_{\{t\ge0\}}$ for $t\in\Rb$. Then, a solution of the functional equation \eqref{eq:Deconvolution_binom} is given by
\begin{align}
J_1(t,\thetav):=\ee^{-\frac{1}{\gam_1}t}J_{1,1}(t,\thetav) + \ee^{-\frac{1}{\gam_2}t}J_{1,2}(t,\thetav);\quad
(t,\thetav)\in\Rb\times\Xi.
\end{align}
Furthermore, $J_1(\cdot,\thetav)\in\Jc(\gam_1,\gam_2)$, $\thetav\in\Xi$.\qed
\end{corollary}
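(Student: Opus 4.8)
The plan is to obtain this corollary as a direct specialization of Proposition~\ref{prop:ExistenceUniquness} to the finitely supported measures \eqref{eq:nu_binom}; accordingly, the work is to check that the proposition's hypotheses hold in the binomial setting and then rewrite its conclusion. First I would record that Assumption~\ref{assum:gam1gam2} is satisfied here for \emph{every} pair $0<\gam_1\le\gam_2$: this was already observed in the paragraph preceding the corollary, since by \eqref{eq:mu_binom} the measure $\mu_{\thetav,k}$ is a finite linear combination of Dirac masses, hence $\mu_{\thetav,k}\in\Ec'\subset\Sc'$, and $\F[\mu_{\thetav,k}]$ is the finite exponential sum $\muh_{\thetav,k}$ of \eqref{eq:Fmu_binom}, which lies in $\Cc^{\infty}$ by Lemma~\ref{lem:FourierLaplace}. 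The identity $\F[\mu_{\thetav,k}]=\muh_{\thetav,k}$ is what lets conditions $(i)$ and $(ii)$ of Proposition~\ref{prop:ExistenceUniquness} be read in the form they appear in the corollary.

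Next I would match the remaining hypotheses. The requirement $J_0\in\Jc(\gam_1,\gam_2)$ is assumed outright. Conditions $(i)$ and $(ii)$ of the proposition are exactly the content of the corollary's standing assumption \eqref{eq:J1k_binom}: for the right-hand side of \eqref{eq:J1k_binom} to make sense, one first needs the quotient $\F[J_{0,k}]/\muh_{\thetav,k}$ to be a well-defined tempered distribution (the nonvanishing hypothesis \eqref{eq:Fmu_binom_condition} guarantees $1/\muh_{\thetav,k}\in\Cc^{\infty}$, so the product is at least in $\mathcal{D}'$, and its being in $\Sc'$ is precisely condition $(i)$), and \eqref{eq:J1k_binom} then additionally demands that its inverse Fourier transform be locally integrable, which is condition $(ii)$. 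Finally, condition $(iii)$, $\int_\Rb|J_{1,k}(s-t,\thetav)|\dd\mu_{\thetav,k}(t)<+\infty$, is immediate in the binomial case: $\supp(\mu_{\thetav,k})=\{-\log\rho(\thetav,S):S\subseteq A\}$ is finite, so the integral is a finite sum of values of the locally integrable function $J_{1,k}(\cdot,\thetav)$ and is finite for Lebesgue-almost every $s$ — the only subtlety being that $J_{1,k}(\cdot,\thetav)$ is a priori only an $L^1_\text{loc}$-class, so the ``for all $s$'' in the proposition should be read as ``for a.e.\ $s$'', which is what its proof uses.

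With all hypotheses of Proposition~\ref{prop:ExistenceUniquness} in force, I would invoke it to conclude that $J_1(t,\thetav)=\ee^{-t/\gam_1}J_{1,1}(t,\thetav)+\ee^{-t/\gam_2}J_{1,2}(t,\thetav)$ lies in $\Jc(\gam_1,\gam_2)$ for every $\thetav\in\Xi$ and solves the deconvolution problem \eqref{eq:Deconvolution} with $\nut_\thetav$ as in \eqref{eq:nut}. To close, I would substitute the atomic representation \eqref{eq:nu_binom} of $\nu_\thetav$ into \eqref{eq:nut}, which shows that $\nut_\thetav$ is the finite atomic measure obtained by pushing $\rho\,\dd\nu_\thetav(\rho)$ forward under $\rho\mapsto-\log\rho$, so that \eqref{eq:Deconvolution} is precisely the functional equation \eqref{eq:Deconvolution_binom}. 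I do not expect a substantial obstacle here — essentially everything difficult is already contained in Proposition~\ref{prop:ExistenceUniquness} — but the points that need a little care are (a) confirming that conditions $(i)$ and $(ii)$ of the proposition are indeed packaged inside \eqref{eq:J1k_binom} together with \eqref{eq:Fmu_binom_condition}, and (b) the measure-theoretic reading of condition $(iii)$ when $J_{1,k}(\cdot,\thetav)$ is only defined almost everywhere.
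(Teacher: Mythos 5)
Your proposal is correct and follows essentially the same route as the paper: verify Assumption \ref{assum:gam1gam2} via \eqref{eq:mu_binom}, identify $\F[\mu_{\thetav,k}]=\muh_{\thetav,k}$, match conditions $(i)$--$(iii)$ of Proposition \ref{prop:ExistenceUniquness} (with $(iii)$ trivial by finiteness of $\supp(\mu_{\thetav,k})$), and invoke the proposition. The only cosmetic difference is that you read condition $(i)$ as implicit in the well-posedness of \eqref{eq:J1k_binom}, whereas the paper deduces it from \eqref{eq:Fmu_binom_condition}; this does not change the argument.
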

\begin{proof}
We have already checked that Assumption \ref{assum:gam1gam2} holds for any choice of $0<\gam_1\le\gam_2$. By \eqref{eq:Fmu_binom} $\F[\mu_{\thetav,k}]=\muh_{\thetav,k}$ and, thus, \eqref{eq:Fmu_binom_condition} yields that $1/\F[\mu_{\thetav,k}]\in\Cc^\infty$. Therefore, Condition (i) of Proposition \ref{prop:ExistenceUniquness} is satisfied. Condition (ii) of Proposition \ref{prop:ExistenceUniquness} is equivalent to \eqref{eq:J1k_binom}. Condition (iii) is also satisfied since the measure $\mu_{\thetav,k}$ in \eqref{eq:mu_binom} has finite support. The results then directly follows from Proposition \ref{prop:ExistenceUniquness}.
\end{proof}

Next, let us consider the entirety of Problem \ref{prob:IntEq_binom} but restrict the solution to CMIM functions, as we did in Subsection \ref{sec:CMIM}. From Theorem \ref{thm:CMIM}, we obtain the following result.

\begin{corollary}\label{cor:CMIM_binom}
For arbitrary constants $0<\gam_1\le\gam_2$, assume that $I_0\in\CMIM(\gam_1,\gam_2)$ (see Definition \ref{def:CMIM}). In particular, 
$I_0(y)=\int_{\gam_1}^{\gam_2} y^{-1/\gam}\dd m_0(\gam)$, $y>0$, for a finite Borel measure $m_0$ such that $\supp(m_0)\subset(\gam_1,\gam_2)$. Then,
\begin{align}\label{eq:IntEq_CMIM_sol_binom}
I_1(y,\thetav) := \int_{\gam_1}^{\gam_2} y^{-\frac{1}{\gam}}\left(
\sum_{S\subseteq A} \pi(\thetav,S)\rho(\thetav,S)^{1-\frac{1}{\gam}}
\right)^{-1}\dd m_0(\gam);\quad (y,\thetav)\in\Rb_+\times\Xi,
\end{align}
is the unique solution of Problem \ref{prob:IntEq} satisfying $I_1(\cdot,\thetav)\in\CMIM(\gam_1,\gam_2)$, $\thetav\in\Xi$.\qed
\end{corollary}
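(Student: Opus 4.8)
The plan is to obtain Corollary~\ref{cor:CMIM_binom} as a direct specialization of Theorem~\ref{thm:CMIM} to the binomial market of Example~\ref{ex:Binom}, so that the work reduces to checking the hypotheses of that theorem and then evaluating the inner integral against $\nu_\thetav$ using its explicit finite‑support representation \eqref{eq:nu_binom}. The first step is to verify the integrability condition \eqref{eq:CMIM_nu}. By \eqref{eq:nu_binom}, in the binomial setting $\nu_\thetav = \sum_{S\subseteq A}\pi(\thetav,S)\,\del_{\rho(\thetav,S)}$ is supported on at most $2^N$ atoms. Since $u_m>1$ and $d_m,p_m\in(0,1)$ for all $m\in A$, each weight $\pi(\thetav,S)$ is a finite positive number (a product of terms in $(0,1)$) and each atom $\rho(\thetav,S)$, being a finite product of the strictly positive ratios $q_m/p_m$ and $(1-q_{m})/(1-p_{m})$, is a strictly positive real. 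Hence for every real exponent $\al$ one has $\int_{\Rb_+}\rho^{\al}\dd\nu_\thetav(\rho)=\sum_{S\subseteq A}\pi(\thetav,S)\,\rho(\thetav,S)^{\al}<+\infty$, being a finite sum of finite terms; taking $\al\in\{-1/\gam_1,\,1-1/\gam_1,\,1-1/\gam_2\}$ gives \eqref{eq:CMIM_nu} for every $\thetav\in\Xi$ with the prescribed $0<\gam_1\le\gam_2$.

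With $I_0\in\CMIM(\gam_1,\gam_2)$ and $I_0(y)=\int_{\gam_1}^{\gam_2}y^{-1/\gam}\dd m_0(\gam)$ given by hypothesis, Theorem~\ref{thm:CMIM} now applies and yields that the unique solution of Problem~\ref{prob:IntEq} with $I_1(\cdot,\thetav)\in\CMIM(\gam_1,\gam_2)$ for all $\thetav\in\Xi$ is the function \eqref{eq:IntEq_CMIM_sol}. Substituting the finite‑support representation of $\nu_\thetav$ into the inner integral there, namely $\int_{\Rb_+}\rho^{1-1/\gam}\dd\nu_\thetav(\rho)=\sum_{S\subseteq A}\pi(\thetav,S)\,\rho(\thetav,S)^{1-1/\gam}$, turns \eqref{eq:IntEq_CMIM_sol} into exactly \eqref{eq:IntEq_CMIM_sol_binom}, and uniqueness within $\CMIM(\gam_1,\gam_2)$ is inherited verbatim from Theorem~\ref{thm:CMIM}.

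There is essentially no obstacle beyond bookkeeping. The only point that deserves a line of care is that the reciprocal $\big(\sum_{S\subseteq A}\pi(\thetav,S)\,\rho(\thetav,S)^{1-1/\gam}\big)^{-1}$ is well defined and finite for each $\gam\in(\gam_1,\gam_2)$: the sum is strictly positive (all $\pi(\thetav,S)>0$ and all atoms $\rho(\thetav,S)>0$) and finite, and $\gam\mapsto\sum_{S}\pi(\thetav,S)\,\rho(\thetav,S)^{1-1/\gam}$ is continuous on $(\gam_1,\gam_2)$, so the outer integral of \eqref{eq:IntEq_CMIM_sol_binom} against $m_0$ is well posed. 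Consequently, the corollary is a short verification once Theorem~\ref{thm:CMIM} is in hand, and no separate argument for membership in $\Ic$ or for the second integrability requirement of Problem~\ref{prob:IntEq} is needed, both being absorbed into the conclusion of Theorem~\ref{thm:CMIM}.
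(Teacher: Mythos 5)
Your proposal is correct and follows essentially the same route as the paper: verify the integrability condition \eqref{eq:CMIM_nu} by writing the integral as a finite sum over the atoms of $\nu_\thetav$ given in \eqref{eq:nu_binom}, and then invoke Theorem \ref{thm:CMIM}, with \eqref{eq:IntEq_CMIM_sol} specializing to \eqref{eq:IntEq_CMIM_sol_binom}. The extra remarks on positivity of the atoms and weights are fine but not needed beyond what the paper records.
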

\begin{proof}
From \eqref{eq:nu_binom}, we have
\begin{align}
&\int_{\Rb_+} \left(\rho^{-\frac{1}{\gam_1}}+\rho^{1-\frac{1}{\gam_1}}+\rho^{1-\frac{1}{\gam_2}}\right)\dd\nu_\thetav(\rho)=\\
&\sum_{S\subseteq A} \pi(\thetav,S)\left(
\rho(\thetav,S)^{-\frac{1}{\gam_1}}+\rho(\thetav,S)^{1-\frac{1}{\gam_1}}
+\rho(\thetav,S)^{1-\frac{1}{\gam_2}}
\right)<+\infty.
\end{align}
Therefore, \eqref{eq:CMIM_nu} is satisfied and the corollary directly follows from Theorem \ref{thm:CMIM}.
\end{proof}

Finally, we state the following result which provides an explicit construction for PFPPs having completely monotonic inverse marginals in the binomial market. It directly follows by applying Theorem \ref{thm:PFPP_CMIM} to the generalized Binomial market, thus, we omit its proof.

\begin{corollary}\label{cor:PFPP_CMIM_binom}
For arbitrary constants $0<\gam_1\le\gam_2$, assume that $I_0\in\CMIM(\gam_1,\gam_2)$ (see Definition \ref{def:CMIM}). In particular, let
$I_0(y)=\int_{\gam_1}^{\gam_2} y^{-1/\gam}\dd m_0(\gam)$, $y>0$, for a finite Borel measure $m_0$ such that $\supp(m_0)\subset(\gam_1,\gam_2)$. Define the finite Borel measures $m_{n,\gv}$, $n\in\Nb$, $\gv\in\Gc_n$, by the iteration
\begin{align}\label{eq:m_n_CMIM_binom}
m_{n,\gv}(B) = \int_B\left(
\sum_{S\subseteq A_n} \pi_n(\thetav,S)\rho_n(\thetav,S)^{1-\frac{1}{\gam}}
\right)^{-1} \dd m_{n-1,\gv'}(\gam);\quad
B\in\Bc(\Rb), n\in\Nb, \gv\in\Gc_n,
\end{align}
in which $\gv=\gv'\oplus\thetav\in\Gc_n$ such that $\gv'\in\Gc_{n-1}$ and $\thetav\in\Xi_n$,
and let
\begin{align}\label{eq:In_CMIM_binom}
I_n(y,\gv) := \int_{\gam_1}^{\gam_2} y^{-\frac{1}{\gam}}\dd m_{n,\gv}(\gam);\quad n\in\Nb, \gv\in\Gc_n.
\end{align}
Then, the unique PFPP $\left\{U_n\right\}_{n\in\Nb_0}$ satisfying $I_0:=U_0^{\prime-1}$ and $U_n^{\prime-1}(\cdot,\gv)\in\CMIM(\gam_1,\gam_2)$, $n\in\Nb, \gv\in\Gc_n$, is given by
\begin{align}\label{eq:U_CMIM_binom}
U_n(x,\gv) &:= U_{n-1}\left(I_{n-1}(1,\gv'),\gv'\right) +\Eb\left[\int_{I_n\left(\rho_n, \gv\right)}^x I_n^{-1}(\xi,\gv) \dd\xi\middle|\Gv_n=\gv\right];\quad
x\in\Rb_+, n\in\Nb, \gv\in\Gc_n,
\end{align}
in which $\gv=\gv'\oplus\thetav\in\Gc_n$ (such that $\gv'\in\Gc_{n-1}$ and $\thetav\in\Xi_n$). 
A corresponding optimal wealth process starting with initial wealth $x_0>0$ is given by
\begin{align}\label{eq:XStar_CMIM_binom}
X^*_n:=I_n\left(\rho_n I_{n-1}^{-1}(X^*_{n-1},\Gv_{n-1}), \Gv_n\right);\quad n\in\Nb,
\end{align}
with $X^*_0=x_0$.\qed
\end{corollary}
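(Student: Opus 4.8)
The plan is to derive Corollary \ref{cor:PFPP_CMIM_binom} as a direct specialization of Theorem \ref{thm:PFPP_CMIM} to the generalized binomial market of Example \ref{ex:Binom}; the whole task reduces to checking the hypotheses of that theorem in this concrete setting and rewriting its conclusions in binomial notation.

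First I would record that, by Remark \ref{rem:Binom_BS_rho} and \eqref{eq:Rho_binom}, one has $\rho_n|_{\Gv_n}=\rho_n|_{\Thetav_n}$, and by \eqref{eq:nu_binom} the conditional law of $\rho_n$ given $\Gv_n=\gv$ (with $\gv=\gv'\oplus\thetav$) is the finitely supported measure $\nu_\thetav=\sum_{S\subseteq A_n}\pi_n(\thetav,S)\,\del_{\rho_n(\thetav,S)}$. Every atom $\rho_n(\thetav,S)$ lies in $(0,\infty)$ by \eqref{eq:rho_n_binom}, so for any $0<\gam_1\le\gam_2$ the expectation $\Eb[\rho_n^{-1/\gam_1}+\rho_n^{1-1/\gam_1}+\rho_n^{1-1/\gam_2}\mid\Gv_n=\gv]$ equals a finite sum of finite terms; hence the standing integrability condition \eqref{eq:CMIM_PFPP_condition} of Theorem \ref{thm:PFPP_CMIM} holds for every admissible pair $(\gam_1,\gam_2)$. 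This is exactly the computation already carried out in the proof of Corollary \ref{cor:CMIM_binom}.

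Next I would translate the measure iteration. Using the same facts, for $\gam\in(\gam_1,\gam_2)$ one gets $\Eb[\rho_n^{1-1/\gam}\mid\Gv_n=\gv]=\sum_{S\subseteq A_n}\pi_n(\thetav,S)\,\rho_n(\thetav,S)^{1-1/\gam}$, a strictly positive and finite quantity. Substituting this expression into \eqref{eq:m_n_CMIM} turns that recursion into \eqref{eq:m_n_CMIM_binom}, so the finite Borel measures $m_{n,\gv}$ defined in the corollary coincide with those produced by Theorem \ref{thm:PFPP_CMIM}, and in particular each keeps support in $(\gam_1,\gam_2)$. Consequently \eqref{eq:In_CMIM_binom}, \eqref{eq:U_CMIM_binom}, and \eqref{eq:XStar_CMIM_binom} are verbatim restatements of \eqref{eq:In_CMIM}, \eqref{eq:U_CMIM}, and \eqref{eq:XStar_CMIM}, and existence of the PFPP $\{U_n\}_{n\in\Nb_0}$ together with its optimal wealth process, as well as uniqueness within the CMIM class, transfer directly from Theorem \ref{thm:PFPP_CMIM}.

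There is no genuine obstacle in this argument; the only points needing (mild) care are the reduction of conditioning on $\Gv_n$ to conditioning on $\Thetav_n$ via \eqref{eq:Rho_binom}, and the observation that each weight $\rho_n(\thetav,S)$ is a finite positive number so that the Radon–Nikodym factor in \eqref{eq:m_n_CMIM_binom} is well defined and $m_{n,\gv}$ stays finite. For this reason the proof is omitted in the main text.
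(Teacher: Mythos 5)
Your proposal is correct and coincides with the paper's intended (and omitted) argument: the corollary is stated as a direct application of Theorem \ref{thm:PFPP_CMIM} to the generalized binomial market, and your verification of \eqref{eq:CMIM_PFPP_condition} via the finite-sum computation is exactly the check already carried out in the proof of Corollary \ref{cor:CMIM_binom}, with the remaining steps being the notational translation of \eqref{eq:m_n_CMIM}--\eqref{eq:XStar_CMIM} that you describe.
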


%
%
\subsection{PFPPs in a generalized Black-Scholes model}\label{sec:BS}
In our second example, we consider the generalized Black-Scholes market of Example \ref{ex:BS}. By Remark \ref{rem:Binom_BS_rho}, we have that $\Thetav_n=\Lamv_n$ and $\Gv_n=(\Lamv_1,\dots,\Lamv_n)$. Recall that $\Lamv_n$ is the vector of the Sharpe ratios of the risky assets for period $[n-1,n]$, see \eqref{eq:S_BS}. For $n\in\Nb$, we have $\Xi_n=\Rb^K$ and $\Gc_n=\big(\Rb^K\big)^n$, since $\Lamv_n$ is assumed to be a $\Rb^K$-valued random vector (recall that $K\ge1$ is the number of risky assets).

From Remark \ref{rem:Binom_BS_rho}, we have
\begin{align}\label{eq:rho_n_BS}
\rho_n = \exp\left(-\frac{1}{2}\|\Lamv_n\|^2 - \Lamv_n^\top(\Bv_n-\Bv_{n-1})\right),\quad n\in\Nb,
\end{align}
in which $\Bv=(\Bv_t)_{t\ge0}$ is the $K$-dimensional standard Brownian motion. In Example \ref{ex:BS}, we assumed that $(\Bv_t-\Bv_n)_{t\ge n}$ is independent of $\{(\Lamv_m,\Sig_m)\}_{m=1}^{n+1}$ for all $n\in\Nb_0$. Therefore, $\rho_n|_{\Gv_n}=\rho_n|_{\Lamv_n}$ (i.e. the conditional law of $\rho_n$ given $\Gv_n=(\Lamv_1,\dots,\Lamv_n)$ is the same as the conditional law of $\rho_n$ given $\Lamv_n$). By \eqref{eq:rho_n_BS}, $\rho_n|_{\Lamv_n=\lamv}$, $(n,\lamv)\in\Nb\times\Rb^K$, has log-normal distribution, that is,
\begin{align}
\Pb\left(\rho_n\in B \middle|\Lamv_n=\lamv\right)
=\int_B \frac{1}{\rho\sqrt{2\pi\|\lamv\|^2}}\exp\left(-\frac{\left(\log \rho + \frac{1}{2}\|\lamv\|^2\right)^2}{2\|\lamv\|^2}\right)\dd \rho,\quad
B\in\Bc(\Rb_+), \lamv\in\Rb^K.
\end{align}
Note that $\supp(\rho_n|_{\Lamv_n=\lamv})=\Rb_+$ and is not a compact set.

Using these notations, Problem \eqref{prob:IntEq} takes the following form in the Black-Scholes model.
\begin{problem}\label{prob:IntEq_binom_BS}
Given an $I_0\in\Ic$, find an $I_1:\Rb_+\times\Rb^K\to\Rb_+$ such that	\begin{align}\label{eq:IntEq_binom_BS}
\int_{\Rb_+} \rho I_1(y\rho, \lamv)\dd \nu_\lamv(\rho) = I_0(y),
\end{align}
$\int_{\Rb_+} I_1(y\rho, \lamv)\dd \nu_\lamv(\rho)<\infty$, and $I_1(\cdot,\lamv)\in\Ic$, for all $y>0$ and $\lamv\in\Rb^K$, in which
\begin{align}\label{eq:nu_BS}
d\nu_\lamv(\rho) = \frac{1}{\rho\sqrt{2\pi\|\lamv\|^2}}\exp\left(-\frac{\left(\log \rho + \frac{1}{2}\|\lamv\|^2\right)^2}{2\|\lamv\|^2}\right)\dd \rho.
\end{align}\qed
\end{problem}

Let us first consider the integral equation \eqref{eq:IntEq_binom_BS}. Following our discussion in Section \ref{sec:deconvolution}, set $y=\ee^s$, $\rho=\ee^{-t}$, $J_0(t):= I_0(\ee^t)$, and $J_1(t,\lamv) := I_1(\ee^t, \lamv)$ to transform \eqref{eq:IntEq_binom_BS} into the following convolution equation
\begin{align}\label{eq:Deconvolution_BS}
\int_\Rb J_1(s-t,\lamv)\dd\nut_\lamv(t) = J_0(s),\quad s\in\Rb, \lamv\in\Rb^K,
\end{align}
in which the measure $\nut_\lamv$ (which corresponds to $\nut_\thetav$ of \eqref{eq:nut}) is given by
\begin{align}\label{eq:nut_BS}
\dd\nut_\lamv(t) = \frac{1}{\sqrt{2\pi\|\lamv\|^2}}\exp\left(-\frac{\left(t+\frac{1}{2}\|\lamv\|^2\right)^2}{2\|\lamv\|^2}\right)\dd t,
\end{align}
i.e. $\nut_\lamv$ is a Gaussian probability measure with mean $-\|\lamv\|^2/2$ and variance $\|\lamv\|^2$.

Next, we check that Assumption \eqref{assum:gam1gam2} is satisfied. Take arbitrary constants $0<\gam_1\le\gam_2$. From \eqref{eq:mu_thetav}, we define the measures $\mu_{\lamv,k}$, $(\lamv,k)\in\Rb^K\times\{1,2\}$, as follows
\begin{align}\label{eq:mu_lam_k_BS}
\dd \mu_{\lamv,k}(t) := \ee^{\frac{t}{\gam_k}}\dd\nut_\lamv(t)
=\ee^{\frac{1}{2}\|\lamv\|^2\frac{1}{\gam_k}\left(\frac{1}{\gam_k}-1\right)\,} \Phi_{\lamv,k}(t)\dd t,\quad t\in\Rb, k\in\{1,2\},
\end{align} 
in which $\Phi_{\lamv,k}(t)$ is a Gaussian density function with mean $\|\lamv\|^2\left(\frac{1}{\gam_k}-\frac{1}{2}\right)$ and variance $\|\lamv\|^2$. For $(\lamv,k)\in\Rb^K\times\{1,2\}$, we have $\mu_{\lamv,k}\in\Sc'$ as it is the product of the constant $\ee^{\frac{1}{2}\|\lamv\|^2\frac{1}{\gam_k}\left(\frac{1}{\gam_k}-1\right)\,}$ and a probability distribution. Furthermore, $\F[\mu_{\lamv,k}]\in\Cc^\infty$ since
\begin{align}
\F[\mu_{\lamv,k}](\xi)
= \ee^{\frac{1}{2}\|\lamv\|^2\frac{1}{\gam_k}\left(\frac{1}{\gam_k}-1\right)\,}\F[\Phi_{\lamv,k}](\xi)
= \ee^{-\frac{1}{2}\|\lamv\|^2\left[
\xi^2 + 2\ii \xi \left(\frac{1}{\gam_k}-1\right)
-\frac{1}{\gam_k}\left(\frac{1}{\gam_k}-1\right)
\right]};\quad \xi\in\Rb.
\end{align}
We have shown that Assumption \eqref{assum:gam1gam2} is satisfied in the Black-Scholes market for any choice of $0<\gam_1\le\gam_2$.

We then obtain the following corollary of Proposition \ref{prop:ExistenceUniquness} regarding the solution of \eqref{eq:Deconvolution_BS}.

\begin{corollary}\label{cor:ExistenceUniquness_BS}
Assume that $J_0\in\Jc(\gam_1,\gam_2)$ for some constants $0<\gam_1\le\gam_2$ and with $\Jc(\gam_1,\gam_2)$ as in \eqref{eq:Jgam1gam2}, and define $J_{0,1}(t):=J_0(t)\ee^{\frac{1}{\gam_1}t}\1_{\{t<0\}}$ and $J_{0,2}(t):=J_0(t)\ee^{\frac{1}{\gam_2}t}\1_{\{t\ge0\}}$ for $t\in\Rb$. Assume further that for all $\lamv\in\Rb^K$ and $k\in\{1,2\}$:\vspace{1ex}\\
\noindent$(i)$ $\ee^{\frac{1}{2}\|\lamv\|^2\left[
\xi^2 + 2\ii \xi \left(\frac{1}{\gam_k}-1\right)
-\frac{1}{\gam_k}\left(\frac{1}{\gam_k}-1\right)
\right]}\F[J_{0,k}]\in\Sc'$,\\
\noindent$(ii)$ $	J_{1,k}(\cdot,\lamv):=
\F^{-1}\left[
\ee^{\frac{1}{2}\|\lamv\|^2\left[
\xi^2 + 2\ii \xi \left(\frac{1}{\gam_k}-1\right)
-\frac{1}{\gam_k}\left(\frac{1}{\gam_k}-1\right)
\right]}
\F[J_{0,k}]
\right]\in L^1_\text{loc}(\Rb)$, and\\
\noindent$(iii)$ $\int_{\Rb} |J_{1,k}(s-t,\lamv)|\dd\mu_{\lamv,k}(t)<+\infty$ for all $s\in\Rb$ and with $\mu_{\lamv,k}$ given by \eqref{eq:mu_lam_k_BS}.\vspace{1ex}\\
Then, a solution of the convolutions equation \eqref{eq:Deconvolution_BS} is given by
\begin{align}
J_1(t,\lamv):=\ee^{-\frac{1}{\gam_1}t}J_{1,1}(t,\lamv) + \ee^{-\frac{1}{\gam_2}t}J_{1,2}(t,\lamv);\quad
(t,\lamv)\in\Rb\times\Rb^K.
\end{align}
Furthermore, $J_1(\cdot,\lamv)$.\qed
\end{corollary}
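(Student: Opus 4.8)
The plan is to obtain this corollary as a direct specialization of Proposition \ref{prop:ExistenceUniquness} to the Gaussian measures $\nut_\lamv$ of \eqref{eq:nut_BS}, so that the proof reduces to checking that the abstract hypotheses of that proposition become exactly the stated conditions $(i)$--$(iii)$. First I would invoke the verification already carried out in the paragraphs preceding the corollary: for every $0<\gam_1\le\gam_2$ the measures $\mu_{\lamv,k}$ of \eqref{eq:mu_lam_k_BS} are constant multiples of Gaussian densities, hence belong to $\Sc'$, and a completion-of-the-square computation gives
\[
\F[\mu_{\lamv,k}](\xi)=\ee^{-\frac{1}{2}\|\lamv\|^2\left[\xi^2+2\ii\xi\left(\frac{1}{\gam_k}-1\right)-\frac{1}{\gam_k}\left(\frac{1}{\gam_k}-1\right)\right]}\in\Cc^\infty ,
\]
which moreover never vanishes. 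Hence Assumption \ref{assum:gam1gam2} holds with $\nu_\thetav=\nu_\lamv$, $\nut_\thetav=\nut_\lamv$ and $\mu_{\thetav,k}=\mu_{\lamv,k}$; along the way one should confirm that $\nut_\lamv$ in \eqref{eq:nut_BS} is indeed the image measure $\nut_\thetav$ of \eqref{eq:nut} for the log-normal $\nu_\lamv$, which is the routine change of variables $t=-\log\rho$.

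Next I would read off that conditions $(i)$--$(iii)$ of the corollary are precisely conditions $(i)$--$(iii)$ of Proposition \ref{prop:ExistenceUniquness} after substituting the explicit reciprocal
\[
\frac{1}{\F[\mu_{\lamv,k}](\xi)}=\ee^{\frac{1}{2}\|\lamv\|^2\left[\xi^2+2\ii\xi\left(\frac{1}{\gam_k}-1\right)-\frac{1}{\gam_k}\left(\frac{1}{\gam_k}-1\right)\right]} .
\]
Indeed, condition $(i)$ here is $\F[J_{0,k}]/\F[\mu_{\lamv,k}]\in\Sc'$; condition $(ii)$ defines $J_{1,k}(\cdot,\lamv)=\F^{-1}\big[\F[J_{0,k}]/\F[\mu_{\lamv,k}]\big]$ and requires $J_{1,k}(\cdot,\lamv)\in L^1_{\text{loc}}(\Rb)$; and condition $(iii)$ is the integrability requirement $\int_\Rb|J_{1,k}(s-t,\lamv)|\dd\mu_{\lamv,k}(t)<+\infty$ verbatim. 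Since $J_0\in\Jc(\gam_1,\gam_2)$ and $J_{0,1},J_{0,2}$ are defined exactly as in the proposition, all its hypotheses are in force.

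Applying Proposition \ref{prop:ExistenceUniquness} then gives at once that $J_1(t,\lamv):=\ee^{-t/\gam_1}J_{1,1}(t,\lamv)+\ee^{-t/\gam_2}J_{1,2}(t,\lamv)$ belongs to $\Jc(\gam_1,\gam_2)$ for every $\lamv\in\Rb^K$ and is a solution of the deconvolution problem \eqref{eq:Deconvolution} with $\nut_\thetav$ replaced by $\nut_\lamv$, that is, of \eqref{eq:Deconvolution_BS}. I do not anticipate any genuine analytic difficulty: the entire content is carried by Proposition \ref{prop:ExistenceUniquness}, and the only real work is the notational translation between the $\thetav$-indexed objects of Subsection \ref{sec:deconvolution} and the $\lamv$-indexed objects here, plus recalling the standard Gaussian Fourier-transform formula already displayed in the text. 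Since $\F[\mu_{\lamv,k}]$ is nowhere zero, the uniqueness half of Proposition \ref{prop:ExistenceUniquness} also applies, so one could even append the assertion that $J_1(\cdot,\lamv)$ is the unique solution in $\Jc(\gam_1,\gam_2)$.
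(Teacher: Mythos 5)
Your proposal is correct and follows essentially the same route as the paper: the text preceding the corollary already verifies Assumption \ref{assum:gam1gam2} with the explicit Gaussian formula for $\F[\mu_{\lamv,k}]$, and the corollary is then a direct application of Proposition \ref{prop:ExistenceUniquness}, with conditions $(i)$--$(iii)$ being the proposition's hypotheses after substituting the nowhere-vanishing reciprocal $1/\F[\mu_{\lamv,k}]$. Your added remark that the uniqueness assertion of the proposition also applies is a valid (if unstated in the corollary) bonus.
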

\begin{proof}
As we have already confirmed, Assumption \ref{assum:gam1gam2} holds for any choice of $0<\gam_1\le\gam_2$. In light of \eqref{eq:mu_lam_k_BS}, the corollary then directly follows from Proposition \ref{prop:ExistenceUniquness}.
\end{proof}

Next, we apply the analysis of Subsection \ref{sec:CMIM} to solve Problem \ref{prob:IntEq_binom} while restricting ourselves to solution with inverse marginals that are completely monotonic. From Theorem \ref{thm:CMIM}, we obtain the following result.

\begin{corollary}\label{cor:CMIM_BS}
For arbitrary constants $0<\gam_1\le\gam_2$, assume that $I_0\in\CMIM(\gam_1,\gam_2)$ (see Definition \ref{def:CMIM}). In particular, let
$I_0(y)=\int_{\gam_1}^{\gam_2} y^{-1/\gam}\dd m_0(\gam)$, $y>0$, for a finite Borel measure $m_0$ such that $\supp(m_0)\subset(\gam_1,\gam_2)$. Then,
\begin{align}\label{eq:IntEq_CMIM_sol_BS}
I_1(y,\lamv) := \int_{\gam_1}^{\gam_2} y^{-\frac{1}{\gam}}
\exp\left(\frac{\|\lamv\|^2}{2\gam}\left(1-\frac{1}{\gam}\right)\right)
\dd m_0(\gam);\quad (y,\lamv)\in\Rb_+\times\Rb^K,
\end{align}
is the unique solution of Problem \ref{prob:IntEq} satisfying $I_1(\cdot,\lamv)\in\CMIM(\gam_1,\gam_2)$, $\lamv\in\Rb^K$.\qed
\end{corollary}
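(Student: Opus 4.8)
The plan is to obtain Corollary~\ref{cor:CMIM_BS} as a direct specialization of Theorem~\ref{thm:CMIM} to the log-normal family $\{\nu_\lamv\}_{\lamv\in\Rb^K}$ of \eqref{eq:nu_BS}. Only two things need to be checked: that the standing integrability hypothesis \eqref{eq:CMIM_nu} of Theorem~\ref{thm:CMIM} holds in this market, and that the weight $\big(\int_{\Rb_+}\rho^{1-1/\gam}\dd\nu_\lamv(\rho)\big)^{-1}$ appearing in \eqref{eq:IntEq_CMIM_sol} evaluates to the closed form in \eqref{eq:IntEq_CMIM_sol_BS}.

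First I would verify \eqref{eq:CMIM_nu}. Under $\nu_\lamv$ the variable $\log\rho$ is Gaussian with mean $-\tfrac12\|\lamv\|^2$ and variance $\|\lamv\|^2$ (read off from \eqref{eq:nu_BS}, the factor $1/\rho$ being the change-of-variables Jacobian; equivalently, compare with $\nut_\lamv$ in \eqref{eq:nut_BS}), so each of the three integrands $\rho^{-1/\gam_1}$, $\rho^{1-1/\gam_1}$, $\rho^{1-1/\gam_2}$ is of the form $\ee^{a\log\rho}$ with $a\in\Rb$ fixed. Since a Gaussian random variable has finite exponential moments of every order, \eqref{eq:CMIM_nu} holds for every $\lamv\in\Rb^K$ and for every choice of $0<\gam_1\le\gam_2$; in particular the hypotheses of Theorem~\ref{thm:CMIM} are satisfied here with no restriction on $(\gam_1,\gam_2)$.

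Next I would compute the weight using the Gaussian moment generating function $\Eb[\ee^{aX}]=\ee^{a\mu+a^2\sigma^2/2}$ with $X=\log\rho$, $\mu=-\tfrac12\|\lamv\|^2$, $\sigma^2=\|\lamv\|^2$, and $a=1-\tfrac1\gam$:
\[
\begin{aligned}
\int_{\Rb_+}\rho^{1-\frac1\gam}\dd\nu_\lamv(\rho)
&=\exp\!\left(-\frac{\|\lamv\|^2}{2}\left(1-\frac1\gam\right)+\frac{\|\lamv\|^2}{2}\left(1-\frac1\gam\right)^2\right)\\
&=\exp\!\left(\frac{\|\lamv\|^2}{2}\left(1-\frac1\gam\right)\left(-\frac1\gam\right)\right)
=\exp\!\left(-\frac{\|\lamv\|^2}{2\gam}\left(1-\frac1\gam\right)\right),
\end{aligned}
\]
whose reciprocal is $\exp\!\big(\tfrac{\|\lamv\|^2}{2\gam}(1-\tfrac1\gam)\big)$. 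Substituting this reciprocal for the weight in \eqref{eq:IntEq_CMIM_sol} reproduces \eqref{eq:IntEq_CMIM_sol_BS}, and Theorem~\ref{thm:CMIM} then delivers both that this $I_1$ solves Problem~\ref{prob:IntEq} and that it is the unique solution with $I_1(\cdot,\lamv)\in\CMIM(\gam_1,\gam_2)$ for all $\lamv\in\Rb^K$.

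There is no genuine obstacle here beyond bookkeeping, since the heavy lifting is already done in Theorem~\ref{thm:CMIM}; the only place warranting a little care is the algebraic collapse $-(1-\tfrac1\gam)+(1-\tfrac1\gam)^2=(1-\tfrac1\gam)(-\tfrac1\gam)$ in the exponent, together with keeping track of signs, and the (routine) identification of $\log\rho$ under $\nu_\lamv$ as the Gaussian appearing in \eqref{eq:nut_BS}.
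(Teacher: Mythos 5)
Your proposal is correct and matches the paper's own argument: the paper likewise notes that $\int_{\Rb_+}\rho^a\,\dd\nu_\lamv(\rho)=\ee^{\frac12\|\lamv\|^2 a(a-1)}$ for all $a\in\Rb$ (your Gaussian-MGF computation with $a=1-\tfrac1\gam$ is exactly this identity), uses it to verify \eqref{eq:CMIM_nu}, and then invokes Theorem \ref{thm:CMIM}. Your write-up merely spells out the moment computation and the exponent algebra in more detail.
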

\begin{proof}
It follows from \eqref{eq:nu_BS} that $\int_{\Rb_+}\rho^a \dd \nu_\lamv(\rho) = \ee^{\frac{1}{2}\|\lamv\|^2 a(a-1)}$ for all $a\in\Rb$ and $\lamv\in\Rb^K$.	Therefore, \eqref{eq:CMIM_nu} is satisfied and the corollary directly follows from Theorem \ref{thm:CMIM}.
\end{proof}

We end this section by the following corollary of Theorem \ref{thm:PFPP_CMIM}, which provides an explicit procedure for constructing PFPPs with completely monotonic inverse marginal functions. Its proof directly follows from Theorem \ref{thm:PFPP_CMIM} and is thus omitted.

\begin{corollary}\label{cor:PFPP_CMIM_BS}
For arbitrary constants $0<\gam_1\le\gam_2$, assume that $I_0\in\CMIM(\gam_1,\gam_2)$ (see Definition \ref{def:CMIM}) such that
$I_0(y)=\int_{\gam_1}^{\gam_2} y^{-1/\gam}\dd m_0(\gam)$, $y>0$, for a finite Borel measure $m_0$ with $\supp(m_0)\subset(\gam_1,\gam_2)$. Define the finite Borel measures $m_{n,\gv}$, $n\in\Nb$, $\gv\in(\Rb^K)^n$, through the iteration
\begin{align}\label{eq:m_n_CMIM_BS1}
m_{1,\lamv}(B) = \int_B \exp\left(\frac{\|\lamv\|^2}{2\gam}\left(1-\frac{1}{\gam}\right)\right)
\dd m_0(\gam);\quad
B\in\Bc(\Rb), \lamv\in\Rb^K,
\end{align}
and
\begin{align}\label{eq:m_n_CMIM_BS}
m_{n,(\lamv_1,\dots,\lamv_n)}(B) = \int_B \exp\left(\frac{\|\lamv_n\|^2}{2\gam}\left(1-\frac{1}{\gam}\right)\right)
\dd m_{n-1,(\lamv_1,\dots,\lamv_{n-1})}(\gam),
\end{align}
for $B\in\Bc(\Rb)$, $n\ge2$, and $\lamv_1,\dots,\lamv_n\in\Rb^K$. Note that $\supp(m_{n,\gv})=\supp(m_0)\subset(\gam_1,\gam_2)$, $n\in\Nb$, $\gv\in(\Rb^K)^n$. 
Let
\begin{align}\label{eq:In_CMIM_BS}
I_n(y,\gv) := \int_{\gam_1}^{\gam_2} y^{-\frac{1}{\gam}}\dd m_{n,\gv}(\gam);\quad n\in\Nb, \gv\in(\Rb^K)^n.
\end{align}
Then, the unique PFPP $\left\{U_n\right\}_{n\in\Nb_0}$ satisfying $I_0:=U_0^{\prime-1}$ and $U_n^{\prime-1}(\cdot,\gv)\in\CMIM(\gam_1,\gam_2)$, $n\in\Nb, \gv\in(\Rb^K)^n$, is given by
\begin{align}\label{eq:U_CMIM_BS}
U_n\big(x,\gv) &:= U_{n-1}\left(I_{n-1}(1,\gv'),\gv'\right) +\Eb\left[\int_{I_n\left(\rho_n, \gv\right)}^x I_n^{-1}(\xi,\gv) \dd\xi\middle|(\Lamv_1,\dots,\Lamv_n)=\gv\right],
\end{align}
for $x>0$, $n\in\Nb$, $\gv=(\lamv_1,\dots,\lamv_n)$, $\gv'=(\lamv_1,\dots,\lamv_{n-1})$, and $\lamv_1,\dots,\lamv_n\in\Rb^K$.
A corresponding optimal wealth process starting with initial wealth $x_0>0$ is given by
\begin{align}\label{eq:XStar_CMIM_BS}
X^*_n:=I_n\left(\rho_n I_{n-1}^{-1}\big(X^*_{n-1},(\Lamv_1,\dots,\Lamv_{n-1})\big), (\Lamv_1,\dots,\Lamv_n)\right);\quad n\in\Nb,
\end{align}
with $X^*_0=x_0$.\qed
\end{corollary}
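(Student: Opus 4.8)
The plan is to verify the hypotheses of Theorem~\ref{thm:PFPP_CMIM} in the generalized Black--Scholes market of Example~\ref{ex:BS} and to check that the iteration \eqref{eq:m_n_CMIM} specializes to \eqref{eq:m_n_CMIM_BS1}--\eqref{eq:m_n_CMIM_BS}. Recall from Remark~\ref{rem:Binom_BS_rho} and the discussion at the start of this subsection that Assumptions~\ref{asum:SPD} and~\ref{asum:rho} hold with $\Thetav_n=\Lamv_n$, $\Gv_n=(\Lamv_1,\dots,\Lamv_n)$, $\Xi_n=\Rb^K$, $\Gc_n=(\Rb^K)^n$, and that $\rho_n|_{\Gv_n}=\rho_n|_{\Lamv_n}$ is log-normal with law \eqref{eq:nu_BS}. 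The single identity that drives the whole proof is the log-normal moment formula $\Eb[\rho_n^a\mid\Lamv_n=\lamv]=\int_{\Rb_+}\rho^a\,\dd\nu_\lamv(\rho)=\ee^{\frac12\|\lamv\|^2 a(a-1)}$, valid for every $a\in\Rb$ and $\lamv\in\Rb^K$, which was already recorded in the proof of Corollary~\ref{cor:CMIM_BS}.

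First I would check the integrability condition \eqref{eq:CMIM_PFPP_condition}. Applying the moment formula with $a\in\{-1/\gam_1,\,1-1/\gam_1,\,1-1/\gam_2\}$ shows that each conditional expectation on the left of \eqref{eq:CMIM_PFPP_condition} equals a finite exponential of $\|\lamv_n\|^2$, so \eqref{eq:CMIM_PFPP_condition} holds for \emph{every} pair $0<\gam_1\le\gam_2$, in particular for the pair with $\supp(m_0)\subset(\gam_1,\gam_2)$ fixed in the statement. Hence Theorem~\ref{thm:PFPP_CMIM} applies: it produces the unique PFPP with CMIM inverse marginals, with $m_{n,\gv}$ given by \eqref{eq:m_n_CMIM}, $I_n$ by \eqref{eq:In_CMIM}, $U_n$ by \eqref{eq:U_CMIM}, and $X^*_n$ by \eqref{eq:XStar_CMIM}. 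It then remains only to make the iteration for $m_{n,\gv}$ explicit and to transcribe the resulting formulas into Black--Scholes notation.

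Next I would evaluate the Radon--Nikodym derivative in \eqref{eq:m_n_CMIM}. Taking $a=1-1/\gam$ in the moment formula gives, with $\gv=(\lamv_1,\dots,\lamv_n)$, $\Eb[\rho_n^{1-1/\gam}\mid\Gv_n=\gv]=\exp\!\big(\tfrac12\|\lamv_n\|^2(1-\tfrac1\gam)(-\tfrac1\gam)\big)=\exp\!\big(\tfrac{\|\lamv_n\|^2}{2\gam}(\tfrac1\gam-1)\big)$, so its reciprocal equals $\exp\!\big(\tfrac{\|\lamv_n\|^2}{2\gam}(1-\tfrac1\gam)\big)$, which is exactly the density appearing in \eqref{eq:m_n_CMIM_BS1}--\eqref{eq:m_n_CMIM_BS}; unrolling the recursion yields $\tfrac{\dd m_{n,\gv}}{\dd m_0}(\gam)=\prod_{j=1}^{n}\exp\!\big(\tfrac{\|\lamv_j\|^2}{2\gam}(1-\tfrac1\gam)\big)$. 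Since this density is continuous and strictly positive on the compact set $\supp(m_0)\subset(\gam_1,\gam_2)$, each $m_{n,\gv}$ is a finite Borel measure with $\supp(m_{n,\gv})=\supp(m_0)\subset(\gam_1,\gam_2)$, so $I_n(\cdot,\gv)\in\CMIM(\gam_1,\gam_2)$ as required by Theorem~\ref{thm:PFPP_CMIM}. Substituting these $m_{n,\gv}$ together with $\Thetav_n=\Lamv_n$ and $\Gv_n=(\Lamv_1,\dots,\Lamv_n)$ into \eqref{eq:In_CMIM}, \eqref{eq:U_CMIM}, \eqref{eq:XStar_CMIM} gives \eqref{eq:In_CMIM_BS}, \eqref{eq:U_CMIM_BS}, \eqref{eq:XStar_CMIM_BS}, and the uniqueness assertion is inherited verbatim from Theorem~\ref{thm:PFPP_CMIM}. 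I do not expect any genuine obstacle here; the only points that need care are applying the log-normal moment formula with the correct exponents and explicitly noting that $\supp(m_0)$ is preserved along the iteration, so that the $\CMIM(\gam_1,\gam_2)$ membership --- and hence the uniqueness clause of Theorem~\ref{thm:PFPP_CMIM} --- genuinely applies.
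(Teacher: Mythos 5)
Your proposal is correct and matches the paper's intended argument: the paper omits the proof precisely because the corollary follows directly from Theorem \ref{thm:PFPP_CMIM}, with the log-normal moment formula $\int_{\Rb_+}\rho^a\,\dd\nu_\lamv(\rho)=\ee^{\frac12\|\lamv\|^2a(a-1)}$ (already used for Corollary \ref{cor:CMIM_BS}) verifying \eqref{eq:CMIM_PFPP_condition} and giving the Radon--Nikodym density in \eqref{eq:m_n_CMIM_BS1}--\eqref{eq:m_n_CMIM_BS}. Your additional care about preservation of $\supp(m_0)$ along the iteration is exactly the right point to check for the uniqueness clause.
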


%
%


%
%
\appendix

%
%
\section{Proof of Theorem \ref{thm:verif}}\label{app:verif}
We start with deriving a few auxiliary results, and then check that $\left\{U_n\right\}_{n\in\Nb_0}$ satisfy Conditions $(i)$--$(iii)$ of Definition \ref{def:PFPP} and thus is a PFPP.

Choose arbitrarily an $n\in\Nb$ and a $\gv=\gv'\oplus\thetav\in\Gc_n$ such that $\gv'\in\Gc_{n-1}$ and $\thetav\in\Xi_n$. 
The first part of Condition $(i)$ yields that $I_n(\cdot,\gv)$ and $I_n^{-1}(\cdot,\gv)$ are decreasing. Therefore, for any $y>0$,
\begin{align}
	&\left|\int_{I_n\left(y\rho_n, \gv\right)}^{I_n\left(1, \gv\right)} I_n^{-1}(\xi,\gv) \dd\xi\right|
	\le
	\max\Big\{I_n^{-1}\big(I_n(y\rho_n, \gv),\gv\big), I_n^{-1}\big(I_n(1, \gv),\gv\big)\Big\}\,
	\big|I_n(y\rho_n, \gv)- I_n(1, \gv)\big|\\
	&\le\max\{y\rho_n,1\}\big|I_n(y\rho_n, \gv)- I_n(1, \gv)\big|
	\le (1+y\rho_n) \big(I_n(y\rho_n, \gv)+ I_n(1, \gv)\big).
\end{align}
By taking conditional expectation given $\Gv_n=\gv$, it follows that
\begin{align}\label{eq:Condition_iii}
	\Eb\left[\left|\int_{I_n\left(y\rho_n, \gv\right)}^{I_n\left(1, \gv\right)} I_n^{-1}(\xi,\gv) \dd\xi\right|~\middle|\Gv_n=\gv\right]
	\le \Eb\Big[(1+y\rho_n) \big(I_n(y\rho_n, \gv)+ I_n(1, \gv)\big)~\Big|\Gv_n=\gv\Big]
	<+\infty,
\end{align}
in which we have used the second part of Condition $(i)$ for the second inequality.
By \eqref{eq:U}, we have that
\begin{align}\label{eq:U_alt}
	U_n(x,\gv) = &\int_{I_n\left(1, \gv\right)}^x I_n^{-1}(\xi,\gv) \dd\xi\\
	&+ U_{n-1}\left(I_{n-1}(1,\gv'),\gv'\right)
	+ \Eb\left[\int_{I_n\left(\rho_n, \gv\right)}^{I_n\left(1, \gv\right)} I_n^{-1}(\xi,\gv) \dd\xi\middle|\Gv_n=\gv\right].
\end{align}
By \eqref{eq:Condition_iii} and induction on $n\in\Nb$, it follows that $U_n$ is Borel measurable and that $U_n(\cdot,\gv)\in \Cc^2(\Rb_+)$ for all $\gv\in\Gc_n$. In particular,
\begin{align}\label{eq:Un_In}
	U'_n(x,\gv) := \frac{\partial}{\partial x}U_n(x,\gv) = \frac{\partial}{\partial x}\left(\int_1^x I_n^{-1}(\xi,\gv) \dd\xi\right)= I_n^{-1}(x,\gv).
\end{align}
By setting $y=U'_{n-1}(x,\gv')$ in Condition $(ii)$ and then using \eqref{eq:Un_In}, we obtain that
\begin{align}\label{eq:IntEq_alt}
	\Eb\left[\rho_n I_n\left(U_{n-1}'(x,\gv')\rho_n, \gv\right)\middle|\Gv_n=\gv\right] = x,
\end{align}
for all $(n,y,\gv=\gv'\oplus\thetav)\in\Nb\times\Rb_+\times\Gc_n$ (such that $\gv'\in\Gc_{n-1}$ and $\thetav\in\Xi_n$).

Define
$\Ut(x) :=\Eb\left[U_n\Big(I_n\big(U'_{n-1}(x,\gv')\rho_n,\gv\big),\gv\Big)\middle|\Gv_n=\gv\right]$, $x>0$. We will show that $\Ut(x)=U_{n-1}(x,\gv')$, $x>0$.	From \eqref{eq:U_alt}, it follows that
\begin{align}
	\Ut(x)
	= &\Eb\left[\int_{I_n\left(1, \gv\right)}^{I_n\big(U'_{n-1}(x,\gv')\rho_n,\gv\big)} I_n^{-1}(\xi,\gv) \dd\xi\middle|\Gv_n=\gv\right]
	+ U_{n-1}\left(I_{n-1}(1,\gv'),\gv'\right)\\
	\label{eq:Ut}
	&{}+ \Eb\left[\int_{I_n\left(\rho_n, \gv\right)}^{I_n\left(1, \gv\right)} I_n^{-1}(\xi,\gv) \dd\xi\middle|\Gv_n=\gv\right],
\end{align}
and, thus, $\Ut$ has a continuous derivative because of \eqref{eq:Condition_iii}.	
In particular,
\begin{align}
	\Ut'(x)&=\Eb\left[U''_{n-1}(x,\gv')\rho_n I_n'\big(U'_{n-1}(x,\gv')\rho_n,\gv\big)U_n'\Big(I_n\big(U'_{n-1}(x,\gv')\rho_n,\gv\big),\gv\Big)\middle|\Gv_n=\gv\right]\\
	\label{eq:Utp0}
	&=U'_{n-1}(x,\gv')\Eb\left[\rho_n^2 U''_{n-1}(x,\gv') I_n'\big(U'_{n-1}(x,\gv')\rho_n,\gv\big)\middle|\Gv_n=\gv\right].
\end{align}
Differentiating \eqref{eq:IntEq_alt} with respect to $x$ yields that
\begin{align}
	\Eb\left[\rho_n^2 U_{n-1}''(x,\gv')I_n'\left(U_{n-1}'(x,\gv')\rho_n, \gv\right)\middle|\Gv_n=\gv\right] = 1.
\end{align}
From \eqref{eq:Utp0}, it then follows that $\Ut'=U_{n-1}'(\cdot,\gv')$. Finally, setting $x=I_{n-1}(1,\gv')$ in \eqref{eq:Ut} yields that $\Ut\big(I_{n-1}(1,\gv')\big)=U_{n-1}\left(I_{n-1}(1,\gv'),\gv'\right)$, 
and we must have $\Ut(x)=U_{n-1}(x,\gv)$, $x>0$, as we set out to prove. We have shown that
\begin{align}\label{eq:Uold_Unew}
	U_{n-1}(x,\gv')=\Eb\left[U_n\Big(I_n\big(U'_{n-1}(x,\gv')\rho_n,\gv\big),\gv\Big)\middle|\Gv_n=\gv\right],
\end{align}
for any $(n,x,\gv=\gv'\oplus\thetav)\in\Nb\times\Rb_+\times\Gc_n$ such that $\gv'\in\Gc_{n-1}$ and $\thetav\in\Xi_n$.\footnote{For $n=1$, \eqref{eq:Uold_Unew} becomes $U_0(x)=\Eb\left[U_1\Big(I_1\big(U'_0(x)\rho_1,\thetav\big),\thetav\Big)\middle|\Thetav_1=\thetav\right]$ for all $(x,\thetav)\in\Rb_+\times\Xi_1$.}

We are now ready to check that $\left\{U_n\right\}_{n\in\Nb_0}$ satisfy Conditions $(i)$--$(iii)$ of Definition \ref{def:PFPP}.
\vspace{1ex}

\noindent\textbf{Condition (i) of Definition \ref{def:PFPP}:} 
The first part of Condition $(i)$ and \eqref{eq:Un_In} yields that $U_n(\cdot,\gv)\in\Uc$ for $(n,\gv)\in\Nb\times\Gc_n$. Furthermore, $U_0\in\Uc$ by assumption.\vspace{1ex}

\noindent\textbf{Condition (ii) of Definition \ref{def:PFPP}:} Take an arbitrary choice for $(n,x,\gv=\gv'\oplus\thetav)\in\Nb\times\Rb_+\times\Gc_n$ (such that $\gv'\in\Gc_{n-1}$ and $\thetav\in\Xi_n$) and assume that $X$ is any random variable satisfying $X\in\Ac_n(x)$ and $\Eb\big[U_n(X,\gv)\big|\Gv_n=\gv\big]>-\infty$.	
Let $V_n(\cdot,\gv):\Rb_+\to\Rb$ be the convex dual of the utility function $U_n(\cdot,\gv)$, namely,
\begin{align}\label{eq:Vn}
	V_n(y,\gv) := \sup_{x>0}\left\{U_n(x,\gv) - xy\right\} = U_n\big(I_n(y,\gv),\gv\big)-yI_n(y,\gv);\quad y>0,
\end{align}
in which the second equality follows from \eqref{eq:Un_In}.
By Lemma \ref{lem:ConvexDuality}, we have that $V_n(\cdot,\gv)\in\Cc^2(\Rb_+)$ (in particular, $V'_n(y,\gv):=\frac{\partial}{\partial y}V_n(y,\gv) = -I_n(y,\gv)<0$ and $V''_n(y,\gv):=\frac{\partial^2}{\partial y^2}V_n(y,\gv) = -1/U_n''\big(I_n(y,\gv),\gv\big)>0$ for all $y>0$).	
By \eqref{eq:IntEq_alt} and since $X\in\Ac_n(x)$, we have
\begin{align}
	\Eb\left[\rho_n I_n\big(U'_{n-1}(x,\gv')\rho_n, \gv\big)\middle|\Gv_n=\gv\right] = x
	= \Eb\left[\rho_n X \middle|\Gv_n=\gv\right].
\end{align}
From this equation and \eqref{eq:Vn}, it follows that
\begin{align}
	&\Eb\left[ U'_{n-1}(x,\gv') \rho_n X + V_n\big(U'_{n-1}(x,\gv')\rho_n, \gv\big)\middle|\Gv_n=\gv\right]\\
	&= \Eb\Big[\rho_n U'_{n-1}(x,\gv') I_n\big(U'_{n-1}(x,\gv')\rho_n, \gv\big)
	+ V_n\big(U'_{n-1}(x,\gv')\rho_n, \gv\big)\Big|\Gv_n=\gv\Big]\\
	\label{eq:FinitExp}
	&= \Eb\left[U_n\Big(I_n\big(U'_{n-1}(x,\gv')\rho_n,\gv\big),\gv\Big)\middle|\Gv_n=\gv\right]
	=U_{n-1}(x,\gv'),
\end{align}
in which that last step follows from \eqref{eq:Uold_Unew}.
From \eqref{eq:Vn}, we have that
$
U_n(X,\gv)\le U_{n-1}'(x,\gv')\rho_n X + V_n\big(U'_{n-1}(x,\gv')\rho_n, \gv\big).
$ 
By taking expectation and using \eqref{eq:FinitExp}, we then obtain that
\begin{align}
	\Eb\Big[U_n(X,\gv)\Big|\Gv_n=\gv\Big]
	\le \Eb\left[U_n\Big(I_n\big(U'_{n-1}(x,\gv')\rho_n,\gv\big),\gv\Big)\middle|\Gv_n=\gv\right]
	=U_{n-1}(x,\gv').
\end{align}
Thus, Condition (ii) of Definition \ref{def:PFPP} is satisfied.\vspace{1em}	

\noindent\textbf{Condition (iii) of Definition \ref{def:PFPP}:}
Let $\{X^*_n\}_{n\in\Nb_0}$ be as defined in the statement of the Theorem. For $n\in\Nb$, we have that $X^*_n\in\Ac_n(X^*_{n-1})$ since
\begin{align}
	\Eb[Z_n X^*_n |\Fc_{n-1}]
	&= Z_{n-1}\Eb\left[\rho_n I_n\big(\rho_n I_{n-1}^{-1}(X^*_{n-1},\Gv_{n-1}), \Gv_n\big)\middle|\Fc_{n-1}\right]\\\label{eq:AssumpRho_needed}
	&= Z_{n-1}\Eb\left[\rho_n I_n\big(\rho_n I_{n-1}^{-1}(X^*_{n-1},\Gv_{n-1}), \Gv_n\big)\middle|\Gv_n\right] = Z_{n-1} X^*_{n-1},
\end{align}
in which we used Assumption \ref{asum:rho} for the second step and  \eqref{eq:IntEq_alt} for the last step. Thus, $\{X^*_n\}_{n\in\Nb_0}\in\Ac$ by \eqref{eq:Act}. From \eqref{eq:Uold_Unew}, it follows that $U_{n-1}(x,\gv') = \Eb\big[U_n(X^*_n,\gv)\big|X^*_{n-1}=x,\Gv_n=\gv\big]$ for all $(n,x,\gv=\gv'\oplus\thetav)\in\Nb\times\Rb_+\times\Gc_n$ such that $\gv'\in\Gc_{n-1}$ and $\thetav\in\Xi_n$. Lemma \ref{lem:DisceretWealth} then yields that $\left(X_t^*:=\Eb\left[X^*_{\ceil{t}}|\Fc_t\right]\right)_{t\ge0}\in\Act$ satisfies Condition $(iii)$ of Definition \ref{def:PFPP}.

%
%

\section{Review of the Fourier transform for tempered distributions}\label{sec:FourierAnalysis}
This appendix provides a brief summary of the results used in Section \ref{sec:deconvolution} from the theory of distributions and the Fourier transform for tempered distributions. For more details and proofs, we refer the reader to any modern treatment of the Fourier analysis, for instance, \cite{Hormander1990}.

%
%

\subsection{Distributions}\label{sec:Dist}

Let $\Cb$ be the set of complex numbers. For an open subset $D\subseteq\Rb$, let $\Cc^{\infty}(D)$ denote the set of all $\Cb$-valued infinitely-differentiable functions with domain $D$, and $\Cc^{\infty}_0(D)$ be the set of all elements of $\Cc^{\infty}(D)$ with compact support. We take the convention that $\Cc^{\infty}=\Cc^{\infty}(\Rb)$ and $\Cc_0^{\infty}=\Cc_0^{\infty}(\Rb)$. Endow $\Cc^{\infty}$ and $\Cc_0^{\infty}$ with the topology generated by the family of seminorms
\begin{align}\label{eq:C_norm}
	\|\varphi\|_{n,\chi} := \sum_{k\in\{0,\dots,n\}} \sup_{x\in\chi}\left|\frac{\dd^k \varphi(x)}{\dd x^k}\right|,
\end{align}
in which $n\in\Nb_0$ and $\chi$ is a compact subset of $\Rb$. A \emph{distribution} (also called a \emph{generalized function}, and not to be confused with a probability distribution) is a continuous linear functional $f:\Cc^\infty_0\to \Cb$, and $\Dc'$ denotes the set of all distributions.

\begin{definition}\label{def:dist}
	The space of distributions $\Dc'(D)$ is the dual space of $\Cc_0^\infty(D)$ with the topology generated by the seminorms $\|\cdot\|_{n,\chi}$ of \eqref{eq:C_norm}. In other words, a linear functional $f:\Cc_0^\infty(D)\to\Cb$ is a distribution if for any compact set $\chi\subset D$, there exists an integer $n_\chi\in\Nb_0$ and a constant $C_\chi>0$ such that $\big|f(\varphi)\big| \le C_\chi \|\varphi\|_{n_\chi,\chi}$ for all $\varphi\in\Cc^\infty_0(D)$. We take the convention that $\Dc'=\Dc'(\Rb)$.\qed
\end{definition}

Any continuous function $\ft:\Rb\to\Rb$ is represented by a distributions $f\in\Dc'$ given by
\begin{align}\label{eq:fn_dist}
	f(\varphi):=\int_\Rb \ft(x)\varphi(x)\dd x;\quad \varphi\in\Cc^{\infty}_0.
\end{align}
This representation is unique in the sense that if $\gt\in\Cc^0_0$ also satisfies $f(\varphi):=\int_\Rb \gt(x)\varphi(x)\dd x$, $\varphi\in\Cc^{\infty}_0$, then we must have $\gt(x)=\ft(x)$ for $x\in\Rb$ (Theorem 1.2.4 on page 15 of \cite{Hormander1990}). Indeed, the representation \eqref{eq:fn_dist} holds for any $\ft\in L^1_\text{loc}$, that is, all functions $\ft:\Rb\to\Rb$ that are (Lebesgue) integrable on compact subsets of $\Rb$. For such $\ft$, the representation \eqref{eq:fn_dist} is unique almost everywhere on $\Rb$. In a similar fashion, a $\sig$-finite measure $\mut$ on $\Rb$ can be identified as a distribution $\mu\in\Dc'$ defined by
\begin{align}\label{eq:measure_dist}
	\mu(\varphi):=\int_\Rb \varphi(x)\mut(\dd x);\quad \varphi\in\Cc^{\infty}_0.
\end{align}
A simple example of such distribution is the \emph{Dirac measure} $\delta_a$, $a\in\Rb$, given by
\begin{align}\label{eq:DeltaFn}
	\delta_a(\varphi) := \varphi(a);\quad \varphi\in\Cc^{\infty},
\end{align}
which corresponds to a probability measure with a single atom at $a$ and mass $1$. 
As is customary in the literature and with a slight abuse of notation, we do not distinguish between a locally integrable function $\ft$ (respectively, a $\sig$-finite measure $\mut$) and the corresponding distribution $f$ (respectively, $\mu$) given by \eqref{eq:fn_dist} (respectively, \eqref{eq:measure_dist}). With this convention, we consider locally integrable functions and $\sig$-finite measures as distributions.

Next, we define the support of a distribution. Let $f\in\Dc'$ and $D$ be an open subset of $\Rb$. Then the \emph{restriction of $f$ to $D$} is the distribution $f_D\in\Dc'(D)$ given by $f_D(\varphi):=f(\varphi)$, $\varphi\in\Cc_0^\infty(D)$. If $f\in\Dc'(D)$ and for every $a\in D$ there exists an open set $N_a\subset D$ containing $a$ such that $f_{N_a}=0$, then $f=0$ (Theorem 2.2.1 on page 41 \cite{Hormander1990}). We define the \emph{support} of a distribution as follows.
\begin{definition}\label{def:Support}
	Let $D\subseteq\Rb$ be an open set and $f\in\Dc'(D)$. Let $N$ be the set of all points $a\in D$ such that $f_{N_a}=0$ for an open set $N_a\subset D$ containing $a$. We define $\supp(f):=N^c=D\backslash N$. Note that $N=\supp(f)^c$ is an open set and that $f_{\supp(f)^c}=0$.\qed
\end{definition}

An important class of distributions is the space of \emph{distributions with compact support}, denoted by $\mathcal{E}^\prime$. It can be defined in two forms. See Theorem 2.3.1 on page 44 of \cite{Hormander1990} for the equivalence of the definitions.

\begin{definition}\label{def:compdist}
	For an open set $D\subseteq\Rb$, $\Ec^\prime(D)$ is the set of all distributions $f\in\Dc'(D)$ such that $\supp(f)$ is compact. Equivalently, $\Ec'(D)$ is the dual space of $\Cc^\infty(D)$ with the topology generated by the seminorms $\|\cdot\|_{n,\chi}$ given by \eqref{eq:C_norm}. In other words, a linear functional $f:\Cc^\infty\to\Cb$ is a distribution with compact support if for any compact set $\chi\subset D$, there exists an integer $n_\chi\in\Nb_0$ and a constant $C_\chi>0$ such that $\big|f(\varphi)\big| \le C_\chi \|\varphi\|_{n_\chi,\chi}$ for all  $\varphi\in\Cc^\infty(D)$. Note that $\Ec'(D)\subset \Dc'(D)$ since $\Cc_0^{\infty}(D)\subset\Cc^{\infty}(D)$. We take the convention that $\Ec'=\Ec'(\Rb)$.\qed
\end{definition}

Next, we define two operations on distributions, namely, multiplying by smooth functions and differentiation. Let $f\in L^1_\text{loc}$ and $\psi\in\Cc^{\infty}$. We have that $\int_\Rb \big(\psi(x)f(x)\big)\varphi(x) \dd x=\int_\Rb f(x)\big(\psi(x)\varphi(x)\big) \dd x$ for all $\varphi\in\Cc_0^{\infty}$ and that $\psi\varphi\in\Cc_0^{\infty}$. Furthermore, for $f\in\Cc^1$, integration-by-parts yields that $\int_\Rb f'(x)\varphi(x)\dd x = -\int_\Rb f(x)\varphi'(x)\dd x$ for all $\varphi\in\Cc_0^{\infty}$. In light of correspondence between functions and distributions as in \eqref{eq:fn_dist}, these observations motivate the definition of (weak) differentiation and multiplication by smooth functions for a distribution.
\begin{definition}\label{def:DerivProdDist}
	Let $D$ be an open subset of $\Rb$ and $f\in\Dc'(D)$. Then, the distribution $f'\in\Dc'(D)$ is defined by $f'(\varphi) := -f(\varphi^\prime)$, $\varphi\in C^\infty_0(D)$.
	Furthermore, for $\psi\in C^\infty(D)$, the distribution $\psi f \in\Dc'(D)$ is defined by $\psi u(\varphi) := u(\psi\varphi)$, $\varphi\in C^\infty_0(D)$. We have that $\supp(u^\prime),\supp(\psi u)\subseteq\supp(u)$ and that the mappings $u\mapsto u^\prime$ and $u\mapsto \psi u$ are continuous on $\Dc'(D)$.\qed
\end{definition}

For $f\in\Dc'(D)$, we say that $f\ge0$ if $f(\varphi)\ge0$ for all $\varphi\in\Cc_0^{\infty}(D)$ such that $\varphi(x)\ge0$, $x\in D$. The following Lemma shows that distributions with non-negative first derivative are non-decreasing functions and those with non-negative second derivative are convex functions.

\begin{lemma}\label{lem:IncCovexDist}
	(Theorem 4.1.6 on page 90 of \cite{Hormander1990})
	Let $D$ be an open subset of $\Rb$ and $f\in\Dc'(D)$. Then, $f^\prime \ge 0$ (respectively $f^{\prime\prime} \ge 0$) if and only if there exist a non-decreasing (respectively, convex) function $\ft$ satisfying \eqref{eq:fn_dist}.\qed
\end{lemma}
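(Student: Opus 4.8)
The statement is quoted verbatim from Hörmander, so the plan is simply to recall the standard argument; the two implications are of very different character.

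\textbf{The converse (elementary).} Suppose $\ft$ is non-decreasing on (the interior of) its domain; being monotone it is locally bounded there, hence $\ft\in L^1_{\text{loc}}$ and defines a distribution $f$ through \eqref{eq:fn_dist}. For $\varphi\in\Cc_0^\infty$ with $\varphi\ge0$ and $h>0$ small, the substitution $x\mapsto x-h$ in the first term gives
\begin{align}
\int_\Rb \ft(x)\,\frac{\varphi(x-h)-\varphi(x)}{h}\,\dd x
= \int_\Rb \frac{\ft(x+h)-\ft(x)}{h}\,\varphi(x)\,\dd x \;\ge\; 0 .
\end{align}
Since $\bigl(\varphi(x-h)-\varphi(x)\bigr)/h\to-\varphi'(x)$ uniformly on a fixed compact neighbourhood of $\supp\varphi$, letting $h\downarrow0$ yields $-\int_\Rb\ft\,\varphi' = f'(\varphi)\ge0$. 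The convex case is identical, using the symmetric second difference $\ft(x+h)+\ft(x-h)-2\ft(x)\ge0$, whose pairing against $\varphi/h^2$ converges to $\int_\Rb\ft\,\varphi''=f''(\varphi)$.

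\textbf{The forward direction.} Assume $f'\ge0$. Fix a mollifier $\chi\in\Cc_0^\infty$, $\chi\ge0$, $\int\chi=1$, $\supp\chi\subset(-1,1)$, set $\chi_\eps(x)=\eps^{-1}\chi(x/\eps)$, and let $f_\eps:=f\conv\chi_\eps$, a smooth function on $D_\eps:=\{x\in D:\operatorname{dist}(x,D^c)>\eps\}$. Then $(f_\eps)'=f'\conv\chi_\eps\ge0$, since convolving a non-negative distribution with a non-negative test function produces a non-negative function; so every $f_\eps$ is smooth and non-decreasing on $D_\eps$. One next checks that $\{f_\eps\}$ is uniformly bounded on compact subsets: given $[a,b]\subset D$ and $\delta>0$, choose $\psi_L,\psi_R\in\Cc_0^\infty$, non-negative with unit integral and supported in $(a,a+\delta)$, $(b-\delta,b)$; monotonicity of $f_\eps$ gives $\int f_\eps\psi_L\le f_\eps(x)\le\int f_\eps\psi_R$ for $x\in[a+\delta,b-\delta]$ and $\eps$ small, and both integrals converge (to $f(\psi_L)$, $f(\psi_R)$), hence are bounded. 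A sequence of monotone functions that is uniformly bounded on compacts admits, by Helly's selection principle (together with a diagonal argument over an exhaustion of $D$ by compact intervals), a subsequence $\eps_k\downarrow0$ converging to a non-decreasing function $\ft$ on $D$ at every continuity point of $\ft$, hence Lebesgue-almost everywhere. Dominated convergence then gives $\int f_{\eps_k}\varphi\to\int\ft\,\varphi$ for all $\varphi\in\Cc_0^\infty(D)$, while also $\int f_{\eps_k}\varphi\to f(\varphi)$; thus $f$ is represented by the non-decreasing $\ft$ (which therefore lies in $L^1_{\text{loc}}(D)$). For the convex case, $f''\ge0$ means $(f')'\ge0$, so by the monotone case $f'$ is represented by a non-decreasing $\gt$; fixing a base point $c$ in each connected component of $D$ and putting $\ft(x):=\int_c^x\gt(t)\,\dd t$ gives a convex function (an antiderivative of a non-decreasing function is convex) with $\ft'=\gt=f'$ in $\Dc'$, whence $f-\ft$ has vanishing derivative and so is constant on each component; thus $f$ is represented by $\ft$ plus a (piecewise) constant, still convex.

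\textbf{Expected main obstacle.} The only step that is not mere bookkeeping is the forward implication — promoting the distributional inequality $f'\ge0$ to an \emph{honest} monotone (resp.\ convex) function. The compactness argument above (uniform local bounds for the mollifications plus Helly) is precisely what carries this; a slicker alternative is to invoke the structure theorem that a non-negative distribution is a positive Radon measure $\mu$ and take $\ft(x)=\mu\bigl((c,x]\bigr)$, but that theorem is of comparable depth. The remaining ingredients — convolution preserving positivity, uniform convergence of (second) difference quotients of test functions, the fact that a distribution with zero derivative on an interval is a constant, and that absolutely continuous functions have their pointwise a.e.\ derivative as their distributional derivative — are routine.
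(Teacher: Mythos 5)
Your argument is correct, but note that the paper itself offers no proof of this lemma: it appears in the background appendix purely as a quotation of Theorem 4.1.6 of H\"ormander, so the only ``paper proof'' is that citation, and what you have written is a self-contained substitute for it. Your converse direction (pairing the monotone, respectively convex, representative against first and second difference quotients of a non-negative test function and passing to the limit) is the standard elementary computation and is complete. Your forward direction is a correct rendition of the classical regularization argument: positivity of $f'\conv\chi_\eps$ on $D_\eps$, uniform local bounds obtained from test functions concentrated near the endpoints of a compact subinterval, Helly selection combined with a diagonal argument over an exhaustion of $D$, and dominated convergence to identify the a.e.\ limit $\ft$ as a representative of $f$; the convex case correctly reduces to the monotone one by writing $\ft(x)=\int_c^x\gt(t)\,\dd t$, using that a distribution with vanishing derivative on an interval is constant and that an antiderivative of a non-decreasing function is convex. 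The only steps you use without comment are the two appeals to $f\conv\chi_{\eps}\to f$ in $\Dc'(D)$ (once to bound $\int f_\eps\psi_L$ and $\int f_\eps\psi_R$, once to get $\int f_{\eps_k}\varphi\to f(\varphi)$); these are routine mollification facts and do not constitute a gap. Your closing remark that one could instead invoke the representation of a non-negative distribution as a positive Radon measure $\mu$ and take $\ft(x)=\mu\bigl((c,x]\bigr)$ is indeed the shorter route and is essentially how the cited textbook argues; either way the lemma is established.
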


We end this appendix by defining the convolution $f\conv g$ of distributions $f\in\Dc'$ and $g\in\Ec'$ (note that at least one distribution must have compact support). Recall that if $f\in L^1_\text{loc}$ and $\varphi\in\Cc^\infty_0$, we have $[f\conv \varphi](s):= \int_\Rb f(t)\varphi(s-t)\dd t$, $s\in\Rb$, such that $f\conv\varphi \in\Cc^{\infty}$. For $f\in\Dc'$ and $\varphi\in\Cc^{\infty}_0$, we thus define the convolution $f\conv \varphi:\Rb\to\Cb$ as the function $s\mapsto [f\conv\varphi](s):=f\big(\varphi(s-\cdot)\big)$, $s\in\Rb$.
The convolution of a distribution and a $\Cc^{\infty}_0$ function has the following properties.
\begin{lemma}\label{lem:Conv_Dist1}
	(Theorems 4.1.1-2 on page 88 of \cite{Hormander1990}) 
	For all $f\in\Dc'$ and $\varphi,\psi \in C^\infty_0(\mathbb{R})$, we have that: 
	$(i)$ $f\conv\varphi\in C^\infty$; 
	$(ii)$ $\supp(f\conv\varphi)\subseteq \supp(f)+\supp(\varphi):=\big\{x+y:x\in\supp(f),\,y\in\supp(\varphi)\big\}$; 
	$(iii)$ $(f\conv\varphi)^\prime = f^\prime \conv \varphi = f\conv \varphi^\prime$; and 
	$(iv)$ $f\conv(\varphi\conv\psi) = (f\conv\varphi)\conv\psi$.\qed
\end{lemma}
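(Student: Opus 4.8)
The statement collects the standard elementary facts about convolving a distribution with a test function, and the plan is to read each of $(i)$--$(iv)$ off the defining formula $[f\conv\varphi](s)=f\big(\varphi(s-\cdot)\big)$, using only linearity and continuity of $f$ (in the form of the seminorm estimate of Definition \ref{def:dist}) together with the topology of $\Cc^\infty_0$ generated by the seminorms $\|\cdot\|_{n,\chi}$ of \eqref{eq:C_norm}. Throughout I would write $\varphi_s(t):=\varphi(s-t)$, so that $\varphi_s\in\Cc^\infty_0$, $\supp(\varphi_s)=s-\supp(\varphi)$, and $[f\conv\varphi](s)=f(\varphi_s)$.

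The single analytic input, which I would isolate as a preliminary claim, is that $s\mapsto\varphi_s$ is differentiable as a $\Cc^\infty_0$-valued map: for fixed $s$, the difference quotient $h^{-1}(\varphi_{s+h}-\varphi_s)$ converges to $(\varphi')_s$ in every seminorm $\|\cdot\|_{n,\chi}$ as $h\to0$. This is just Taylor's theorem applied uniformly, since all derivatives of $\varphi$ are uniformly continuous and the supports $s'-\supp(\varphi)$ stay inside a fixed compact set for $s'$ near $s$. Granting it, parts $(i)$ and $(iii)$ are immediate: by linearity and continuity of $f$, $\tfrac{1}{h}\big([f\conv\varphi](s+h)-[f\conv\varphi](s)\big)=f\big(h^{-1}(\varphi_{s+h}-\varphi_s)\big)\to f\big((\varphi')_s\big)=[f\conv\varphi'](s)$, so $f\conv\varphi$ is differentiable with $(f\conv\varphi)'=f\conv\varphi'$; and since $\tfrac{\dd}{\dd t}\varphi_s(t)=-(\varphi')_s(t)$, the definition of the distributional derivative (Definition \ref{def:DerivProdDist}) gives $[f'\conv\varphi](s)=f'(\varphi_s)=-f\big((\varphi_s)'\big)=f\big((\varphi')_s\big)=[f\conv\varphi'](s)$, which is $(iii)$. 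Iterating over $\varphi,\varphi',\varphi'',\dots$ (all in $\Cc^\infty_0$) yields $f\conv\varphi\in\Cc^\infty$ with $(f\conv\varphi)^{(k)}=f\conv\varphi^{(k)}$, which is $(i)$.

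For $(ii)$: if $s\notin\supp(f)+\supp(\varphi)$, then $s-\supp(\varphi)$ is disjoint from $\supp(f)$, so $\varphi_s$ is supported in the open set $\supp(f)^c$, on which $f$ restricts to $0$ (Definition \ref{def:Support}); hence $[f\conv\varphi](s)=f(\varphi_s)=0$. Since $\supp(f)+\supp(\varphi)$ is closed (a Minkowski sum of a closed set and a compact set), it contains the closure of $\{s:[f\conv\varphi](s)\ne0\}$, i.e.\ $\supp(f\conv\varphi)$. For $(iv)$: write $(\varphi\conv\psi)(s-t)=\int_\Rb\varphi(s-t-r)\psi(r)\,\dd r$ and approximate the $r$-integral by Riemann sums; because $\supp(\varphi\conv\psi)\subseteq\supp(\varphi)+\supp(\psi)$ is compact, the sums converge to the integral in $\Cc^\infty_0$, so $f$ passes inside the integral by linearity and continuity: $[f\conv(\varphi\conv\psi)](s)=\int_\Rb f\big(\varphi(s-\cdot-r)\big)\psi(r)\,\dd r=\int_\Rb[f\conv\varphi](s-r)\psi(r)\,\dd r=\big[(f\conv\varphi)\conv\psi\big](s)$, the last being the ordinary convolution since $f\conv\varphi\in\Cc^\infty\subset L^1_\text{loc}$ by $(i)$.

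The only genuine obstacle is the preliminary claim, i.e.\ verifying that $s\mapsto\varphi_s$ is differentiable and that $r\mapsto\varphi(s-\cdot-r)\psi(r)$ is Riemann-integrable, as $\Cc^\infty_0$-valued maps, so that the continuous functional $f$ may legitimately be exchanged with the limit and the integral taken in the locally convex space $\Cc^\infty_0$; this reduces to checking convergence in each seminorm $\|\cdot\|_{n,\chi}$ via uniform continuity of the derivatives of the test functions and the fact that all supports involved remain in a fixed compact set, and everything else is bookkeeping. (Alternatively, one simply cites H\"ormander, Theorems 4.1.1--4.1.2.)
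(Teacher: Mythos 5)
Your proof is correct: the differentiation-under-$f$ argument via convergence of difference quotients in the $\Cc^\infty_0$ topology gives $(i)$ and $(iii)$, the disjoint-support observation (plus closedness of the Minkowski sum of a closed and a compact set) gives $(ii)$, and the Riemann-sum exchange of $f$ with the $r$-integral gives $(iv)$; the sign bookkeeping in $(iii)$, where $\frac{\dd}{\dd t}\varphi_s(t)=-(\varphi')_s(t)$ cancels the minus sign in the definition of $f'$, is handled correctly. The paper itself gives no proof of this lemma --- it is stated purely as a citation of Theorems 4.1.1--4.1.2 of H\"ormander --- so there is nothing to compare against beyond noting that your argument is precisely the standard one found in that reference, reconstructed self-containedly from the paper's Definitions \ref{def:dist}, \ref{def:Support}, and \ref{def:DerivProdDist}.
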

For $a\in\Rb$, define the \emph{translation operator} $\tau_a:\Cc^{\infty}_0\to\Cc^{\infty}_0$ by
\begin{align}\label{eq:translation}
	\tau_a(\varphi) := [\delta_a\conv\varphi]=\varphi(\cdot-a);\quad \varphi\in\Cc^\infty_0,
\end{align}
in which $\delta_a$ is the Dirac measure given by \eqref{eq:DeltaFn}. Direct calculation shows that  $f\conv\tau_a(\varphi) = \tau_a(f\conv \varphi)$ for all $\varphi\in\Cc^\infty_0$. As the following result indicates, the converse of this statement is also true, that is, the only continuous linear map that commutes with all translations is convolution.
\begin{lemma}\label{lem:ConvTrans}
	(Theorems 4.2.1 on page 100 of \cite{Hormander1990}) 
	Consider a continuous linear map $\Lc:C^\infty_0\to C^\infty$ such that $\Lc(\varphi_n)\to 0$ in $\Cc^\infty$ for all $\varphi_n\rightarrow 0$ in $\Cc^\infty_0$. If $\Lc$ commutes with all translations (i.e. $\Lc\big(\tau_a(\varphi)\big) = \tau_a\big(\Lc(\varphi)\big)$ for all $a\in\Rb$ and $\varphi\in\Cc^\infty_0$), then there exists a unique $f\in\Dc'$ such that $\Lc(\varphi) = f\conv\varphi$ for all $\varphi\in \Cc^\infty_0$.\qed 
\end{lemma}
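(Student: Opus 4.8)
\textbf{Proof proposal for Lemma \ref{lem:ConvTrans} (the final statement).}

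The plan is to mimic the classical proof that a translation-invariant continuous operator on $\Cc^\infty_0$ is given by convolution with a fixed distribution, adapting it to the topology generated by the seminorms in \eqref{eq:C_norm}. First I would \emph{define the candidate distribution}. Set $\check\varphi(x):=\varphi(-x)$ and define $f:\Cc^\infty_0\to\Cb$ by
\begin{align}\label{eq:fdef_proof}
	f(\varphi):=\big[\Lc(\check\varphi)\big](0);\quad \varphi\in\Cc^\infty_0.
\end{align}
Linearity of $f$ is immediate from linearity of $\Lc$ and of $\varphi\mapsto\check\varphi$. Continuity of $f$ as a functional on $\Cc^\infty_0$ follows from the hypothesis that $\Lc$ maps null sequences in $\Cc^\infty_0$ to null sequences in $\Cc^\infty$: if $\varphi_n\to0$ in $\Cc^\infty_0$ then $\check\varphi_n\to0$ in $\Cc^\infty_0$, hence $\Lc(\check\varphi_n)\to0$ in $\Cc^\infty$, and in particular evaluation at $0$ tends to $0$; by the seminorm characterization in Definition \ref{def:dist} this gives $f\in\Dc'$.

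Next I would \emph{verify the convolution identity $\Lc(\varphi)=f\conv\varphi$ pointwise}. Fix $\varphi\in\Cc^\infty_0$ and $s\in\Rb$. Using \eqref{eq:translation} and the definition of convolution of a distribution with a test function, $[f\conv\varphi](s)=f\big(\varphi(s-\cdot)\big)$. Now $\varphi(s-\cdot)$ is the function $x\mapsto\varphi(s-x)$, whose reflection is $x\mapsto\varphi(s+x)=[\tau_{-s}\varphi](x)$. Therefore, by \eqref{eq:fdef_proof},
\begin{align}\label{eq:fconv_proof}
	[f\conv\varphi](s)=f\big(\varphi(s-\cdot)\big)=\big[\Lc(\tau_{-s}\varphi)\big](0)=\big[\tau_{-s}\big(\Lc(\varphi)\big)\big](0)=\big[\Lc(\varphi)\big](s),
\end{align}
where the third equality is the translation-invariance hypothesis $\Lc(\tau_a\psi)=\tau_a(\Lc(\psi))$ applied with $a=-s$, and the last equality is the definition of $\tau_{-s}$ evaluated at $0$. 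Since $s$ was arbitrary, $\Lc(\varphi)=f\conv\varphi$ as functions, and then automatically as elements of $\Cc^\infty$ by Lemma \ref{lem:Conv_Dist1}$(i)$.

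Finally I would \emph{establish uniqueness}. If $g\in\Dc'$ also satisfies $\Lc(\varphi)=g\conv\varphi$ for all $\varphi\in\Cc^\infty_0$, then $(f-g)\conv\varphi=0$ for all such $\varphi$; evaluating at $0$ gives $(f-g)\big(\check\varphi\big)=0$ for every $\varphi\in\Cc^\infty_0$, and since $\varphi\mapsto\check\varphi$ is a bijection of $\Cc^\infty_0$ onto itself we conclude $(f-g)(\psi)=0$ for all $\psi\in\Cc^\infty_0$, i.e. $f=g$ in $\Dc'$. The main obstacle I anticipate is purely the bookkeeping of reflections and translation signs in \eqref{eq:fdef_proof}--\eqref{eq:fconv_proof} — getting $\check\varphi$, $\varphi(s-\cdot)$, and $\tau_{-s}\varphi$ consistent — together with making sure that ``evaluation at a point'' is genuinely continuous in the $\Cc^\infty$ topology (which it is, being dominated by the seminorm $\|\cdot\|_{0,\{0\}}$), so that the continuity of $f$ is rigorously justified rather than merely plausible. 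Since this lemma is quoted verbatim from \cite{Hormander1990} (Theorem 4.2.1), one may alternatively simply cite it; the above sketch indicates the self-contained argument.
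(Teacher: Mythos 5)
Your proposal is correct, but it is worth noting that the paper does not actually prove this lemma: it is stated as a quoted result with a bare citation to Theorem 4.2.1 of \cite{Hormander1990}, so there is no in-paper argument to compare against. What you wrote is the standard self-contained proof underlying that theorem: define $f(\varphi):=\big[\Lc(\check\varphi)\big](0)$, check $f\in\Dc'$ from the null-sequence hypothesis, derive $[f\conv\varphi](s)=\big[\Lc(\varphi)\big](s)$ via the translation-invariance applied with $a=-s$ (your sign bookkeeping $\widecheck{\varphi(s-\cdot)}=\tau_{-s}\varphi$ and $[\tau_{-s}g](0)=g(s)$ is right), and get uniqueness by evaluating $(f-g)\conv\varphi$ at $0$ and using that reflection is a bijection of $\Cc^\infty_0$. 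The only point I would tighten is the continuity step: the lemma's hypothesis gives you sequential continuity of $\varphi\mapsto\big[\Lc(\check\varphi)\big](0)$ on $\Cc^\infty_0$, whereas Definition \ref{def:dist} asks for the seminorm estimate $|f(\varphi)|\le C_\chi\|\varphi\|_{n_\chi,\chi}$ on each compact $\chi$; these are equivalent for linear functionals on $\Cc^\infty_0$, but that equivalence is itself a (standard) fact you should either cite or prove with a short contradiction/normalization argument rather than attribute directly to the definition. With that one sentence added, your argument is a complete proof and, in effect, a more informative replacement for the paper's citation.
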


The previous lemma has the following important consequence.
\begin{corollary}
	Let $f_1\in\Dc'$ and $f_2\in\Ec'$. Then, there exists a unique distribution $f\in\Dc'$ such that $f_1\conv(f_2\conv \varphi) = f\conv \varphi$ for all $\varphi\in\Cc^\infty_0$.\qed
\end{corollary}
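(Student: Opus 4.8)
The statement to prove is the final Corollary of Appendix~\ref{sec:FourierAnalysis}: given $f_1\in\Dc'$ and $f_2\in\Ec'$, there is a unique $f\in\Dc'$ with $f_1\conv(f_2\conv\varphi)=f\conv\varphi$ for all $\varphi\in\Cc_0^\infty$. The plan is to reduce everything to Lemma~\ref{lem:ConvTrans} by verifying that the map $\Lc:=f_1\conv(f_2\conv\,\cdot\,)$ satisfies the three hypotheses of that lemma: it should map $\Cc_0^\infty$ into $\Cc^\infty$, be continuous in the required sequential sense, and commute with all translations $\tau_a$.

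First I would observe that $\Lc$ is well-defined and lands in $\Cc^\infty$. Indeed, for $\varphi\in\Cc_0^\infty$, part~$(i)$ of Lemma~\ref{lem:Conv_Dist1} gives $f_2\conv\varphi\in\Cc^\infty$, and part~$(ii)$ gives $\supp(f_2\conv\varphi)\subseteq\supp(f_2)+\supp(\varphi)$, which is compact since both summands are compact; hence $f_2\conv\varphi\in\Cc_0^\infty$, and applying $(i)$ of Lemma~\ref{lem:Conv_Dist1} once more shows $f_1\conv(f_2\conv\varphi)\in\Cc^\infty$. So $\Lc:\Cc_0^\infty\to\Cc^\infty$ is a well-defined linear map. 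Next, translation-invariance: using $(iv)$ of Lemma~\ref{lem:Conv_Dist1} and the identity $f\conv\tau_a(\varphi)=\tau_a(f\conv\varphi)$ recorded just before Lemma~\ref{lem:ConvTrans} (valid because $\tau_a(\varphi)=\delta_a\conv\varphi$ and convolution is associative), one computes
\begin{align}
\Lc\big(\tau_a(\varphi)\big)=f_1\conv\big(f_2\conv(\delta_a\conv\varphi)\big)
=f_1\conv\big(\delta_a\conv(f_2\conv\varphi)\big)
=\delta_a\conv\big(f_1\conv(f_2\conv\varphi)\big)=\tau_a\big(\Lc(\varphi)\big),
\end{align}
for every $a\in\Rb$ and $\varphi\in\Cc_0^\infty$; here I repeatedly use that $f_2\conv\varphi\in\Cc_0^\infty$ so that all the convolutions in sight are among objects at least one of which has compact support, and that $\delta_a\in\Ec'$.

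The step that requires the most care — and the one I expect to be the main obstacle — is the continuity hypothesis: if $\varphi_n\to0$ in $\Cc_0^\infty$, then $\Lc(\varphi_n)\to0$ in $\Cc^\infty$. One must chase continuity through two successive convolutions. The convergence $\varphi_n\to0$ in $\Cc_0^\infty$ means all $\varphi_n$ are supported in a common compact set $K$ and $\|\varphi_n\|_{m,K}\to0$ for every $m$. From the pointwise formula $(f_2\conv\varphi_n)(s)=f_2\big(\varphi_n(s-\cdot)\big)$ together with the seminorm bound defining $f_2\in\Ec'$ (Definition~\ref{def:compdist}), one gets that $f_2\conv\varphi_n\to0$ in $\Cc^\infty$, and moreover all these functions are supported in the fixed compact set $\supp(f_2)+K$, so in fact $f_2\conv\varphi_n\to0$ in $\Cc_0^\infty$. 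Applying the same reasoning to $f_1$ (now only using $f_1\in\Dc'$, i.e.\ the seminorm bound over each compact set) yields $f_1\conv(f_2\conv\varphi_n)\to0$ in $\Cc^\infty$. Once all three hypotheses are in place, Lemma~\ref{lem:ConvTrans} produces a unique $f\in\Dc'$ with $\Lc(\varphi)=f\conv\varphi$ for all $\varphi\in\Cc_0^\infty$, which is exactly the assertion. Uniqueness is automatic from the uniqueness clause of Lemma~\ref{lem:ConvTrans}: if $\tilde f\conv\varphi=f\conv\varphi$ for all test functions then $\tilde f=f$.
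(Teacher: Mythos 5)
Your proposal is correct and follows essentially the same route as the paper: apply Lemma \ref{lem:ConvTrans} to $\Lc(\varphi):=f_1\conv(f_2\conv\varphi)$, with the key point that $f_2\in\Ec'$ keeps $\supp(f_2\conv\varphi)$ compact by Lemma \ref{lem:Conv_Dist1}$(ii)$. You simply spell out the translation-commutation and continuity checks that the paper's one-line proof leaves implicit, and these details are accurate.
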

\begin{proof}
	Apply Lemma \ref{lem:ConvTrans} to $\Lc(\varphi):=f_1\conv(f_2\conv \varphi)$, $\varphi\in\Cc^\infty_0$. Note that we need $f_2\in\Ec'$ so that $\supp(f_2\conv \varphi)$ remains a compact set by Lemma \ref{lem:Conv_Dist1}.$(ii)$.
\end{proof}

Finally, by exploiting the previous lemma, we define convolution of two distributions one of which has a compact support.
\begin{definition}\label{def:ConvDist}
	For $f_1\in\Dc'$ and $f_2\in\Ec'$, we define $f_1\conv f_2=f_2\conv f_1:= f$, in which $f\in\Dc'$ is the unique distribution satisfying $f_1\conv(f_2\conv \varphi) = f\conv \varphi$ for all $\varphi\in\Cc^\infty_0$.\qed
\end{definition}

%
%

\subsection{Tempered distributions and their Fourier transform}\label{sec:Fourier}

The Fourier transform of a function $f\in L^1$ is given by $\F[f](s):=\int_\Rb \ee^{-\ii st}f(t)\dd t$, $s\in\Rb$. The goal of this appendix is to define the Fourier transform for a special class of distributions called tempered distributions, and explore the properties of this generalization of the Fourier transform. 

Let $\Sc$ be the space of \emph{rapidly decreasing functions} (also know as \emph{Schwartz's functions}), which is the set of all functions $\phi\in\Cc^{\infty}$ that satisfy
\begin{align}\label{eq:S_norm}
	\|\varphi\|'_{n,m} := \sup_{t\in\Rb}\left|t^m\frac{\dd^n \varphi(t)}{\dd t^n}\right|<+\infty,
\end{align}
for all $n,m\in\Nb_0$. If $\varphi\in\Sc$, then \eqref{eq:S_norm} yields that, $\frac{d^n\varphi(t)}{dt^n}\to 0$ as $x\to\pm\infty$ for all $n\ge0$. In other words, a rapidly decreasing function has vanishing derivatives of all orders. It can be shown that $\Cc^{\infty}_0 \subset \Sc \subset \Cc^{\infty}$, that $\Sc\subset L^1$, and that $C^{\infty}_0$ is dense in $\Sc$.
Since $\Sc\subset L^1$, any $\varphi\in\Sc$ has the classical Fourier transform $\F[\varphi](s):=\int_\Rb\ee^{\ii st}\varphi(t)\dd t$, $s\in\Rb$. Indeed, the significance of the space $\Sc$ is that the classical Fourier transform is an isomorphism $\F:\Sc\to\Sc$.
\begin{lemma}\label{lem:Isomorphism_Schwartz}
	(Theorem 7.1.5 on page 161 of \cite{Hormander1990})
	For any $\varphi\in\Sc$, the Fourier transform $\F[\varphi](s):=\int_\Rb\ee^{-\ii st}\varphi(t)\dd t$, $s\in\Rb$, is in $\Sc$. Furthermore, the map $\varphi\mapsto \F[\varphi]:\Sc\to\Sc$ is a linear continuous map with a linear continuous inverse given by Fourier's inversion formula
	\begin{align}\label{eq:InvFourier}
		\F^{-1}[\varphi](t) := \frac{1}{2\pi} \int_\Rb \ee^{\ii st} \varphi(s) \dd s;\quad t\in\Rb,
	\end{align}
	or, equivalently, by $\F^2[\varphi](s) = 2\pi\varphi(-s)$, $s\in\Rb$ and $\varphi\in\Sc$.\qed
\end{lemma}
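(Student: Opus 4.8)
The plan is to prove the three assertions in the order stated — that $\F$ maps $\Sc$ into $\Sc$, that $\F$ is continuous on $\Sc$, and that $\F$ is a bijection of $\Sc$ whose inverse is the operator $\F^{-1}$ — with the first two handled together and the last being the substantive part. The starting point would be the two intertwining identities that underlie everything. For $\varphi\in\Sc$, integration by parts (the boundary terms vanish because $\varphi$ and all its derivatives decay faster than any power) gives $\F[\varphi'](s)=\ii s\,\F[\varphi](s)$, while differentiation under the integral sign, legitimate by dominated convergence since $t\mapsto t\varphi(t)$ lies in $L^1$, gives $\tfrac{\dd}{\dd s}\F[\varphi](s)=\F\big[-\ii t\,\varphi(t)\big](s)$. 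Iterating these, for every $n,m\in\Nb_0$ one obtains, up to a unimodular constant, $s^m\,\tfrac{\dd^n}{\dd s^n}\F[\varphi](s)=\pm\,\F\big[\tfrac{\dd^m}{\dd t^m}\big(t^n\varphi(t)\big)\big](s)$, and the bracketed function is again in $\Sc\subset L^1$. Hence $\big\|\F[\varphi]\big\|'_{n,m}\le\big\|\tfrac{\dd^m}{\dd t^m}(t^n\varphi)\big\|_{L^1}$; estimating this $L^1$-norm by $\sup_{t\in\Rb}(1+t^2)\big|\tfrac{\dd^m}{\dd t^m}(t^n\varphi)(t)\big|\cdot\int_\Rb(1+t^2)^{-1}\dd t$ and expanding the derivative by the Leibniz rule bounds it by a finite linear combination of the seminorms $\|\varphi\|'_{n',m'}$. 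This shows simultaneously that $\F[\varphi]\in\Sc$ and that $\varphi\mapsto\F[\varphi]$ is continuous (linearity being obvious).

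For the inversion formula the essential ingredient is the Fourier transform of the Gaussian $g(t):=\ee^{-t^2/2}$. Since $g'=-t\,g$, applying the two intertwining relations to this ODE yields $\big(\F[g]\big)'(s)=-s\,\F[g](s)$, so $\F[g]=c\,g$ with $c=\F[g](0)=\int_\Rb\ee^{-t^2/2}\dd t=\sqrt{2\pi}$; the scaling rule then gives $\F[g(\eps\,\cdot)](s)=\sqrt{2\pi}\,\eps^{-1}\ee^{-s^2/(2\eps^2)}$ for $\eps>0$. Next I would invoke the multiplication formula $\int_\Rb\F[\varphi]\,\psi=\int_\Rb\varphi\,\F[\psi]$, valid for $\varphi,\psi\in\Sc$ by Fubini, with the choice $\psi(s):=\ee^{-\eps^2s^2/2}\ee^{\ii sx}$, whose Fourier transform is the translated Gaussian $\sqrt{2\pi}\,\eps^{-1}\ee^{-(t-x)^2/(2\eps^2)}$. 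This gives
\begin{align}
\int_\Rb\F[\varphi](s)\,\ee^{-\eps^2 s^2/2}\,\ee^{\ii sx}\,\dd s
\;=\;2\pi\int_\Rb\varphi(t)\,\frac{1}{\eps\sqrt{2\pi}}\,\ee^{-(t-x)^2/(2\eps^2)}\,\dd t .
\end{align}
Letting $\eps\downarrow0$, the left-hand side tends to $\int_\Rb\F[\varphi](s)\,\ee^{\ii sx}\,\dd s$ by dominated convergence (using $\F[\varphi]\in L^1$ from the first step), and the right-hand side tends to $2\pi\varphi(x)$ because the kernel $\tfrac{1}{\eps\sqrt{2\pi}}\ee^{-(t-x)^2/(2\eps^2)}$ is an approximate identity and $\varphi$ is continuous. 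This yields $\F^{-1}\big[\F[\varphi]\big]=\varphi$ with $\F^{-1}$ as defined.

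Finally, writing $R$ for the reflection $R\varphi(s)=\varphi(-s)$, one has $\F^{-1}=\tfrac{1}{2\pi}R\circ\F=\tfrac{1}{2\pi}\F\circ R$, so the identity $\F^{-1}\circ\F=\mathrm{id}$ just proved is exactly the restated formula $\F^2[\varphi](s)=2\pi\varphi(-s)$; the same relation gives $\F\circ\F^{-1}=\mathrm{id}$, hence $\F$ is a bijection of $\Sc$, and $\F^{-1}$ is continuous since it is $\F$ composed with the trivially continuous maps $R$ and scalar multiplication. I expect the main obstacle to be precisely the analytic core of this last step: establishing $\F[g]=\sqrt{2\pi}\,g$ cleanly, and, more delicately, justifying the $\eps\downarrow0$ limit of the right-hand side as genuine approximate-identity convergence. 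Parts (a)–(b) are, by contrast, routine bookkeeping with the seminorms once the two intertwining relations are in hand.
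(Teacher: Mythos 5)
Your proof is correct. Note that the paper does not prove this lemma at all --- it is quoted verbatim from H\"ormander (Theorem 7.1.5) as background for Appendix B --- so there is no in-paper argument to compare against; what you give is precisely the classical proof of that theorem: the intertwining identities $\F[\varphi']=\ii s\,\F[\varphi]$ and $\frac{\dd}{\dd s}\F[\varphi]=\F[-\ii t\varphi]$ combined with the Leibniz/seminorm estimate for continuity of $\F:\Sc\to\Sc$, and the Gaussian--multiplication-formula (approximate identity) argument for the inversion formula, with the reflection identity $\F R=R\F$ closing the bijectivity claim. All the analytic justifications you flag (boundary terms vanishing, $\F[\varphi]\in L^1$ for the dominated convergence step, continuity and boundedness of $\varphi$ for the approximate-identity limit) are indeed available because $\varphi\in\Sc$, so there is no gap.
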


As mentioned earlier, our goal is to define the Fourier transform for tempered distributions, which we define next.
\begin{definition}\label{def:TempDist}
	The space of \emph{tempered distributions} $\Sc'$ is the dual space of $\Sc$ with the topology generated by the seminorms $\|\cdot\|'_{n,m}$ given by \eqref{eq:S_norm}. In other words, a linear functional $f:\Sc\to\Cb$ is a tempered distribution if there exists a $k\in\Nb_0$ and a constant $C>0$ such that $\big|f(\varphi)\big|\le C \max\big\{\|\varphi\|'_{m,n}:m+n\le k\big\}$ for all $\varphi\in\Sc$. Note that $\Sc' \subset \Dc'$ since $\Cc_0^\infty\subset \Sc$.\qed
\end{definition}

Assume that $f\in\Sc'$ is a function, that is
\begin{align}
	f(\varphi)=\int_\Rb \ft(t)\varphi(t)\dd t;\quad \varphi\in\Sc,
\end{align}
for some function $\ft:\Rb\to\Rb$ (recall \eqref{eq:fn_dist}). Since the improper integral on the right side of the above equation needs to converge for all rapidly decreasing functions $\varphi$, it follows that $\ft$ cannot grow too fast at $\pm\infty$. Thus, we may refer to $\Sc'$ as the space of ``slowly growing distributions.'' In particular, $\Ec'\subset \Sc'$ since $\Sc\subset \Cc^\infty$ (recall, from Definition \ref{def:compdist}, that $\Ec'$ is the set of distributions with compact support). Other elements of $\Sc'$ are functions $f$ with polynomial growth (i.e. $|f(t)|\le C(1+|t|)^m$, $t\in\Rb$, for some $C,m>0$) and measures $\mu$ on $\Rb$ satisfying $\int_\Rb (1+|t|)^{-m}\mu(\dd t)<+\infty$ for some $m>0$. This implies that the product of a polynomial with an $L^p$ function, $p\ge1$, is a tempered distribution. Finally, $\Sc'$ is closed under differentiation and multiplication by elements of $\Sc$ (as defined by Definition \ref{def:DerivProdDist}).

By exploiting Lemma \ref{lem:Isomorphism_Schwartz}, we now define the Fourier transform $\F[f]$ of a tempered distributions $f\in\Sc'$.

\begin{definition}\label{def:Fourier_tempered}
	For $f\in\Sc'$, define $\F[f]\in\Sc'$ by $\F[f](\varphi) := f\big(\F[\varphi]\big)$, $\varphi\in\Sc$. That $\F[f]\in\Sc'$ follows from Lemma \ref{lem:Isomorphism_Schwartz}. \qed
\end{definition}

For all $f\in\Sc'$ and $\varphi\in\Sc$, applying the above definition twice and then Lemma \ref{lem:Isomorphism_Schwartz} yields Fourier's inversion formula for tempered distributions,
\begin{align}
	\F^2[f](\varphi) := \F\big[\F[f]\big](\varphi)=\F[f]\big(\F[\varphi]\big)
	=f\big(\F^2(\varphi)\big)=2\pi f\big(\check{\varphi}\big),
\end{align}
in which we have defined $\check{\varphi}(t) := \varphi(-t)$. In fact, the counterpart of Lemma \ref{lem:Isomorphism_Schwartz} also holds for the Fourier transform on the space of tempered distributions. That is, for all tempered distributions, the inverse Fourier transform exists and is continuous.

\begin{lemma}\label{lem:Isomorphism_Tempered}
	(Theorem 7.1.10 on page 164 of \cite{Hormander1990})
	The Fourier transform, given by Definition \ref{def:Fourier_tempered}, is a continuous linear map $f\mapsto \F[f]:\Sc'\to\Sc'$ (with the weak topology of $\Sc'$). Furthermore, it has a linear continuous inverse $f\mapsto \F^{-1}[f]:\Sc'\to\Sc'$ given by 
	$\F^{-1}[f] := \frac{1}{2\pi} \F[\check{f}]$ 
	or, equivalently, by Fourier's inversion formula
	$\F^2[f] = 2\pi \check{f}$, 
	in which $\check{f}$ is given by 
	$\check{f}(\varphi) = f\big(\varphi(-\cdot)\big)$, $\varphi\in\Sc$.\qed
\end{lemma}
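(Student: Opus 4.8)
The plan is to derive every assertion by transposing the corresponding statement about $\Sc$ furnished by Lemma \ref{lem:Isomorphism_Schwartz}, using throughout the defining relation $\F[f](\varphi)=f\big(\F[\varphi]\big)$ of Definition \ref{def:Fourier_tempered}. First, linearity of $f\mapsto\F[f]$ on $\Sc'$ is immediate, since both $f$ and $\varphi\mapsto\F[\varphi]$ are linear. That $\F[f]\in\Sc'$ (already noted after Definition \ref{def:Fourier_tempered}) follows because $\F[f]$ is linear on $\Sc$ and, if $\varphi_n\to0$ in $\Sc$, then $\F[\varphi_n]\to0$ in $\Sc$ by Lemma \ref{lem:Isomorphism_Schwartz}, so $\F[f](\varphi_n)=f\big(\F[\varphi_n]\big)\to0$ by continuity of $f$. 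Continuity of $\F:\Sc'\to\Sc'$ in the weak topology of $\Sc'$ is then essentially automatic: if $f_\alpha\to f$ weakly, i.e. $f_\alpha(\psi)\to f(\psi)$ for every $\psi\in\Sc$, then for each fixed $\varphi\in\Sc$ we have $\F[f_\alpha](\varphi)=f_\alpha\big(\F[\varphi]\big)\to f\big(\F[\varphi]\big)=\F[f](\varphi)$, because $\F[\varphi]\in\Sc$ by Lemma \ref{lem:Isomorphism_Schwartz}; hence $\F[f_\alpha]\to\F[f]$ weakly.

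The one computation requiring care is the inversion identity $\F^2[f]=2\pi\check f$. Writing $R$ for the reflection $(R\varphi)(t):=\varphi(-t)$, one first checks on Schwartz functions that $\F$ and $R$ commute: a change of variable gives $\F[R\varphi](s)=\int_\Rb\ee^{-\ii st}\varphi(-t)\,\dd t=\int_\Rb\ee^{\ii st'}\varphi(t')\,\dd t'=\F[\varphi](-s)=(R\F[\varphi])(s)$. Then, for $f\in\Sc'$ and $\varphi\in\Sc$, applying Definition \ref{def:Fourier_tempered} twice and the formula $\F^2[\varphi]=2\pi R\varphi$ from Lemma \ref{lem:Isomorphism_Schwartz},
\[
\F^2[f](\varphi)=\F[f]\big(\F[\varphi]\big)=f\big(\F^2[\varphi]\big)=f\big(2\pi R\varphi\big)=2\pi\, f(R\varphi)=2\pi\,\check f(\varphi),
\]
so $\F^2=2\pi R$ on $\Sc'$. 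Since $R^2=\mathrm{id}$ and $R$ commutes with $\F$ (transpose the Schwartz-level identity), it follows that $\F^4=(2\pi)^2\,\mathrm{id}$, so $\F$ is a bijection of $\Sc'$ with $\F^{-1}=(2\pi)^{-2}\F^3=(2\pi)^{-1}\F R$, i.e. $\F^{-1}[f]=(2\pi)^{-1}\F[\check f]$; both $\F^{-1}\F=\mathrm{id}$ and $\F\F^{-1}=\mathrm{id}$ are read off from $\F^2=2\pi R$, $R^2=\mathrm{id}$, $\F R=R\F$. Finally, $\check{}\,$ is weakly continuous ($\check f_\alpha(\varphi)=f_\alpha(R\varphi)\to f(R\varphi)=\check f(\varphi)$), so $\F^{-1}$, being a composition of weakly continuous maps, is weakly continuous as well.

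The main (mild) obstacle is bookkeeping: keeping straight which identities are inherited from $\Sc$ via Lemma \ref{lem:Isomorphism_Schwartz} and which are verified directly, checking that $R$ genuinely commutes with $\F$ and is an involution, and confirming that both one-sided inverse identities hold. There is no analytic difficulty — the transpose construction reduces everything to Lemma \ref{lem:Isomorphism_Schwartz}, and working in the weak topology of $\Sc'$ is precisely what makes the two continuity claims trivial, so no quantitative estimates on the seminorms $\|\cdot\|'_{n,m}$ are needed.
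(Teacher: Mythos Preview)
Your proof is correct. The paper itself does not supply a proof of this lemma: it is stated with a citation to H\"ormander (Theorem 7.1.10) and closed with \qed, so there is no in-paper argument to compare against. Your self-contained derivation by duality from Lemma \ref{lem:Isomorphism_Schwartz} --- verifying weak continuity directly from the definition, establishing $\F^2=2\pi R$ on $\Sc'$ via the Schwartz-level inversion, checking that $R$ and $\F$ commute, and reading off $\F^{-1}=(2\pi)^{-1}\F R$ from $\F^4=(2\pi)^2\,\mathrm{id}$ --- is exactly the standard route and fills in what the paper leaves to the reference.
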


\begin{example}\label{exm:Fourier_delta}
	Consider the Dirac measure $\delta_a$, $a\in\mathbb{R}$, given by \eqref{eq:DeltaFn}. Since $\del_a\in\Ec'\subset\Sc'$, $\F[\del_a]$ is defined as a tempered distribution. In fact, we have that
	\begin{align}
		\F[\del_a](\varphi):=\del_a\big(\F[\varphi]\big) = \F[\varphi](a) = \int_\mathbb{R} e^{-\ii a t} \varphi(t)dt,\quad \varphi\in\Sc.
	\end{align}
	In light of \eqref{eq:fn_dist}, $\F[\delta_a]$ is the tempered distribution corresponding to $\ft(t)=\ee^{-\ii a t}$, $t\in\Rb$, and, with the usual abuse of notation, we may write $\F[\del_a]=\ee^{-\ii a(\cdot)}$.	
	The Fourier's inversion formula (Lemma \ref{lem:Isomorphism_Tempered}) then yields 
	$\F[\ee^{\ii a(\cdot)}] = \F\big[\F[\del_{-a}]\big]=\F^2[\del_{-a}] = 2\pi\check{\del}_{-a} = 2\pi\del_a$.
	Equivalently, $\F^{-1}[\del_a] = \frac{1}{2\pi}\F[\del_{-a}]=\frac{1}{2\pi}\ee^{\ii a(\cdot)}$. By setting $a=0$, we obtain that $\F[\del_0]=1$ and $\F[1]=2\pi\del_0$.\qed
\end{example}

In the previous example, $\del_a\in\Ec'$ and $\F[\del_a](s)=\ee^{-i a s}=\del_a\big(\ee^{-is(\cdot)}\big)$ is an entire analytical function for $s\in\Cb$. As the following result shows, the same is true for the Fourier transform of any distribution with compact support.
\begin{lemma}\label{lem:FourierLaplace}
	(Theorem 7.1.14 on page 165 of \cite{Hormander1990})
	For all $f\in\Ec'$, $\F[f](\varphi)=\int_\Rb \fh(s)\varphi(s)\dd s$, $\varphi\in\Sc$, in which 
	$\fh(s):=f\big(\ee^{-is(\cdot)}\big)$. Therefore, with the usual abuse of notation, we may write $\F[f]= \fh$ for all $f\in\Ec'$. Furthermore, the function $\fh(s)$ is defined for all complex numbers $s\in\Cb$ and is an entire analytic function called the Fourier-Laplace transform of $f$.\qed
\end{lemma}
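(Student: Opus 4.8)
The plan is to prove the three assertions in turn: that $\fh(s):=f\big(\ee^{-\ii s(\cdot)}\big)$ is well defined for every $s\in\Cb$, that $\F[f]$ coincides with $\fh$ as a tempered distribution, and that $\fh$ is entire. The first point is immediate: since $f\in\Ec'$ is, by Definition \ref{def:compdist}, a continuous linear functional on $\Cc^{\infty}$, and since for each fixed $s\in\Cb$ the map $t\mapsto\ee^{-\ii st}$ belongs to $\Cc^{\infty}$, the number $\fh(s)=f\big(\ee^{-\ii s(\cdot)}\big)$ makes sense. The only quantitative input I will use throughout is the defining seminorm bound: there is a compact $\chi\subset\Rb$, an integer $n\in\Nb_0$, and a constant $C>0$ with $\big|f(\psi)\big|\le C\,\|\psi\|_{n,\chi}$ for all $\psi\in\Cc^{\infty}$, where $\|\cdot\|_{n,\chi}$ is the seminorm of \eqref{eq:C_norm}.

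For the identification $\F[f]=\fh$, I would unwind Definition \ref{def:Fourier_tempered}. For $\varphi\in\Sc$ we have $\F[f](\varphi)=f\big(\F[\varphi]\big)$ with $\F[\varphi](t)=\int_\Rb\ee^{-\ii ts}\varphi(s)\,\dd s$ by Lemma \ref{lem:Isomorphism_Schwartz}. The function $t\mapsto\F[\varphi](t)$ is the $\Cc^{\infty}$-valued ``superposition'' $\int_\Rb\big(\ee^{-\ii s(\cdot)}\big)\varphi(s)\,\dd s$, and the key step is to justify pulling $f$ through this integral, i.e. $f\big(\F[\varphi]\big)=\int_\Rb f\big(\ee^{-\ii s(\cdot)}\big)\varphi(s)\,\dd s=\int_\Rb\fh(s)\varphi(s)\,\dd s$. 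I would do this by a Riemann-sum approximation: truncate to $[-R,R]$, partition, and note that the sums $\sum_j\ee^{-\ii s_j(\cdot)}\varphi(s_j)\Delta s_j$ converge in $\Cc^{\infty}$ (that is, with all $t$-derivatives, uniformly on each compact) to $\int_{-R}^R\ee^{-\ii s(\cdot)}\varphi(s)\,\dd s$, because $\partial_t^k\ee^{-\ii st}=(-\ii s)^k\ee^{-\ii st}$ is jointly continuous in $(s,t)$; then the tails $\int_{|s|>R}$ are controlled in every seminorm $\|\cdot\|_{n,\chi}$ using the rapid decay of $\varphi\in\Sc$. Applying the continuity of $f$ and the bound above finishes the interchange, giving $\F[f](\varphi)=\int_\Rb\fh(s)\varphi(s)\,\dd s$ for all $\varphi\in\Sc$, which is exactly $\F[f]=\fh$ in the sense of \eqref{eq:fn_dist}.

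For entireness I would first check that $\fh$ is continuous on $\Cb$: if $s_m\to s$, then on any compact $\chi$ and for each $k\le n$ one has $\sup_{t\in\chi}\big|(-\ii s_m)^k\ee^{-\ii s_m t}-(-\ii s)^k\ee^{-\ii st}\big|\to0$, so $\big\|\ee^{-\ii s_m(\cdot)}-\ee^{-\ii s(\cdot)}\big\|_{n,\chi}\to0$ and hence $\fh(s_m)\to\fh(s)$ by continuity of $f$. Then I would apply Morera's theorem: for any triangle $\gamma$ in $\Cb$, approximating the contour integral by Riemann sums as above (again using convergence in $\Cc^{\infty}$) and using linearity and continuity of $f$ gives $\oint_\gamma\fh(s)\,\dd s=f\Big(t\mapsto\oint_\gamma\ee^{-\ii st}\,\dd s\Big)$, and for every fixed $t$ the integrand $s\mapsto\ee^{-\ii st}$ is entire, so Cauchy's theorem makes the inner integral vanish identically in $t$; thus $\oint_\gamma\fh=f(0)=0$, and Morera yields holomorphy of $\fh$ on all of $\Cb$. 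Alternatively, one can differentiate directly: $\frac{\fh(s+h)-\fh(s)}{h}=f\big(\tfrac{\ee^{-\ii(s+h)(\cdot)}-\ee^{-\ii s(\cdot)}}{h}\big)$, and the difference quotient converges in $\Cc^{\infty}$ to $-\ii(\cdot)\ee^{-\ii s(\cdot)}$, so $\fh'(s)=f\big((-\ii(\cdot))\ee^{-\ii s(\cdot)}\big)$.

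The only real obstacle is the topological bookkeeping common to all three steps: showing that the relevant $\Cc^{\infty}$-valued maps --- namely $s\mapsto\ee^{-\ii s(\cdot)}$ and the vector-valued integrals built from it --- are continuous, differentiable, and Riemann-approximable in the topology of $\Cc^{\infty}$ (locally uniform convergence of all derivatives), so that the continuous functional $f$ may be interchanged with limits, integrals, and contour integrals. Once these convergences are verified via the explicit formula $\partial_t^k\ee^{-\ii st}=(-\ii s)^k\ee^{-\ii st}$ and the rapid decay of Schwartz functions, everything reduces to applying the definition of $\Ec'$ together with Cauchy's and Morera's theorems.
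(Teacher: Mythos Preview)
Your proof is correct and follows the standard route found in H\"ormander's own argument: use the $\Ec'$ seminorm bound to make sense of $\fh(s)$, justify the interchange $f(\F[\varphi])=\int_\Rb f(\ee^{-\ii s(\cdot)})\varphi(s)\,\dd s$ via Riemann sums converging in the $\Cc^\infty$ topology, and obtain entireness by Morera or by differentiating under $f$. The paper itself does not supply a proof at all---the lemma is stated with a citation to Theorem 7.1.14 of \cite{Hormander1990} and closed with \qed---so there is nothing further to compare against; you have simply filled in what the paper outsourced.
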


The following result states the convolution theorem for the Fourier transform of tempered distributions. Recall from Definition \ref{def:ConvDist} that, for the convolution $f_1\conv f_2$ to be defined, either $f_1$ or $f_2$ must have compact support.
\begin{lemma}\label{lem:Conv}
	(Theorem 7.1.15 on page 166 of \cite{Hormander1990})
	Assume that $f_1\in\Ec'$ and $f_2\in\Sc'$. Then $f_1\conv f_2\in\Sc'$ and $\F[f_1\conv f_2]=\F[f_1] \F[f_2]$. Note that the product on the right side is defined by Definition \ref{def:DerivProdDist} because $\F[f_1]\in\Cc^{\infty}$ by Lemma \ref{lem:FourierLaplace}.\qed
\end{lemma}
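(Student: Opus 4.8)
Write $\hat f_1:=\F[f_1]$ for the smooth function furnished by Lemma~\ref{lem:FourierLaplace}. Applying that lemma to $(-\ii t)^k f_1\in\Ec'$ shows that $\hat f_1$ and each of its derivatives grow at most polynomially on $\Rb$, so multiplication by $\hat f_1$ maps $\Sc$ into $\Sc$ and $\Sc'$ into $\Sc'$; in particular $\F[f_1]\,\F[f_2]\in\Sc'$ is well defined by Definition~\ref{def:DerivProdDist}. The plan is to evaluate the two tempered distributions $\F[f_1\conv f_2]$ and $\F[f_1]\,\F[f_2]$ on an arbitrary $\varphi\in\Sc$ and show they agree; the key device is the explicit formula $(f_1\conv f_2)(\psi)=f_2\big(\check f_1\conv\psi\big)$, where $\check f_1\in\Ec'$ is the reflection of $f_1$.

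First I would upgrade the convolution $f_1\conv\psi$ from $\psi\in\Cc_0^\infty$ to $\psi\in\Sc$: since $\Sc\subset\Cc^\infty$ and $f_1\in\Ec'$ acts on $\Cc^\infty$, the pointwise recipe $(f_1\conv\psi)(s):=f_1\big(\psi(s-\cdot)\big)$ makes sense, and differentiating under $f_1$ gives $f_1\conv\psi\in\Cc^\infty$ with $\partial^k(f_1\conv\psi)=f_1\conv\partial^k\psi$. Using the finite-order bound $|f_1(\eta)|\le C\,\|\eta\|_{N,\mathcal{K}}$ for a compact $\mathcal{K}\supseteq\supp(f_1)$ (Definition~\ref{def:compdist}) and the rapid decay of $\psi$, uniform over the bounded set $\mathcal{K}$, one checks $\|f_1\conv\psi\|'_{n,m}<+\infty$ for all $n,m$, dominated by finitely many Schwartz seminorms of $\psi$; hence $\psi\mapsto f_1\conv\psi$ is a continuous linear map $\Sc\to\Sc$ (and likewise for $\check f_1$). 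Pulling $f_1$ through the absolutely convergent integral also yields the function-level transform rule
\begin{align}\label{eq:conv-formula}
\F[f_1\conv\psi](\xi)=f_1\Big(t\mapsto\int_\Rb \ee^{-\ii\xi s}\psi(s-t)\,\dd s\Big)=\F[\psi](\xi)\,f_1\big(\ee^{-\ii\xi(\cdot)}\big)=\hat f_1(\xi)\,\F[\psi](\xi),\qquad\psi\in\Sc.
\end{align}

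Next, for $\psi\in\Cc_0^\infty$ I would derive $(f_1\conv f_2)(\psi)=f_2\big(\check f_1\conv\psi\big)$ from Definition~\ref{def:ConvDist}: apply the identity $g(\psi)=(g\conv\check\psi)(0)$ with $g=f_1\conv f_2=f_2\conv f_1$, then $(f_2\conv f_1)\conv\check\psi=f_2\conv(f_1\conv\check\psi)$ (Definition~\ref{def:ConvDist}, legitimate since $f_1\conv\check\psi\in\Cc_0^\infty$ by Lemma~\ref{lem:Conv_Dist1}), then $(f_2\conv h)(0)=f_2\big(h(-\cdot)\big)$ for $h\in\Cc_0^\infty$, and finally the pointwise identity $\big(f_1\conv\check\psi\big)(-\cdot)=\check f_1\conv\psi$. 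By the previous step the functional $\psi\mapsto f_2\big(\check f_1\conv\psi\big)$ is continuous on $\Sc$, so $f_1\conv f_2\in\Sc'$ and the formula persists for all $\psi\in\Sc$. Now fix $\varphi\in\Sc$. By Definition~\ref{def:Fourier_tempered} and this formula, $\F[f_1\conv f_2](\varphi)=(f_1\conv f_2)(\F[\varphi])=f_2\big(\check f_1\conv\F[\varphi]\big)$. Applying \eqref{eq:conv-formula} to $\check f_1$ and $\F[\varphi]$, and using $\F^2[\varphi]=2\pi\check\varphi$ (Lemma~\ref{lem:Isomorphism_Schwartz}), $\F[\check f_1]=\hat f_1(-\cdot)$, and injectivity of $\F$ on $\Sc$, one obtains $\check f_1\conv\F[\varphi]=\F\big[\hat f_1\cdot\varphi\big]$. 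Therefore
\begin{align}
\F[f_1\conv f_2](\varphi)=f_2\big(\F[\hat f_1\,\varphi]\big)=\F[f_2]\big(\hat f_1\,\varphi\big)=\big(\F[f_1]\,\F[f_2]\big)(\varphi),
\end{align}
by Definitions~\ref{def:Fourier_tempered} and~\ref{def:DerivProdDist}; as $\varphi\in\Sc$ was arbitrary, $\F[f_1\conv f_2]=\F[f_1]\,\F[f_2]$.

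The analytic heart is the first step: showing $f_1\conv\psi\in\Sc$ with continuous dependence on $\psi$ and justifying the interchange of $f_1$ with the integral in \eqref{eq:conv-formula} (this is where the finite order and compact support of $f_1$ enter), and then reconciling the convenient formula $(f_1\conv f_2)(\psi)=f_2\big(\check f_1\conv\psi\big)$ with the paper's indirect Definition~\ref{def:ConvDist} of $f_1\conv f_2$. Once these are in place, the Fourier computation in the last step is purely formal.
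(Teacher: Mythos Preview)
The paper does not supply its own proof of this lemma; it simply cites Theorem 7.1.15 of \cite{Hormander1990} and marks the statement with \qed. Your proposal is therefore not being compared against an in-paper argument but against the textbook result being quoted.

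That said, your argument is correct and is essentially the standard proof one finds in H\"ormander. The three ingredients you isolate---(a) that convolution with a fixed $f_1\in\Ec'$ maps $\Sc$ continuously into $\Sc$ (using the finite-order/compact-support bound from Definition~\ref{def:compdist}), (b) the functional identity $(f_1\conv f_2)(\psi)=f_2(\check f_1\conv\psi)$ deduced from Definition~\ref{def:ConvDist}, and (c) the function-level transform rule $\F[f_1\conv\psi]=\hat f_1\,\F[\psi]$ obtained by pulling $f_1$ through the absolutely convergent integral---are exactly the pieces that combine to give both $f_1\conv f_2\in\Sc'$ and the convolution theorem. Your verification that $\check f_1\conv\F[\varphi]=\F[\hat f_1\varphi]$ via $\F^2[\varphi]=2\pi\check\varphi$ and injectivity of $\F$ on $\Sc$ is also the clean way to close the computation. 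Nothing is missing; this is the proof the cited reference gives, reorganised slightly to fit the paper's Definition~\ref{def:ConvDist} of convolution.
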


From Definitions \ref{def:DerivProdDist} and \ref{def:ConvDist}, we have $f'=f\conv\del'_0$, in which $\del'_0\in\Ec'$ is the derivative of the Dirac measure, that is, $\del'_0(\varphi)=\varphi'(0)$, $\varphi\in\Cc^{\infty}$. Using this fact, we obtain the following corollary of Lemma \ref{lem:Conv}.

\begin{lemma}\label{lem:Fourier_Deriv}
	For $f\in\Sc'$, we have that $\F[f']= isf$ and $\F[t f]= i\F[f]'$.\qed
\end{lemma}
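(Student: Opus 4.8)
The plan is to derive the first identity from the convolution theorem (Lemma \ref{lem:Conv}) and the second from the adjoint definition of the Fourier transform (Definition \ref{def:Fourier_tempered}); both steps use only facts recalled in this appendix, provided one keeps the sign convention $\F[\varphi](s)=\int_\Rb\ee^{-\ii st}\varphi(t)\dd t$ of Lemma \ref{lem:Isomorphism_Schwartz} and Definition \ref{def:Fourier_tempered} explicit throughout.

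\textbf{The identity $\F[f']=\ii s\,\F[f]$.} As recorded just above the statement, $f'=f\conv\del'_0$ with $\del'_0\in\Ec'$, so Lemma \ref{lem:Conv} applies and gives $\F[f']=\F[\del'_0]\,\F[f]$, the product being meaningful because $\F[\del'_0]\in\Cc^{\infty}$ by Lemma \ref{lem:FourierLaplace}. It then remains only to compute $\F[\del'_0]$: by Lemma \ref{lem:FourierLaplace} it is represented by the function $s\mapsto\del'_0\big(\ee^{-\ii s(\cdot)}\big)$, and since $\del'_0(\psi)=-\psi'(0)$ by Definition \ref{def:DerivProdDist}, evaluating on $\psi(t)=\ee^{-\ii st}$ gives $\del'_0\big(\ee^{-\ii s(\cdot)}\big)=\ii s$. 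Substituting back yields $\F[f']=\ii s\,\F[f]$.

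\textbf{The identity $\F[tf]=\ii\,\F[f]'$.} Here I would pair both sides against an arbitrary $\varphi\in\Sc$. First record the elementary test-function fact (one integration by parts in the classical Fourier integral, legitimate since $\Sc\subset L^1$) that $\F[\varphi']$ is the function $u\mapsto\ii u\,\F[\varphi](u)$. Then, using Definition \ref{def:DerivProdDist} for the distributional derivative and Definition \ref{def:Fourier_tempered} for $\F$ on $\Sc'$, one computes $\F[f]'(\varphi)=-\F[f](\varphi')=-f\big(\F[\varphi']\big)=-\ii\,f\big(u\mapsto u\,\F[\varphi](u)\big)=-\ii\,(tf)\big(\F[\varphi]\big)=-\ii\,\F[tf](\varphi)$, the middle equality being just the definition of the product $tf$ of the tempered distribution $f$ with the polynomial $t\mapsto t$ (which maps $\Sc$ into itself, so every pairing here is well defined). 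Since $\varphi\in\Sc$ was arbitrary, $\F[f]'=-\ii\,\F[tf]$, that is, $\F[tf]=\ii\,\F[f]'$.

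The substantive content is thus minimal, and the only point requiring genuine care is the bookkeeping of the factors $\ii$ and of signs, all of which are pinned down by the convention above; I would double-check it on $f=\del_0$, for which $f'=\del'_0$, $\F[f]=1$ by Example \ref{exm:Fourier_delta}, and both sides of the first identity equal the function $s\mapsto\ii s$, while $tf=t\del_0=0$ and $\F[f]'=0$ agree with the second. Everything else — that $f\conv\del'_0=f'$ (already asserted in the text via Definitions \ref{def:DerivProdDist} and \ref{def:ConvDist}), that $u\mapsto u\,\F[\varphi](u)$ stays in $\Sc$, the one-line integration by parts, and the applicability of Lemmas \ref{lem:Conv} and \ref{lem:FourierLaplace} — is routine.
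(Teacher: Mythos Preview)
Your proof is correct and follows the paper's own route: the paper merely records $f'=f\conv\del_0'$ and declares the lemma a corollary of Lemma~\ref{lem:Conv}, which is precisely your argument for the first identity, while the second identity is left entirely to the reader and your direct duality computation via Definitions~\ref{def:DerivProdDist} and~\ref{def:Fourier_tempered} is the standard way to fill that gap. One minor note: the paper's parenthetical ``$\del'_0(\varphi)=\varphi'(0)$'' drops a sign, and your use of $\del'_0(\psi)=-\psi'(0)$ (from Definition~\ref{def:DerivProdDist}) is the correct one.
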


%

%
%

\section{Proof of Proposition \ref{prop:ExistenceUniquness}}\label{app:ExistenceUniquness}

Take an arbitrary $\thetav\in\Xi$. Condition $(ii)$ yields that
\begin{align}\label{eq:J1k}
	\F[J_{1,k}(\cdot,\thetav)]\F[\mu_{\thetav,k}]=\F[J_{0,k}];\quad k\in\{1,2\}.
\end{align}
Note that the product on the left side of this equation is defined by Definition \ref{def:DerivProdDist} since $\F[\mu_{\thetav,k}]\in\Cc^{\infty}$ by Assumption \ref{assum:gam1gam2}. Next, we apply the convolution theorem (i.e. Lemma \ref{lem:Conv}) to the left side of \eqref{eq:J1k}. Doing so, however, require $\mu_{\thetav,k}$ to have compact support which is not true in general. For instance, $\supp(\mu_{\thetav,k})=\Rb$ in the Black-Scholes model, see Subsection \ref{sec:BS}. To circumvent this obstacle, we use a localization argument.

For $a>0$ and $k\in\{1,2\}$, define $\mut_{\thetav,k}^a(B):=\mu_{\thetav,k}\big(B\cap[-a,a]\big)$, $B\in\Bc(\Rb)$. Note that $\mut_{\thetav,k}^a\in\Ec'$ since $\supp(\mut_{\thetav,k}^a)\subseteq[-a,a]$. Furthermore,
$[J_{1,k}(\cdot,\thetav)\conv\mut_{\thetav,k}^a](s)=\int_\Rb J_{1,k}(s-t) \dd\mut_{\thetav,k}^a(t) = \int_{-a}^a J_{1,k}(s-t)\dd\mu_{\thetav,k}(t)$, $s\in\Rb$. Therefore, for $s\in\Rb$, we have that
\begin{align}
	\int_{-a}^a J_{1,k}(s-t)\dd\mu_{\thetav,k}(t) 
	= \F^{-1}\Big[\F\big[J_{1,k}(\cdot,\thetav)\conv\mut_{\thetav,k}^a\big]\Big](s)
	=\F^{-1}\Big[\F[J_{1,k}(\cdot,\thetav)]\F[\mut_{\thetav,k}^a]\Big](s),
\end{align}
in which the last step follows from the convolution theorem (i.e. Lemma \ref{lem:Conv}). Since $\F$ is an isomorphism of $\Sc'$ (by Lemma \ref{lem:Isomorphism_Tempered}), letting $a\to+\infty$ and using Condition $(iii)$ yield that
\begin{align}\label{eq:J1k_conv}
	\int_\Rb J_{1,k}(s-t)\dd\mu_{\thetav,k}(t) 
	= \F^{-1}\Big[\F[J_{1,k}(\cdot,\thetav)]\F[\mut_{\thetav,k}]\Big](s)
	=J_{0,k}(s);\quad s\in\Rb,
\end{align}
in which the last step follows from \eqref{eq:J1k}. Let  $J_1(t,\thetav):=\ee^{-\frac{1}{\gam_1}t}J_{1,1}(t,\thetav) + \ee^{-\frac{1}{\gam_2}t}J_{1,2}(t,\thetav)$, $t\in\Rb$, as in the statement of the proposition. For all $s\in\Rb$, we calculate
\begin{align}
	\int_\Rb J_1(s-t,\thetav)\dd\nut_\thetav(t)
	&= \int_\Rb \ee^{-\frac{1}{\gam_1}(s-t)}J_{1,1}(s-t,\thetav)\dd\nut_\thetav(t)
	+ \int_\Rb \ee^{-\frac{1}{\gam_2}(s-t)}J_{1,2}(s-t,\thetav)\dd\nut_\thetav(t)\\
	&= \ee^{-\frac{1}{\gam_1}s} \int_\Rb J_{1,1}(s-t,\thetav) \dd\mu_{\thetav,1}(t)
	+ \ee^{-\frac{1}{\gam_2}s} \int_\Rb J_{1,2}(s-t,\thetav) \dd\mu_{\thetav,2}(t)\\
	&= \ee^{-\frac{1}{\gam_1}s} J_{0,1}(s) + \ee^{-\frac{1}{\gam_2}s} J_{0,2}(s)
	= J_0(s),
\end{align}
in which the last three steps follow from \eqref{eq:mu_thetav}, \eqref{eq:J1k_conv}, and the definition of $J_{0,k}$. Since $\thetav$ was chosen arbitrarily, we conclude that $J_1$ solves the deconvolution problem \eqref{eq:Deconvolution}. That $J_1(\cdot,\thetav)\in\Jc(\gam_1,\gam_2)$ follows from
\begin{align}
	&\left(\ee^{\frac{1}{\gam_1}t}\1_{\{t<0\}}
	+\ee^{\frac{1}{\gam_2}t}\1_{\{t\ge0\}}\right)|J_1(t,\thetav)|
	\\&
	\le \1_{\{t<0\}}\left(|J_{1,1}(t,\thetav)| + \ee^{\left(\frac{1}{\gam_1}-\frac{1}{\gam_2}\right)t}|J_{1,2}(t,\thetav)|\right)
	+\1_{\{t\ge0\}}\left( \ee^{\left(\frac{1}{\gam_2}-\frac{1}{\gam_1}\right)t}|J_{1,1}(t,\thetav)| + |J_{1,2}(t,\thetav)|\right),
\end{align}
and that $J_{1,k}(\cdot,\thetav)$, $k\in\{1,2\}$, satisfies Condition $(ii)$.

To show the last statement of Proposition \ref{prop:ExistenceUniquness}, take an arbitrary $\thetav\in\Xi$. Assume that $\F[\mu_{\thetav,k}](\xi)\ne0$, $(k,\xi)\in\{1,2\}\times\Rb$, and that there exists $\Jt\in\Jc(\gam_1,\gam_2)$ satisfying $\int_\Rb\ee^{\frac{s-t}{\gam_k}}\Jt(s-t)\dd\mu_{\thetav,k}(t) = J_{0,k}(s)$, $(k,s)\in\{1,2\}\times\Rb$. We want to show that $\Jt=J_1(\cdot,\thetav)$ almost everywhere on $\Rb$. For $(k,t)\in\{1,2\}\times\Rb$, define $h_k(t) := J_{1,k}(t,\thetav)-\ee^{\frac{t}{\gam_k}}\Jt(t)$, $t\in\Rb$. Since $J_1(t,\thetav)-\Jt(t)=\ee^{-\frac{t}{\gam_1}}h_1(t)+\ee^{-\frac{t}{\gam_2}}h_2(t)$, $t\in\Rb$, it suffices to show that $\F[h_k]=0$, $k\in\{1,2\}$. By \eqref{eq:J1k_conv}, we have
\begin{align}
	\int_\Rb h_k(s-t)\dd\mu_{\thetav,k}(t)
	=\int_\Rb J_{1,k}(s-t,\thetav)\dd\mu_{\thetav,1}(t)
	-\int_\Rb\ee^{\frac{s-t}{\gam_k}}\Jt(s-t)\dd\mu_{\thetav,1}(t)=0,
\end{align}
for $(s,k)\in\Rb\times\{1,2\}$. Through a localization argument similar to the one used in the first part of the proof we obtain that $\F[h_k]\F[\mu_{\thetav,k}] = 0$. Since we have assumed that $\F[\mu_{\thetav,k}]\in\Cc^{\infty}$ and that $\F[\mu_{\thetav,k}](s)\ne0$, $s\in\Rb$, it follows that $\F[h_k]=0$, as we set out to prove.

%
%

\section{Proof of Theorem \ref{thm:CMIM}}\label{sec:CMIM_proof}
We first show that $I_1$ in \eqref{eq:IntEq_CMIM_sol} is well-defined and that $I_1(\cdot,\thetav)\in\CMIM(\gam_1,\gam_2)$, $\thetav\in\Xi$. For $\thetav\in\Xi$ and $\gam_1<\gam<\gam_2$, we have
\begin{align}
	\int_{0^+}^{1^-} \rho^{1-\frac{1}{\gam_2}}\dd\nu_\thetav(\rho)
	+\int_{1}^{+\infty} \rho^{1-\frac{1}{\gam_1}}\dd\nu_\thetav(\rho)
	\le	\int_{\Rb_+} \rho^{1-\frac{1}{\gam}}\dd\nu_\thetav(\rho)
	\le\int_{0^+}^{1^-} \rho^{1-\frac{1}{\gam_1}}\dd\nu_\thetav(\rho)
	+\int_1^{+\infty} \rho^{1-\frac{1}{\gam_2}}\dd\nu_\thetav(\rho).
\end{align}
From \eqref{eq:CMIM_nu}, we then obtain
\begin{align}\label{eq:CMIM_integ_sol}
	0<\eps_\thetav<\left(\int_{\Rb_+} \rho^{1-\frac{1}{\gam}}\dd\nu_\thetav(\rho)\right)^{-1}<M_\thetav<+\infty;\quad (\gam,\thetav)\in(\gam_1,\gam_2)\times\Xi),
\end{align}
in which
\begin{align}
	\eps_\thetav := \left(
	\int_{0^+}^{1^-} \rho^{1-\frac{1}{\gam_1}}\dd\nu_\thetav(\rho)
	+\int_1^{+\infty} \rho^{1-\frac{1}{\gam_2}}\dd\nu_\thetav(\rho)
	\right)^{-1},
	\intertext{and}
	M_\thetav :=\left(
	\int_{0^+}^{1^-} \rho^{1-\frac{1}{\gam_2}}\dd\nu_\thetav(\rho)
	+\int_{1}^{+\infty} \rho^{1-\frac{1}{\gam_1}}\dd\nu_\thetav(\rho)
	\right)^{-1}.
\end{align}
By \eqref{eq:CMIM_integ_sol}, the integral on the right side of \eqref{eq:IntEq_CMIM_sol} is convergent and, in particular,
\begin{align}
	\eps_\thetav I_0(y)\le I_1(y,\thetav)\le M_\thetav I_0(y);\quad (y,\thetav)\in\Rb_+\times\Xi.
\end{align}
Therefore, $I_1(\cdot,\thetav)\in\CMIM(\gam_1,\gam_2)$, $\thetav\in\Xi$.

Next, we show that $I_1$ is a solution of Problem \ref{prob:IntEq}. Since $\CMIM(\gam_1,\gam_2)\subset\Ic$ by Lemma \ref{lem:CMIM}, we have that $I_1(\cdot,\thetav)\in\Ic$, $\thetav\in\Xi$. That $\int_{\Rb_+} I_1(y\rho, \thetav)\dd \nu_\thetav(\rho)<+\infty$, $(y,\thetav)\in\Rb_+\times\Xi$, is shown as follows 
\begin{align}
	&\int_{\Rb_+} I_1(y\rho, \thetav)\dd \nu_\thetav(\rho)
	=
	\int_{\Rb_+} \left(
	\int_{\gam_1}^{\gam_2} (\rho y)^{-\frac{1}{\gam}}\left(
	\int_{\Rb_+} \rho^{1-\frac{1}{\gam}}\dd\nu_\thetav(\rho)
	\right)^{-1}\dd m_0(\gam)
	\right)\dd \nu_\thetav(\rho)\\
	&=\int_{\gam_1}^{\gam_2} y^{-\frac{1}{\gam}} \left(
	\frac{\int_{\Rb_+}\rho^{-\frac{1}{\gam}}\dd \nu_\thetav(\rho)}
	{\int_{\Rb_+} \rho^{1-\frac{1}{\gam}}\dd\nu_\thetav(\rho)}
	\right)\dd m_0(\gam)
	\le\int_{\gam_1}^{\gam_2} y^{-\frac{1}{\gam}} \left(
	\int_{0^+}^{1^-}\rho^{-\frac{1}{\gam_1}}\dd \nu_\thetav(\rho)+\nu_\thetav\big([1,+\infty)\big)
	\right)
	M_\thetav \dd m_0(\gam)\\
	&= \left(
	\int_{0^+}^{1^-}\rho^{-\frac{1}{\gam_1}}\dd \nu_\thetav(\rho)+\nu_\thetav\big([1,+\infty)\big)
	\right)
	M_\thetav I_0(y)<+\infty,
\end{align}
in which the third step follows from \eqref{eq:CMIM_integ_sol} and the last step uses \eqref{eq:CMIM_nu}.
Finally, $I_1$ satisfies \eqref{eq:IntEq} since
\begin{align}
	\int_{\Rb_+} \rho I_1(y\rho, \thetav)\dd \nu_\thetav(\rho)
	&=\int_{\Rb_+} \rho \left(
	\int_{\gam_1}^{\gam_2} (\rho y)^{-\frac{1}{\gam}}\left(
	\int_{\Rb_+} \rho^{1-\frac{1}{\gam}}\dd\nu_\thetav(\rho)
	\right)^{-1}\dd m_0(\gam)
	\right)\dd \nu_\thetav(\rho)\\
	&=\int_{\gam_1}^{\gam_2} y^{-\frac{1}{\gam}} \left(
	\frac{\int_{\Rb_+}\rho^{1-\frac{1}{\gam}}\dd \nu_\thetav(\rho)}
	{\int_{\Rb_+} \rho^{1-\frac{1}{\gam}}\dd\nu_\thetav(\rho)}
	\right)\dd m_0(\gam)
	=I_0(y),
\end{align}
for all $(y,\thetav)\in\Rb_+\times\Xi$.	We have shown that $I_1$ is a solution of Problem \ref{prob:IntEq}.

Finally, we show that $I_1$ is the only solution such that $I_1(\cdot,\thetav)\in\CMIM(\gam_1,\gam_2)$, $\thetav\in\Xi$. Let \begin{align}\label{eq:It_CMIM}
	\It(y,\thetav):=\int_{\gam_1}^{\gam_2}y^{-\frac{1}{\gam}}\dd\mt_\thetav(\gam);\quad (y,\thetav)\in\Rb_+\times\Xi,
\end{align}
be a solution of Problem \ref{prob:IntEq}, in which $\mt_\thetav$, $\thetav\in\Xi$, are finite Borel measures with $\supp(\mt_\thetav)\subset(\gam_1,\gam_2)$. Take an arbitrary $\thetav\in\Xi$. Define the Borel measure $\mt_{0,\thetav}$ by
\begin{align}\label{eq:mu_0tht}
	\mt_{0,\thetav}(B) := \int_B \left(\int_{\Rb_+}\rho^{1-\frac{1}{\gam}}\dd \nu_\thetav(\rho)\right) \dd \mt_\thetav(\gam);\quad B\in\Bc(\Rb).
\end{align}
By \eqref{eq:CMIM_integ_sol}, $\mt_\thetav$ and $\mt_{0,\thetav}$ are equivalent, $\supp(\mt_{0,\thetav})=\supp(\mt)\subset(\gam_1,\gam_2)$, and we have that
\begin{align}\label{eq:mu_tht}
	\mt_\thetav(B) = \int_B \left(\int_{\Rb_+}\rho^{1-\frac{1}{\gam}}\dd \nu_\thetav(\rho)\right)^{-1} \dd \mt_{0,\thetav}(\gam);\quad B\in\Bc(\Rb).
\end{align}
From \eqref{eq:mu_0tht} and since $\It$ solves \eqref{eq:IntEq}, it follows that
\begin{align}
	\int_{\Rb_+} y^{-\frac{1}{\gam}} \dd \mt_{0,\thetav}(\gam)
	=\int_{\gam_1}^{\gam_2} y^{-\frac{1}{\gam}} \left(\int_{\Rb_+}\rho^{1-\frac{1}{\gam}}\dd \nu_\thetav(\rho)\right) \dd \mt_\thetav(\gam)
	=I_0(y)
	= \int_{\Rb_+} y^{-\frac{1}{\gam}} \dd m_0(\gam),\quad y>0.
\end{align}
The above equation implies that $\mt_{0,\thetav}$ and $m_0$ have the same Laplace–Stieltjes transform and, therefore, $\mt_{0,\thetav}=m_0$. From \eqref{eq:IntEq_CMIM_sol}, \eqref{eq:It_CMIM}, and \eqref{eq:mu_tht}, we obtain that $\It(\cdot,\thetav)=I_1(\cdot,\thetav)$ for all $\thetav\in\Xi$.

\end{document}